\documentclass{article}
\usepackage{prem}

\title{Pivot based correlation clustering in the presence of good clusters}
\author{ 
{David Rasmussen Lolck\thanks{University of Copenhagen, Email: \texttt{dalo@di.ku.dk}. Supported by VILLUM Foundation Grant 54451, Basic Algorithms Research Copenhagen (BARC)}} 
\and
{Mikkel Thorup\thanks{University of Copenhagen, Email: \texttt{mthorup@di.ku.dk}. Supported by VILLUM Foundation Grant 54451, Basic Algorithms Research Copenhagen (BARC)}}
\and
{Shuyi Yan\thanks{University of Copenhagen, Email: \texttt{shya@di.ku.dk}. Supported by VILLUM Foundation Grant 54451, Basic Algorithms Research Copenhagen (BARC).}}
}
\date{\today}

\begin{document}

\maketitle
\begin{abstract}
    The classic pivot based clustering algorithm of Ailon, Charikar and Chawla [JACM'08] is factor 3, but all concrete examples showing that it is no better than 3 are based on some very good clusters, e.g., a complete graph minus a matching. By removing all good clusters before we make each pivot step, we show that this improves the approximation ratio to $2.9991$. To aid in this, we also show how our proposed algorithm performs on synthetic datasets, where the algorithm performs remarkably well, and shows improvements over both the algorithm for locating good clusters and the classic pivot algorithm.
\end{abstract}

\section{Introduction}
Correlation clustering is a fundamental clustering problem. The problem is modelled on an unweighted, undirected graph $G=(V,E)$, where the vertices correspond to the objects that are being clustered while the edges are pairwise similarity statements. In this setting, a clustering is then a partitioning of the graph, and how good a clustering is, is given by the number of pairs of vertices sharing an edge that was put in different clusters plus the number of pairs of vertices that was put in the same cluster without sharing an edge.\footnote{In some literature on correlation clustering it is common to model this problem as a complete graph, where each edge is labelled with either $+$ or $-$. The two versions are equivalent, though since we are dealing with near-linear running times in the number of edges, the first formulation is in our case more advantageous as $+$ and $-$ edges start to lose the symmetry in their meaning to the algorithm.}

\subsection{Prior work}

The problem of correlation clustering was first introduced by Bansal, Bloom and Chawla \cite{BBC04}. They gave a constant approximation to this problem. This was since improved by Ailon, Charikar and Newman \cite{ACN08} who presented a $3$-approximate combinatorial linear time algorithm, called the pivot algorithm and a $2.5$-approximate LP-based algorithm.

The pivot algorithm in particular has had a lot of influence on the development of algorithms for correlation clustering due to its simplicity. The algorithm works by randomly selecting a vertex, called the pivot, and then clustering it and all of its neighbours together. Then repeat on the remaining graph.

Following the work of \cite{ACN08}, Chawla, Makarychev, Schramm, and Yaroslavtsev gave a near-optimal rounding for the standard LP relaxation achieving a $2.06$ approximation \cite{CMSY15}. More recently, Cohen-Addad, Lee, and Newman broke the integrality barrier of $2$ for the standard LP by leveraging a constant number of rounds of the Sherali--Adams hierarchy, obtaining an $(1.994+\varepsilon)$-approximation \cite{CLN22}. Very recently, Cohen-Addad, Lee, Li, and Newman further improved the approximation guarantee to $1.73$ via new rounding ideas based on preclustering and handling correlated rounding error \cite{CLLN23+}. After this work, a new LP framework was introduced by Cao, Cohen-Addad, Lee, Li, Newman, and Vogl, called the \emph{cluster LP}, which lead to an approximation ratio of $1.485+\eps$ \cite{cao2024understanding}, which is the currently best known approximation ratio.

In addition to the work on the approximation ratio, there have also been a lot of work on the running time of the clustering algorithms. A lot of the previous works that improved the approximation ratio has been focused on LP-based approaches, where running times are only really considered in-so-far they are polynomial. That said, there have been significant progress in this direction. Initially, the original pivot algorithm \cite{ACN08} trivially had a running time of $O(m)$ with a $3$-approximation. A different algorithm was subsequently presented by Assadi and Wang that achieved a constant approximation ratio in sublinear time $O(n \log^2 n)$. Building on this algorithm as a preprocessing step, an $1.847+\eps$ approximate solution in time $O(n\log^2 n)$ was presented by Cohen-Addad, Lolck, Pilipczuk, Thorup, Yan and Zhang \cite{CLMTYZ24}. Finally, the authors of \cite{cao2025fastLP} showed that it is possible to match the best known approximation ratio of $1.485+\eps$ even in the sublinear time setting. 

The problem however with the more recent results \cite{CLMTYZ24,cao2025fastLP} is that while from a theoretical perspective, they achieve these results under some very high constant terms and exponential dependencies on $\eps$, making them unsuitable for practical applications. So from this perspective, from the results so far presented, the pivot algorithm is the most practical algorithm so far presented in this paper from the perspective of theoretical guarantees.

\subsection{Technical Overview}
In this work, we will be looking at presenting a new algorithm for correlation clustering, which both has a nice approximation ratio in theory while also performing well in practice. The way we will do this is by showing an algorithm which can be seen as a combination of the pivot algorithm by \cite{ACN08} which is a $3$ approximation, and the atom-based clustering algorithm by \cite{CLMNP21}, which is only a constant approximation. %

The goal is to use the clusters that the atom-based clustering algorithm would find, but further process them such that we also include some vertices that belong to these clusters.  Whenever we are unable to find any atoms, we will instead run a step of the pivot algorithm to proceed with the clustering.

We will show that by combining these two algorithms, it is possible to get an algorithm that is better than each of them individually.
\begin{theorem}\label{thm:alg-approx-ratio}
    \cref{alg:atom-pivot} is a $2.9991$ approximation in time $O(m\log n)$.
\end{theorem}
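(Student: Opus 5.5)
The plan is to charge the algorithm's cost in two separate parts. Cost created while clustering extracted atoms (good clusters) is charged against what the optimum pays on those atoms. Cost created during pivot steps is charged using the triangle-based analysis of Ailon, Charikar and Newman, sharpened by the fact that no good cluster survives at the time of a pivot step.

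Concretely, I will use the standard ACN framework. For each \emph{bad triangle} $\{u,v,w\}$ (exactly two edges present), let $A_t$ be the event that one of its vertices becomes a pivot while all three are still unclustered. Then the expected pivot cost is $\sum_t \Pr[A_t]$. The fractional packing argument gives $\sum_t \Pr[A_t]/3 \le \mathrm{OPT}$: every pair $uv$ is charged at most $1$ in total over the triangles containing it, and every bad triangle forces $\mathrm{OPT}$ to pay at least one of its pairs. This is exactly what yields $3$.

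To improve the constant, I will show that whenever the pivot step runs, a constant fraction of the charge is slack. The key structural lemma to prove is the following. If the remaining graph contains no atom (no set that is $\epsilon$-close to a clique and has few outside edges, in the sense of the atom definition of \cite{CLMNP21}), then the situation that makes ACN tight cannot occur. That tight situation is a pair $uv$ whose total charge $\sum_{w}\Pr[A_{uvw}\mid\text{$uv$ alive}]\cdot\tfrac13$ is close to $1$ and which is simultaneously paid by $\mathrm{OPT}$ in most of its triangles. I would split pairs according to whether $\mathrm{OPT}$ cuts them or joins them. In the tight examples (a clique minus a matching), tightness requires $u$ and $v$ to have nearly identical neighbourhoods inside a dense near-clique. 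Such a near-clique is precisely an atom, and it would already have been removed. When no atom is present, I will instead show one of two things: either the pivot lands on a vertex whose neighbourhood is far from a clique, in which case $\mathrm{OPT}$ pays a constant more per triangle; or the charging weights can be reduced to $(1-\delta)/3$ for a small explicit $\delta$. This improves the ratio to $3-\delta'$ for some $\delta'\approx 9\cdot 10^{-4}$.

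The atom part is handled with the atom lemmas from \cite{CLMNP21}. Any atom of parameter $\epsilon$ is contained in a single cluster of some near-optimal solution, and the cost of cutting it off together with the added vertices is at most $(1+O(\epsilon))$ times the $\mathrm{OPT}$ cost incident to it. Combining the two parts by linearity, with constraints that are charged disjointly, gives the ratio. Choosing $\epsilon$ and $\delta$ to balance the two bounds should give $2.9991$; this last balancing is essentially a numerical optimisation.

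For the running time, atoms can be detected by local neighbourhood comparisons after sorting or hashing. Maintaining degrees and candidate atoms under vertex deletions with priority structures costs $O(\log n)$ per edge update, which gives $O(m\log n)$ in total.

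The main obstacle is the structural lemma: converting ``no atom exists'' into a quantitative reduction of the ACN charge. I would first try a local version. Fix the pivot $w$ and a bad pair $uv$, and argue that if many triangles with $uv$ are charged, then $N(u)\cap N(v)$ is large and dense, so that either it forms an atom or $\mathrm{OPT}$ pays extra on the internal non-edges.
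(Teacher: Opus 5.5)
There is a genuine gap, and it is the step you yourself flag as the main obstacle. Nothing in the proposal turns ``no atom'' into a quantitative improvement over $3$, and the sketch points the wrong way. Reducing the charging weights to $(1-\delta)/3$ would, if anything, weaken the ACN bound: ACN's ratio is the reciprocal of the multiplier $c$ for which the weights $c\Pr[A_T]$ form a feasible packing. Moreover, the $1/3$ is forced by the symmetry of which vertex of a bad triangle becomes the pivot.

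In the paper the slack is found entirely on the optimum's side. The hypothesis is applied to every cluster $C$ of an optimal clustering $\mathcal{C}^*$ of the \emph{current} graph.
\begin{itemize}
\item The analysis works per step, with $\opt_t$ the decrease of the current optimum (\cref{lem:expected-step}).
\item $\E[\opt_t]$ is lower-bounded by the amortised weights $g$ of \cref{tab:costs-and-amortisation}. These credit the optimum for triangles in which it violates two or three pairs, which are exactly the triangles the ACN packing wastes. They include the option of splitting the remnant $C\setminus N(u)$ into singletons after pivoting on $u$ (\cref{lem:g-le-opt}).
\item These triangles contribute $\sum_{u\in C}\bigl(6|C\setminus N(u)|^2+4|C\setminus N(u)||N(u)\setminus C|+2|N(u)\setminus C|^2\bigr)$, which is quadratic in the per-vertex errors.
\item The bad triangles, which carry the whole pivot cost $3a^{++-}$, are bounded by a sum over $u\in C$ of $O\bigl(|C|\,|N(u)\triangle C|+|N(u)\triangle C|^2\bigr)$ (\cref{lem:r1r2r3-bound}).
\item Cauchy--Schwarz inside each $C$, together with $\sum_{u\in C}|N(u)\triangle C|>\eps^2|C|^2$, makes the quadratic part a constant fraction of the linear part. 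This gives the factor $3-\eps^2/(1+\frac{5}{6}\eps^2)$ of \cref{thm:pivot-without-good-clusters}.
\end{itemize}
Without an explicit bound of this kind there is nothing to balance, and your $\delta'\approx 9\cdot10^{-4}$ is read off from the target rather than derived.

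The other components are not established either.

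(i) Combining a global ACN packing for pivot steps with a separate charge for atom steps needs one fixed reference clustering. The atom facts you use are that an $\eps'$-good $K$ lies in one optimal cluster $C'$ and that $|C'\setminus K|\le 2\eps'|K|$. These rely on optimality in the current graph and fail for the initial optimum restricted to the surviving vertices. ``Some near-optimal solution'' chosen per atom does not combine across steps. The paper avoids this by proving $\E[\alg_t]\le\alpha\,\E[\opt_t]$ at every step.

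(ii) A $(1+O(\eps))$ bound for cutting off an atom is not a lemma of \cite{CLMNP21}, whose atom-based algorithm is only $O(1)$-approximate. Deciding deterministically from $(\alpha_v,\beta_v)$ also fails. Take $\alpha_v=2$ and $\beta_v=\frac{3}{4}$:
\begin{itemize}
\item If $v$'s outside neighbours are scattered, the optimum keeps $v$ in $C'$, and excluding $v$ costs three times the gain.
\item If they form a clique, the optimum puts $v$ there, and including $v$ costs three times the gain.
\end{itemize}
The ratio is worse for larger $\alpha_v$. The paper instead expands $K$ greedily and then includes each outside vertex independently with probability $p_v=\beta_v/(1+\alpha_v)$. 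This gives $2+\frac{7\eps'(1+2\eps')}{2(1-2\eps')^2}$ (\cref{thm:pivot-with-good-cluster}), which suffices because only a ratio below $3$ is needed.

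(iii) The running-time claim is unsupported. Exact neighbourhood comparisons already cost $\sum_v d(v)^2$ and must be redone under deletions. The paper instead:
\begin{itemize}
\item runs an $O(\log d(v))$ sampling test before each $O(d(v)^2)$ cleaning step;
\item schedules about $\frac{\log n}{\delta d(v)\log d(v)}$ retries per lost neighbour;
\item shows (\cref{lem:hit-good-cluster}) that an $\eps^2$-good-on-average cluster forces some vertex to be hit $\lceil\delta d(v)\rceil$ times while staying detectable.
\end{itemize}
Hence such clusters are missed only with probability $n^{-\Omega(1)}$.

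Finally, detection only gives a gap guarantee. Reported clusters are $\eps'$-good with $\eps'\approx 4\eps$, while a ``no'' answer rules out only $\eps^2$-good-on-average clusters. The constant $2.9991$ comes from balancing the two bounds above under exactly this mismatch, which your single atom parameter hides.
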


The reason why such a result is achievable is that the pivot algorithm and the atom-based algorithms have very different worst-case inputs. The pivot algorithm in general performs poorly on input where the input closely resembles disjoint cliques, with a couple of edges either added or removed sparsely. This is exactly the setting where the atom-based algorithm is able to locate atoms.

\begin{algorithm}[H]
\caption{Merge of pivot and atom-finding}\label{alg:atom-pivot}
    \begin{algorithmic}[1]
        \Let{$Q$}{$\emptyset$}\Comment{$Q$ is a multiset}
        \For{$v\in V$}
            \State Add $q(d(v))$ copies of $v$ to $Q$
        \EndFor
        \While{$G$ is not empty}
            \If{$\exists v \in Q$}
                \State Remove $v$ from $Q$.
                \State Try to find a good cluster $K$ near $v$ (\cref{sec:locate-good-cluster})
                \If{a good cluster $K$ is found}
                    \State Construct a cluster $C$ based on $K$ (\cref{sec:pivot-with-good-cluster}), and remove $C$ from $G$
                \EndIf
            \Else
                \State Construct a cluster $C$ by a normal pivot step (\cref{sec:pivot-without-good-cluster}), and remove $C$ from $G$
            \EndIf
            \If{a cluster $C$ was removed}
                \For{$e = (u,v)\in E$ with $u\in C, v\not\in C$}
                    \State Add $p(d(v))$ copies of $v$ to $Q$
                \EndFor
            \EndIf
        \EndWhile
    \end{algorithmic}
\end{algorithm}

The analysis we will follow will be based on the fact that after each operation we do, we can bound the ratio between the cluster we remove and the decrease of cost in the optimal solution after removing the cluster in expectation. This is the standard method that is typically used to analyse a lot of clustering algorithms.

For this result, we need two notions of goodness for clusters, generally based upon how close they are to cliques. Informally, we say a cluster $C$ is $\eps$-\textit{good} if every vertex in the cluster is incident to at most $\eps|C|$ errors. Similarly to this, we call a cluster $\eps$-\textit{good-on-average} if the whole cluster is incident to at most an $\eps|C|^2$ errors. 

To get the main result there are three main technical ingredients. In \cref{sec:pivot-without-good-cluster}, we show that if you run a single step of the pivot algorithm on a graph, where you are given the guarantee that no $\eps$-good-on-average clusters on average exists, then this step will in expectation be strictly better than $3$ approximate:

\begin{restatable*}{theorem}{pivotnogood}
    \label{thm:pivot-without-good-clusters}
    If the optimal clustering $\mathcal{C}^*$ contains no $\eps^2$-good-on-average clusters then for step $t$ if we create a cluster $C$ from a pivot step then,
    $$\E[\alg_t] \le \left(3-\frac{\eps^2}{\frac{5}{6}\eps^2 + 1}\right)\E[\opt_t]$$
    and this can be done in time $O\left(\sum_{v\in C}d(v)\right)$.
\end{restatable*}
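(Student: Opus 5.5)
We follow the triangle-charging analysis of \cite{ACN08}, applied to a single pivot step, and sharpen it using the absence of good-on-average clusters. Let the pivot $w$ be uniform over the current vertex set $V_t$, and let $C=\{w\}\cup N(w)$. $\alg_t$ counts the errors of pairs with at least one endpoint in $C$. $\opt_t$ counts the errors of $\mathcal{C}^*$ on those same pairs, i.e.\ the drop in the optimum's cost when $C$ is removed.

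\textbf{Bad triangles.} Every error the step incurs lies in a bad triangle $\{w,u,v\}$: two pairs agree with the input and one disagrees. Either $u,v\in N(w)$ and $uv\notin E$, or $u\in N(w)$, $v\notin N(w)$ and $uv\in E$. So
\[
\E[\alg_t]=\frac{1}{|V_t|}\sum_{T\text{ bad}} 1,
\]
up to the convention on which pair is charged. Each bad triangle contains at least one error of $\mathcal{C}^*$. The charging of \cite{ACN08} (or its LP-free version via $\mathcal{C}^*$) gives $\E[\alg_t]\le 3\E[\opt_t]$, and the factor $3$ is tight only when essentially every $\mathcal{C}^*$-error removed in the step lies in bad triangles with \emph{only one} error. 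In the expectation, each error pair $uv$ is then charged by about three pivot choices per triangle.

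\textbf{Gain term.} We measure the slack. For a $\mathcal{C}^*$-error $uv$, let $b(uv)$ be the number of bad triangles $\{u,v,w\}$ in which $uv$ is the unique $\mathcal{C}^*$-error. Triangles with two or three $\mathcal{C}^*$-errors split their charge and give slack. So does every pivot $w$ for which $\{u,v,w\}$ is not a bad triangle at all but $uv$ still becomes decided, e.g.\ $w\in C^*(u)$ with $uw,vw\in E$ while $u,v$ lie in different optimal clusters. The key is to lower bound this slack per optimal cluster.

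Fix an optimal cluster $K\in\mathcal{C}^*$. Since $K$ is not $\eps^2$-good-on-average, the number of $\mathcal{C}^*$-errors incident to $K$ is more than $\eps^2|K|^2$. When the pivot lands in $K$, a constant fraction of the pairs inside $K$ that are correctly joined become decided without being charged. Averaging, either $K$ is "dense with errors" or it has many non-error pairs that give "free" correct decisions. Writing $x$ for the error density of $K$ relative to $|K|^2$, we can bound the charged error mass per unit of removed $\opt$ by something like
\[
3-\frac{x}{\frac56 x+1},
\]
which is monotone increasing in $x$. Plugging in $x\ge\eps^2$ gives the stated bound. The constant $5/6$ should come from optimising the worst-case configuration of errors inside $K$, namely a mix of internal non-edges and external edges. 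Summing over $K$ and using linearity of expectation finishes the inequality.

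\textbf{Running time and the main obstacle.} The running time $O(\sum_{v\in C}d(v))$ is immediate: forming $C$ and deleting its edges touches only the adjacency lists of $C$.

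The main obstacle is the local accounting step. We must convert the global condition, that $K$ has at least $\eps^2|K|^2$ incident errors, into a per-step expected gain that is uniform in the pivot distribution, and make sure that triangles crossing several optimal clusters are not double-counted. I would first write out the charge of each error pair over pivots $w\in K$ versus $w\notin K$. I would then do a case analysis on whether $uv$ is internal to $K$ or a cut edge, and solve the resulting small linear program for the worst ratio.
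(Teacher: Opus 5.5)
There is a genuine gap, and it lies in what you credit to $\opt_t$. You identify the drop of the optimum with the number of $\mathcal{C}^*$-errors on the pairs decided in the step. Every source of slack you list lives inside that accounting: bad triangles with several $\mathcal{C}^*$-errors, and non-bad triangles that decide an error pair. The ``free correct decisions'' of correctly joined pairs in $K$ count in neither $\alg_t$ nor $\opt_t$, so they give no slack.

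With this accounting the inequality is false. Take a clique $Q$ on $k$ vertices completely joined to an independent set $I$ of $s=k/2$ vertices. The vertices of $Q$ are adjacent twins, so every optimal clustering keeps $Q$ together. Hence $\mathcal{C}^*=\{Q\cup I\}$ is the optimum for $s\le k$. Yet $\sum_u|N(u)\triangle(Q\cup I)|=s(s-1)\approx|Q\cup I|^2/9$, so for any fixed $\eps^2<1/9$ the hypothesis holds once $k$ is large.

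A pivot in $Q$ costs $\binom{s}{2}$, and a pivot $u\in I$ costs $(s-1)k$, so $\E[\alg_t]=3k\binom{s}{2}/(k+s)$. The decided $\mathcal{C}^*$-errors have expectation only $(k+2)\binom{s}{2}/(k+s)$, because a pivot $u\in I$ decides just the $s-1$ non-edges at $u$. The ratio $3k/(k+2)$ tends to $3$. The true drop is $\binom{s}{2}$ for every pivot, giving ratio $2$, because after a pivot $u\in I$ the new optimum splits $I\setminus\{u\}$ into singletons.

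That re-optimisation is the missing idea, and it is the heart of the paper's proof. The amortised credit $g$ of Lemma~\ref{lem:g-le-opt} gives a $(---)$ triangle inside an optimal cluster credit $3$, and lowers $(+--)$ from $2$ to $1$. This is justified by splitting $C^*\setminus N(u)$ into singletons when the pivot $u$ lies in $C^*$.

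With $g$, the slack is at least $\sum_{u\in C^*}6x_u^2+4x_uy_u+2y_u^2$, where $x_u=|C^*\setminus N(u)|$ and $y_u=|N(u)\setminus C^*|$. The $x_u^2$ term needs the $(---)$ credit. This slack is compared with the cost-carrying triangles, whose count $a^{++-}$ is at most $\sum_{u\in C^*}3y_u^2+6|C^*|y_u+3|C^*|x_u+3x_uy_u$, i.e.\ only linear in $|C^*|$ times the errors. Cauchy--Schwarz then turns $\sum_u(x_u+y_u)^2$ into $\frac{1}{|C^*|}\bigl(\sum_u|N(u)\triangle C^*|\bigr)^2$. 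That is where ``not $\eps^2$-good-on-average'' enters and where $5/6$ comes from.

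Your proposal contains none of these estimates; $3-\frac{x}{\frac56x+1}$ is the target restated, not derived. There are also two smaller slips:
\begin{itemize}
\item That expression is decreasing in $x$, not increasing, which is why $x\ge\eps^2$ helps.
\item Not being $\eps^2$-good-on-average gives more than $\eps^2|K|^2/2$ incident errors, not $\eps^2|K|^2$, since internal non-edges are counted twice.
\end{itemize}
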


This immediately motivates the next results. To use this result, we need to guarantee that at any time when we want to perform a pivot step, no $\eps$-good-on-average clusters are present in the graph. First we show that we can locate each $\eps$-good-on-average cluster, for sufficiently small $\eps$, with high probability under edge and vertex deletions in the graph, in total time $O(m\log n)$.

\begin{restatable*}{theorem}{locatinggoodclusters}
\label{thm:locating-good-clusters}
    In expected $O(m\log n)$ total time under edge and vertex deletions, we can report good clusters such that:
    \begin{itemize}
        \item The reported clusters are always $\frac{4\eps+3\delta+2\delta^2+\eps\delta}{1-3\eps-2\gamma(1-\eps)(1+\delta)}$-good;
        \item $\eps^2$-good-on-average clusters are never missed with high probability.
    \end{itemize}
\end{restatable*}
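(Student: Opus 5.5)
The plan is to adapt the atom-finding procedure of \cite{CLMNP21} to a sampling-based, dynamic setting that is driven by the queue $Q$ in \cref{alg:atom-pivot}. Given a sampled vertex $v$, the candidate cluster is built from neighbourhood agreement. For each neighbour $u$ of $v$ we estimate $|N(u)\triangle N(v)|$ by sampling $O(\log n)$ random neighbours of each endpoint. We then keep the $u$ whose estimated symmetric difference is at most roughly $(\eps+\delta)d(v)$; call this set $K'$. Next we prune $K'$ to $K$ by removing every vertex whose estimated degree inside $K'$ is below $(1-\gamma)|K'|$. Finally, $K$ is reported if it still has size at least $(1-O(\eps+\delta))d(v)$. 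Each query costs $O(d(v)\log n)$. The functions $q(d(v))$ and $p(d(v))$ are chosen so that every vertex is queried $O(\log n)$ times at the start, and again each time $\Omega(\eps)$ of its degree is lost to removed clusters. A charging argument then gives expected $O(m\log n)$ total time: each edge deletion is charged $O(\log n)$ re-queries, amortised over the degree of its endpoint.

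\textbf{Soundness.} We show that every reported $K$ is good with the stated parameter. Condition on all sampling estimates being within additive $\delta d(v)$, which holds w.h.p.\ by Chernoff bounds. Every $u\in K$ has at least $(1-\gamma)|K'|$ neighbours in $K'$, and only a small fraction of $K'$ was pruned. Together these bound the number of non-neighbours of $u$ inside $K$. For the neighbours of $u$ outside $K$, we use $|N(u)\triangle N(v)|\le(\eps+\delta)(1+\delta)d(v)$ and $|K|\ge(1-3\eps-2\gamma(1-\eps)(1+\delta))d(v)$. Dividing the total error count by this lower bound on $|K|$ produces the fraction $\frac{4\eps+3\delta+2\delta^2+\eps\delta}{1-3\eps-2\gamma(1-\eps)(1+\delta)}$. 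This step is routine but requires careful bookkeeping of constants.

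\textbf{Completeness.} This is the main obstacle: an $\eps^2$-good-on-average cluster $C$ is never missed. By Markov's inequality, at least half (or a $1-\eps$ fraction) of the vertices of $C$ are incident to at most $\eps|C|$ errors; call these the core vertices. For a core vertex $v$, the set $K'$ contains all other core vertices and excludes all vertices with large disagreement, so the pruning keeps essentially the core and $K$ is reported. It remains to ensure that some core vertex is queried while $C$ is still (approximately) present. Initially every vertex has $q(d(v))=\Theta(\log n)$ copies in $Q$, and $C$ only stops being good-on-average once $\Omega(\eps|C|^2)$ of its incident edges or vertices have been deleted. Deleting those edges triggers $p(\cdot)$ re-insertions of the affected vertices. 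So at every moment at which $C$ is $\eps^2$-good-on-average, the queue holds $\Omega(\log n)$ fresh copies of core vertices, and the probability that all of their queries fail is $n^{-\Omega(1)}$. A union bound over the polynomially many query events gives the high-probability guarantee.

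The delicate point is making this timing argument precise. A cluster may become $\eps^2$-good-on-average only after some deletions, namely when other vertices are removed from around it. Such a cluster is exactly one whose boundary edges were cut by a removed cluster. Those cuts add $p(d(v))$ copies of the affected vertices, and I would set $p$ so that these copies suffice to re-test the new core vertices. I would first try $p(d)=\Theta(\log n/(\eps d))$, so that losing an $\eps$ fraction of the degree refreshes $\Theta(\log n)$ queries.
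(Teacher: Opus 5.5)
\textbf{Completeness is where the proposal breaks.} Your invariant, that whenever $C$ is $\eps^2$-good-on-average the queue holds $\Omega(\log n)$ fresh copies of core vertices, is false. It is also not the right invariant, because the algorithm declares that no good cluster exists exactly when $Q$ is empty. What you must show instead is that many copies of one fixed vertex were both inserted into and popped from $Q$ during a time window in which that vertex's test was guaranteed to succeed.

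Your core-vertex analysis only certifies success in the \emph{current} graph. A cluster that becomes good through deletions has the copies triggered by those deletions popped while the graph was still different, and there the test may legitimately fail. With $p(d)=\Theta(\log n/(\eps d))$, the last few deletions add only $O(1)$ copies each, so you cannot just ``re-test the new core vertices''.

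The missing idea is the window argument of \cref{lem:hit-good-cluster}:
\begin{itemize}
\item Pretend every vertex lost $\Theta(d(v))$ phantom neighbours in round $0$.
\item For each core vertex $u$, let $t(u)$ be the moment its degree dropped below $(1+\delta)$ times its current degree, and take the core vertex $v$ with the \emph{latest} $t(v)$.
\item After $t(v)$, every core neighbourhood exceeds its current one by fewer than $\delta d$ vertices. Hence $v$ passes a test with thresholds relaxed by $\delta$ at every moment of the window; this is why $\alpha,\beta$ contain $\delta$ and $\frac{2\eps}{1-\eps}$.
\item $v$ lost $\lceil\delta d(v)\rceil$ neighbours inside the window. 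Since $Q$ is now empty, the at least $\frac{2\ln n}{\ln d(v)}$ copies these losses generated were all popped inside the window.
\item Each such probe fails with probability at most $1/d^2(v)$, giving $n^{-4}$ overall. The potential $f(v)$ makes this rigorous against the adaptive adversary.
\end{itemize}

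\textbf{The other two bullets also do not follow as written.}

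\emph{Running time.} You use $\Theta(\log n)$ queries of cost $\Theta(d(v)\log n)$ each, both initially and per $\eps$-fraction of lost degree. That is $\Theta(\log^2 n/\eps)$ per deleted edge, so $\Theta(m\log^2 n)$ in total, not $O(m\log n)$. The paper avoids this in two ways. Each probe (\textsc{Check}) costs only $O(\log d(v))$ with error $1/d^2(v)$, so $\Theta(\log n/\log d(v))$ probes per $\delta d(v)$ deletions suffice. The exact $O(d^2(v))$-time \textsc{Clean} is run only after a positive probe. Its cost is charged to the edges removed with the reported cluster, or, when it returns $\emptyset$, to the $1/d^2(v)$ false-positive probability of \cref{lem:check-false}.

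\emph{Soundness.} The theorem says reported clusters are \emph{always} good. Your membership and pruning decisions rest on estimates, so you only get this w.h.p. The paper reports the exactly computed \textsc{Clean} set of $N(v)$, which is deterministically $\frac{\alpha+\beta}{1-\beta}$-good. The stated fraction is precisely this quantity for $\alpha=\frac{2\eps+2\delta}{(1-\eps)(1-2\gamma)}$ and $\beta=\frac{2\eps/(1-\eps)+\delta}{(1+\delta)(1-2\gamma)}$, where $\gamma$ is the sampling slack of \textsc{Check}, not a pruning threshold. Your constant is asserted, not derived.

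Moreover, with your threshold $(\eps+\delta)d(v)$, the step ``$K'$ contains all other core vertices'' is unsupported. The only available bound is $|N(u)\triangle N(v)|\le 2\eps|C|\le\frac{2\eps}{1-\eps}d(v)$, which exceeds $(\eps+\delta)d(v)$ whenever $\delta<\eps$.
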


The final result is then on how to handle these atoms. It is not feasible to just remove them, since this has the issue that it can introduce a lot of cost to the clustering. We however show that if you for every vertex that has sufficiently many neighbours inside the cluster compared to outside the cluster make a choice at random based on this ratio, then in expectation the removed cluster will be $2+O(\eps)$-approximate, specifically:
\begin{restatable*}{theorem}{atomrounding}
\label{thm:pivot-with-good-cluster}
    Given an $\eps'$-good cluster $K$ where $\eps'<\frac16$, we can construct a cluster $C$, such that when removing $C$ from the graph at step $t$ we have
    $$\E[\alg_t] \le \left(2+\frac{7\eps'(1+2\eps')}{2(1-2\eps')^2}\right) \E[\opt_t]$$
    and this can be done in time $O\left(\sum_{v\in C}d(v)\right)$.
\end{restatable*}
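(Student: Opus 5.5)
The plan is to build $C$ from $K$ by a randomized rounding of the vertices outside $K$, and then charge the algorithm's cost against the change in the optimum, pair by pair.

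\textbf{Construction.} For each vertex $u\notin K$, let $a_u=|N(u)\cap K|/|K|$ be the fraction of $K$ adjacent to $u$. Vertices with $a_u$ below a threshold (say $a_u<\tfrac12$) are never added. Every other $u$ is added to $C$ independently with probability $f(a_u)$, where $f$ is increasing and close to linear in $a_u$; I would start with $f(a)=a$ or a shifted version of it. All of $K$ is placed in $C$. The running time is $O(\sum_{v\in C}d(v))$, because computing the $a_u$ only requires scanning the edges incident to $K\subseteq C$.

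\textbf{Accounting.} Let $\alg_t$ be the number of disagreements on pairs with at least one endpoint in $C$. Let $\opt_t$ be the decrease in the cost of a fixed optimal clustering $\mathcal{C}^*$ (restricted to the remaining graph) when $C$ is removed, that is, the $\mathcal{C}^*$-disagreements on pairs touching $C$. The cost splits into three parts.
\begin{enumerate}
\item Pairs inside $K$: at most $\eps'|K|^2$ errors, since $K$ is $\eps'$-good.
\item Pairs between $K$ and an outside vertex $u$: the expected cost is $f(a_u)(1-a_u)|K|+(1-f(a_u))a_u|K|$.
\item Pairs between two added outside vertices, and between added vertices and the rest of the graph: these are charged to the triangles or paths through $K$ that witness them.
\end{enumerate}

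\textbf{Lower bound on $\opt_t$.} Because $K$ is $\eps'$-good with $\eps'<\tfrac16$, a standard argument shows $\mathcal{C}^*$ must essentially contain a cluster $C^*$ with $|K\cap C^*|\ge(1-2\eps')|K|$; otherwise the within-$K$ edges it cuts would cost more than $\eps'|K|^2$. The $(1-2\eps')$ factors in the bound come from exactly this overlap. Given this, each outside vertex $u$ pays in $\mathcal{C}^*$ at least roughly $\min(a_u,1-a_u)(1-2\eps')|K|$, depending on which side it lands. For a single vertex, the ratio
\[
\frac{f(a)(1-a)+(1-f(a))a}{\min(a,1-a)}
\]
is at most $2$ for a suitable $f$; this is the usual pivot-style bound in which triangle costs replace single-vertex costs. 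The pairs among outside vertices $u,w$ are handled like the triangle analysis of the pivot algorithm: a $u$--$w$ disagreement in $\alg$ is charged to the paths $u$--$k$--$w$ with $k\in K$. The key point is that the $K$-side is essentially deterministic and nearly a clique, which saves a factor compared with the $3$ of the pivot algorithm.

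\textbf{Main obstacle.} The hardest step is the outside--outside pairs, in particular two vertices $u,w$ both added with moderate probability but with $\mathcal{C}^*$ placing them differently. I would handle this by bounding
\[
\Pr[u,w\in C]\le f(a_u)f(a_w),
\]
and charging the resulting cost to the $|K|$ two-paths $u$--$k$--$w$, each of which contains a $\mathcal{C}^*$ error whenever $\mathcal{C}^*$ separates $u$ from $w$ but both have many neighbours in $K$. I would then collect all $\eps'$ error terms, namely the internal errors of $K$ and the $1/(1-2\eps')^2$ normalisation from the overlap with $C^*$, into the additive $\tfrac{7\eps'(1+2\eps')}{2(1-2\eps')^2}$, and tune $f$ and the threshold so that the main term is exactly $2$.
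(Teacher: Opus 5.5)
\textbf{Gap: the rounding ignores the outside degree.} The problem is in the pairs you fold into item 3, namely those between an added outside vertex and the rest of the graph.

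Your inclusion probability $f(a_u)$ depends only on $a_u=|N(u)\cap K|/|K|$. Once $u$ is added, you pay for every neighbour of $u$ outside $C$. These neighbours need not be adjacent to $K$ at all, so there is no path $u$--$w$--$x$ with $w\in K$ to charge, and $\mathcal{C}^*$ may make no error on those pairs.

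Concretely, take the following instance.
\begin{itemize}
\item Let $K$ be a clique on $k\ge 1/\eps'$ vertices, hence $\eps'$-good.
\item Let $Q$ be a disjoint clique on $M>k$ vertices.
\item Let $u$ be adjacent to all of $Q$ and to an $a^*$-fraction of $K$, where $f(a^*)=c>0$.
\end{itemize}
Such an $a^*$ must exist: with $f\equiv 0$, a single vertex adjacent to exactly $K$ costs you $k$ while the optimum does not decrease. In this instance the optimum puts $u$ with $Q$, and it decreases by exactly $a^*k$ whether or not $u\in C$. Your expected cost is $c\bigl((1-a^*)k+M\bigr)+(1-c)a^*k$, which is unbounded relative to $a^*k$ as $M\to\infty$. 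Your per-vertex ratio $\frac{f(a)(1-a)+(1-f(a))a}{\min(a,1-a)}$ leaves out exactly this term, so no tuning of $f$ or of the threshold rescues the construction.

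\textbf{What the paper does instead.} The missing idea is that the rounding must also see the outside degree. Let $\beta_v=|N(v)\cap K|/|K|$ and $\alpha_v=|N(v)\setminus K|/|K|$. The paper adds $v$ with probability $p_v=\beta_v/(1+\alpha_v)$, which gives $p_v(1-\beta_v+\alpha_v)=(1-p_v)\beta_v$. Hence the expected cost charged to $v$ is $\bigl(2(1-p_v)\beta_v+\frac{\eps'}{2}p_v\bigr)|K|\le(2+\frac{\eps'}{2})\beta_v|K|$. When $v\notin C'$, this is compared against an optimum loss of $\beta_v|K|$. When $v\in C'$, the cost $p_v\alpha_v|K|$ of cutting $v$ off from its outside neighbours is matched by the optimum's own loss $p_v(\alpha_v-2\eps')|K|$ on those pairs, because $|C'\setminus K|\le 2\eps'|K|$.

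The paper also relies on two further ingredients.
\begin{itemize}
\item It uses the stronger fact that $K$ lies entirely in one optimal cluster $C'$, not just a $(1-2\eps')$-overlap. Then the pairs inside $K$ cost exactly $|K^-|$ to both sides, instead of leaving an additive $\eps'|K|^2$ to absorb.
\item It first expands $K$ deterministically by the vertices with $2\beta_v>1+\alpha_v+2\eps'$, which provably lie in $C'$. Afterwards $p_v\le\frac12+\eps'$, and this, rather than an overlap argument, is the source of the $(1-2\eps')$ factors.
\end{itemize}
The outside--outside pairs you single out as the main obstacle need no triangle charging. At most $\eps'|K|^2$ edges leave $K$, so $\E[|C\setminus K|]\le\eps'|K|$, and bounding these pairs by $\E\left[\binom{|C\setminus K|}{2}\right]$ yields only the $\frac{\eps'}{2}p_v|K|$ term above.

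\textbf{A caution about the exact constant.} In the per-vertex comparison for $v\in C'\setminus K$, the $\eps'$-terms collect to $-\frac92p_v\eps'$; the paper's appendix writes $-\frac72p_v\eps'$. The expansion rule permits the point $\beta_v=1$, $\alpha_v=1-2\eps'$. There the two per-vertex bounds have ratio $2+\frac{9\eps'}{2(1-4\eps')}$, which exceeds $2+\frac{7\eps'(1+2\eps')}{2(1-2\eps')^2}$ for every $\eps'\in(0,\frac16)$. So that additive term does not simply fall out by collecting error terms.
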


This result can be shown by splitting into three cases and analysing each of them. For a given vertex incident to an atom, if it has way more neighbours inside the atom cluster compared to outside the atom cluster then we will never make too many mistakes compared to opt if we put it into the cluster. Otherwise, if there are way more neighbours outside the atom cluster compared to inside the atom cluster then we will never make many errors by not including the vertex in the cluster. Finally

We believe that this procedure and result could be of independent interest, since it gives a clean and practical method to preprocess the graph so that no atoms are present in the graph, while also being able to guarantee that the cost of the optimal solution incident to the removed.

Finally, if we fix the parameter $\eps$ such that, the value for which we can guarantee that there are no $\eps^2$ good-on-average clusters in the graph. 

Whenever we construct a step of the algorithm, we have the two cases of \cref{thm:locating-good-clusters}. Either the removed cluster was an atom, in which case \cref{thm:pivot-with-good-cluster} can be used. Otherwise, we get the setting of \cref{thm:pivot-without-good-clusters}. By balancing these two, we can get the main result, as follows.

\begin{proof}[Proof of \cref{thm:alg-approx-ratio}]
    By \cref{thm:locating-good-clusters}, we are in one of two cases. Either no $\eps^2$-good-on-average exists in the graph with high probability, or we locate a $\frac{4\eps+3\delta+2\delta^2+\eps\delta}{1-3\eps-2\gamma(1-\eps)(1+\delta)}$-good cluster. If no $\eps^2$-on-average exists, then 
    $\alg_t \le \left(3-\frac{\eps^2}{\frac{5}{6}\eps^2 + 1}\right)\opt_t$
    by \cref{thm:pivot-without-good-clusters}. If on the other hand a $\eps'=\frac{4\eps+3\delta+2\delta^2+\eps\delta}{1-3\eps-2\gamma(1-\eps)(1+\delta)}$-good cluster is located, then by \cref{thm:pivot-with-good-cluster} we have
    $\alg_t \le \left(2+\frac{7\eps'(1+2\eps')}{2(1-2\eps')^2}\right) \opt_t.$
    By selecting $\delta=\gamma=\eps/100$ and $\eps=0.0287$ we get 
    $\alg_t \le 2.9991\opt_t.$
    Then by applying \cref{lem:expected-step} the statement follows.
\end{proof}

\subsection{Structure of the paper}
The paper will be structured as follows. We will state some preliminaries in \cref{sec:preliminaries}. Then we will analyse the pivot step in graphs without good clusters in \cref{sec:pivot-without-good-cluster}. Then we will describe how to locate good clusters dynamically in \cref{sec:locate-good-cluster}, and how to remove them in \cref{sec:pivot-with-good-cluster}. 
Finally, we will show the practicality of the algorithm through a series of experiments in \cref{sec:experiments}.

\section{Preliminaries}\label{sec:preliminaries}
We work on an undirected unweighted graph $G=(V,E)$, where vertices and edges are being removed during the pivot process.
Let $n$ and $m$ denote the number of vertices and edges \emph{in the beginning}, respectively.
For a vertex $v$, we use $N(v)$ to denote the set of vertices that are neighbours to $v$ \emph{currently}, including $v$ itself, and let $d(v)=|N(v)|$ denote the \emph{current} degree\footnote{Actually $d(v)$ is the degree plus $1$. However, for simplicity, in this paper we refer to $d(v)$ when we say the degree of $v$.} of $v$.
For two sets $A$ and $B$, we use $A\triangle B$ to denote the symmetric difference between them.

A \emph{cluster} is a set of vertices $C\subseteq V$. A \emph{clustering} of $G$ is a set of clusters that form a partition of $V$.
The cost of a clustering is the number of edges across different clusters plus the number of pairs of vertices in the same clustering without sharing an edge.
The local goodness of a cluster is defined in the following two different ways.

\begin{definition}\label{def:good}
    A cluster $C$ is $\eps$-good if $|N(u)\triangle C| \le \eps|C|$ for any $u\in C$.
\end{definition}

\begin{definition}\label{def:avg-good}
    A cluster $C$ is $\eps$-good-on-average if $$\sum_{u\in C}|N(u)\triangle C| \le \eps|C|^2.$$
\end{definition}

Previous analysis of various clustering algorithms has relied on the approach that we split the process of the algorithm into separate steps. For each of these steps, given some graph $G$ we then construct a new graph $G'$ by removing some cluster $C$ and all incident edges. We then show that the expected cost of the cluster $C$ is at most some multiplicative constant $\alpha$ larger than the decrease in the optimal solution between $G$ and $G'$. 

We are going to show this by using similar triangle counting techniques to previous pivot based clustering algorithms, though arguing that we must have a large number of specific triangles in the graph. The central statement we will be showing is the following. Specifically, we will base our analysis on the works of \cite{ACN08,CMSY15}.
From their work we have the following lemma central to the analysis.

\begin{lemma}[\cite{CMSY15}]\label{lem:expected-step}
    Consider some pivot based algorithm $\alg$, and let $\alg_t$ be the number of constraints violated at step $t$. Let $\opt_t$ be the decrease of the cost of the optimal solution at step $t$. Then if $\E[\alg_t] \le \alpha \E[\opt_t]$ then $\alg$ is an $\alpha$ approximation.
\end{lemma}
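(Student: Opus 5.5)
The plan is to follow the standard charging framework of \cite{ACN08}, in the form used in \cite{CMSY15}. Fix the optimal clustering $\mathcal{C}^*$ of the final input. At each step $t$, let $G_t$ be the current graph and let $C$ be the cluster removed. Define $\opt_t$ as the cost of $\mathcal{C}^*$ restricted to pairs $\{u,w\}$ with $u\in C$ that are still present in $G_t$. Let $\alg_t$ be the number of pairs, with at least one endpoint in $C$, that the algorithm violates at step $t$. Every pair of $V$ is decided exactly once: the first time one of its endpoints lands in a removed cluster. Hence
\[
\sum_t \alg_t=\alg, \qquad \sum_t \opt_t=\opt .
\]
Here the second sum is taken with respect to the fixed $\mathcal{C}^*$. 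This works because $\mathcal{C}^*$ restricted to $G_{t+1}$ costs exactly the cost on $G_t$ minus $\opt_t$. So $\opt_t$ really is the decrease in a feasible solution's cost, and hence an upper bound on the decrease of the true optimum in the telescoping sense.

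The expectation is the main point. The hypothesis $\E[\alg_t]\le\alpha\E[\opt_t]$ should be read as holding conditionally on the history $\mathcal{F}_{t-1}$, i.e.\ on the current graph $G_t$, since the per-step analyses are proven for an arbitrary current graph. Conditioning on $\mathcal{F}_{t-1}$ and then taking total expectation gives $\E[\alg_t]\le \alpha\E[\opt_t]$ unconditionally. Summing over $t$ then uses linearity of expectation:
\[
\E[\alg]=\sum_t\E[\alg_t]\le\alpha\sum_t\E[\opt_t]=\alpha\,\opt .
\]
Two technicalities remain. The number of steps is random but bounded by $n$, so we pad with zero steps and use a finite sum. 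For the same reason the pairing of steps needs no optional stopping argument.

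The step I expect to be the main obstacle is making sure $\opt_t$ is defined against a \emph{fixed} reference clustering rather than the optimum of the residual graph. If instead $\opt_t$ is literally the decrease of $\opt(G_t)$ to $\opt(G_{t+1})$, the sum still telescopes to $\opt(G)-0$. This also works, since $\sum_t(\opt(G_t)-\opt(G_{t+1}))=\opt(G)$ deterministically. The per-step lemmas must then be proved with the residual optimum, which is what they do, since they take any optimal clustering of the current graph. Either way the telescoping is deterministic and the final sum equals $\opt(G)$ exactly. The only subtlety is justifying the conditional form of the hypothesis. I would state explicitly that the per-step bounds hold for every fixed current graph, and are therefore valid conditionally on the past.
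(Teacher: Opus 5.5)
Your proof is correct and is the standard argument. The paper gives no proof of this lemma and only cites \cite{CMSY15}, where, as far as I recall, the same per-step summation is done with a fixed per-pair charge (the LP cost of the settled pairs) in place of $\opt_t$. Your two observations are exactly the adaptation needed here: each pair is settled exactly once, so $\sum_t \alg_t = \alg$, and the residual decreases $\opt(G_t)-\opt(G_{t+1})$ telescope deterministically to $\opt(G)$. Averaging the conditional per-step bounds and applying linearity of expectation then finishes the proof.

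One correction of emphasis. Only your residual definition of $\opt_t$ matches what this paper's per-step bounds actually prove. Those bounds use optimality of the clustering of the \emph{current} graph: see \cref{lem:K-in-opt-cluster}, \cref{lem:C'-minus-K}, and the re-clustering step in \cref{lem:g-le-opt}. The fixed-reference version in your first paragraph is a valid lemma, but these bounds do not feed into it. So it should not be the main line, and the ``obstacle'' you name is really the reverse: the residual definition is the one that must be used.
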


This means that the analysis in this paper will focus on bounding $\E[\opt_t]$ and $\E[\alg_t]$ in expectation, since this together with the lemma implies an approximation algorithm. 

\section{The pivot in the absence of good clusters}
\label{sec:pivot-without-good-cluster}
In this section we are going to prove one of our central results, namely that if you run the regular pivot algorithm in a graph where no good clusters exist a single step, then this step contributes to the approximation ratio being strictly less than $3$.

This is the case, even though it has been shown that for general graphs the approximation ratio of the pivot algorithm is tight. Nonetheless, we show that in this specific graph family this is not the case. The reason for this is that all the examples that actually forces the approximation ratio of the pivot to be three necessarily must contain some very good clusters, which are exactly the onces we exclude.

With $\opt_t$ we here imply the difference between the optimal solution of the graph before doing a pivot step, and after doing a pivot step. This could in principle change the optimal clustering. However, after doing modifications, the previous clustering is still a valid upper bound for the optimal solution. So unless explicitly stated, we are not going to charge any cost to the change in the optimal solution after removing a cluster.

\begin{figure}[ht]
    \centering
    \begin{center}
        \begin{tabular}{|c|c|c|c|c|c|c|} \hline
             & Edges$(u,v,w)$ & $\optv(u,v,w)$ & $g(u,v,w)$ & $cost(u,v,w)$ & Illustration\\\hline
             \multirow{4}{*}{$u,v,w\in C$}
             &$+++$ & \multicolumn{2}{c|}{0} & 0 & \multirow{4}{*}{\begin{tikzpicture}[scale=1]
  \draw (0,0) circle (0.75);

  \coordinate (u) at (-0.25,0.5);   %
  \coordinate (v) at (0.5,0.13);    %
  \coordinate (w) at (-0.25,-0.6);   %

  \draw[thick]
    (u) -- node[midway, above] {$1$} (v)
    -- node[midway, right] {$2$} (w)
    -- node[midway, left] {$3$} (u);

  \fill (u) circle (2pt) node[below left] {};
  \fill (v) circle (2pt) node[left] {};
  \fill (w) circle (2pt) node[right] {};
\end{tikzpicture}}\\
             &$++-$ & \multicolumn{2}{c|}{1} & 3 & \\\cline{3-4}
             &$+--$ & 2 & 1 & 0 & \\

             &$---$ & 0 & 3 & 0 & \\\hline
             \multirow{6}{*}{$u,v\in C, \;w\in V\setminus C$}
             &$+++$ & \multicolumn{2}{c|}{2} & 0 & \multirow{6}{*}{\begin{tikzpicture}[scale=1]
  \draw (0,0) circle (0.75);

  \coordinate (A) at (0.25,-0.4);   %
  \coordinate (B) at (0.25,0.4);   %
  \coordinate (C) at (1.2,0.15);    %

  \draw[thick]
    (A) -- node[midway, left] {$1$} (B)
    -- node[midway, above] {$2$} (C)
    -- node[midway, below right] {$3$} (A);

  \fill (A) circle (2pt);
  \fill (B) circle (2pt);
  \fill (C) circle (2pt);
\end{tikzpicture}}\\
             &$++-$ & \multicolumn{2}{c|}{1} & 3 & \\
             &$+--$ & \multicolumn{2}{c|}{0} & 0 & \\
             &$-++$ & \multicolumn{2}{c|}{3} & 3 & \\
             &$-+-$ & \multicolumn{2}{c|}{1} & 0 & \\
             &$---$ & \multicolumn{2}{c|}{0} & 0 & \\\hline
             \multirow{6}{*}{$u\in C,\;v,w\in V\setminus C$}
             &$+++$ & \multicolumn{2}{c|}{2} & 0 & \multirow{6}{*}{\begin{tikzpicture}[scale=1]
  \draw (0,0) circle (0.75);

  \coordinate (A) at (0.35,0.1);   %
  \coordinate (B) at (1.2,0.55);    %
  \coordinate (C) at (1.25,-0.4);   %

  \draw[thick]
    (A) -- node[midway, above] {$1$} (B)
    -- node[midway, right] {$2$} (C)
    -- node[midway, below] {$3$} (A);

  \fill (A) circle (2pt);
  \fill (B) circle (2pt);
  \fill (C) circle (2pt);
\end{tikzpicture}}\\
             &$++-$ & \multicolumn{2}{c|}{1} & 3 & \\
             &$+--$ & \multicolumn{2}{c|}{0} & 0 & \\
             &$+-+$ & \multicolumn{2}{c|}{2} & 3 & \\
             &$-+-$ & \multicolumn{2}{c|}{0} & 0 & \\
             &$---$ & \multicolumn{2}{c|}{0} & 0 & \\\hline
        \end{tabular}
    \end{center}
    \caption{Decrease of the cost of the optimal solution $\optv$, an amortisation $g$ of this cost taking changing the optimal clustering into account, and the cost of the pivot algorithm $\cost$ all up to symmetries in the labelling of triangles.}
    \label{tab:costs-and-amortisation}
\end{figure}

We are going to define $\optv(u,v,w)$ and $\cost(u,v,w)$ similarly to the original pivot algorithm \cite{ACN08}. The idea is that it should be a lower bound on the amount of cost that the optimal solution is lowered by in expectation at each pivot step. 

\begin{lemma}[\cite{ACN08}]\label{lem:cost-ge-alg}
    Let $\cost(u,v,w)$ be defined as in \cref{tab:costs-and-amortisation}. Then 
    
    $$\frac{1}{6|V|} \sum_{u,v,w\in V}\cost(u,v,w) \ge \E[\alg_t] $$
\end{lemma}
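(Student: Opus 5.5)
Since the pivot is drawn uniformly from the current vertex set $V$, I will write $\E[\alg_t]$ as an average over the choice of pivot $p\in V$ and charge each unit of cost to a triangle that has $p$ as a vertex.

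\textbf{Which pairs get charged.} When $p$ is the pivot, the cluster is $C=N(p)$. The violated pairs removed at this step are of two kinds:
\begin{itemize}
  \item a non-edge $\{v,w\}$ with $v,w\in N(p)$, which is put in the same cluster although it is not an edge;
  \item an edge $\{v,w\}$ with exactly one endpoint in $N(p)$, which is cut.
\end{itemize}
The pivot itself never creates an error with its neighbours, so every error involves a pair $\{v,w\}$ with $v,w\neq p$. In both cases the triple $(p,v,w)$ induces a triangle with exactly two edges, i.e.\ a ``bad triangle'' with edge pattern $++-$ in some order. Hence
$$\alg_t \;\le\; \#\{\{v,w\} : (p,v,w)\text{ is a bad triangle with } p \text{ incident to both edges}\}.$$

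\textbf{Averaging over the pivot.} Taking the expectation over the uniform $p$ gives
$$\E[\alg_t]\le \frac{1}{|V|}\sum_{p}\#\{\text{bad triangles with apex } p\}.$$
Every bad triangle has exactly one apex, namely the vertex incident to its two edges. So each unordered bad triangle is counted exactly once in this sum.

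\textbf{Matching the normalisation.} The sum $\sum_{u,v,w\in V}\cost(u,v,w)$ runs over ordered triples. An unordered bad triangle appears in $6$ orderings. In \cref{tab:costs-and-amortisation}, $\cost=3$ exactly for the $++-$ patterns and $0$ otherwise; this includes the rows for triangles split across $C$, which are still the $++-$ pattern up to relabelling. Each unordered bad triangle therefore contributes $6\cdot 3=18$, giving
$$\frac{1}{6|V|}\sum\cost = \frac{3}{|V|}\,\#\{\text{bad triangles}\}.$$
This is at least $\frac{1}{|V|}\#\{\text{bad triangles}\}\ge\E[\alg_t]$, so the lemma holds, with room to spare.

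\textbf{Expected obstacle.} The only delicate point is the bookkeeping convention: whether the table's $\cost$ is meant per ordered triple, and whether the authors intend the factor $3$ to equal $\cost/$multiplicity exactly. This is the part I would double-check against the table. It does not affect the inequality direction, because the apex-counting bound is tight for each bad triangle and the table's weight only overcounts it.
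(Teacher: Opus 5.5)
Your reading of the table is right: $\cost=3$ on every ordering of every triangle with exactly two edges, and $0$ otherwise. So with $B$ the number of bad triangles, the left-hand side equals $3B/|V|$. The gap is the step ``Hence $\alg_t\le\#\{\{v,w\} : (p,v,w)\text{ bad with apex } p\}$'', which is false, and with it your final inequality $B/|V|\ge\E[\alg_t]$.

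Only the first kind of error, a non-edge inside $N(p)$, gives a bad triangle with apex $p$. For the second kind, a cut edge $\{v,w\}$ with $v\in N(p)$ and $w\notin N(p)$, the triangle $(p,v,w)$ has edges $pv$, $vw$ and non-edge $pw$. Its apex is therefore $v$, so these errors are not in your set at all. A concrete counterexample is the path $a$--$b$--$c$. Every pivot creates exactly one error: the pair $ac$ if $b$ is chosen, and the cut edge $bc$ resp.\ $ab$ if $a$ resp.\ $c$ is chosen. So $\E[\alg_t]=1$, whereas $B/|V|=1/3$.

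The repair is to charge each error to the bad triangle it forms with $p$, whichever vertex is the apex. Every bad triangle containing $p$ produces exactly one violated pair avoiding $p$: the non-edge if $p$ is the apex, otherwise the edge not incident to $p$. Every violated pair at this step arises this way, so $\alg_t$ is exactly the number of bad triangles through $p$. Averaging over the uniform pivot counts each bad triangle three times, once per vertex. This gives $\E[\alg_t]=3B/|V|$, which is exactly $\frac{1}{6|V|}\sum_{u,v,w}\cost(u,v,w)$.

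So the lemma holds with equality. The weight $3$ in the table is not an overcount: it pays precisely for the three vertices of a bad triangle, any of which can trigger it as pivot. This is the ACN08 argument the paper cites. Your claim that the bound holds ``with room to spare'' has it backwards, and your closing remark that the apex count is tight per triangle is the source of the error.
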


As a note, to make the counting simpler, whenever we sum over multiple elements of a set, as in the expression $\sum_{u,v,w\in C} g(u,v,w)$, this will be counting different permutations of elements as different summands. That is each triangle will occur 6 times, once for each of its permutations of vertices.

\begin{lemma}\label{lem:g-le-opt}
   Let $g$ be defined as in \cref{tab:costs-and-amortisation}. Then 
    $$\frac{1}{6|V|}\sum_{u,v,w\in V}g(u,v,w)\le \E[\opt_t]$$
\end{lemma}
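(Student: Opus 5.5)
The plan is to compare, for a uniformly random pivot $p$, the cost of $\mathcal{C}^*$ before the step with the cost of a \emph{modified} clustering $\mathcal{C}'_p$ of the remaining graph $G'$. Since $\opt(G')\le \cost(\mathcal{C}'_p)$, we get $\E[\opt_t]\ge \frac1{|V|}\sum_p\bigl(\cost(\mathcal{C}^*)-\cost(\mathcal{C}'_p)\bigr)$. The argument has three steps, and only the third is not routine bookkeeping.

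First, take $\mathcal{C}'_p=\mathcal{C}^*$ restricted to $V\setminus N(p)$. This is the argument of \cite{ACN08}. A pair $\{v,w\}$ is removed exactly when $v\in N(p)$ or $w\in N(p)$, and then the optimum loses its error on that pair, if it had one. Charging each removed pair to the triple $(p,v,w)$ and summing over all $6$ orderings reproduces the column $\optv$ of \cref{tab:costs-and-amortisation}. Thus $\frac1{6|V|}\sum_{u,v,w}\optv(u,v,w)\le\E[\opt_t]$. In every row except $+--$ and $---$ with $u,v,w$ in a common optimal cluster $C$, we have $g=\optv$, so those rows need no further work. I will check this row by row against the table.

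Second, isolate the discrepancy. Fix an optimal cluster $C\in\mathcal{C}^*$ and let $T_{---}(C)$ and $T_{+--}(C)$ be the numbers of unordered triangles inside $C$ with zero edges and with exactly one edge. By the table, $\sum_{u,v,w\in C}(g-\optv)=6\bigl(3T_{---}(C)-T_{+--}(C)\bigr)$ up to the permutation normalisation. Since the clusters are disjoint, it suffices to show, for each $C$,
\[
\sum_{p\in C}\Bigl(\cost\bigl(C\setminus N(p)\bigr)-\cost\bigl(\mathcal{C}'_p|_{C\setminus N(p)}\bigr)\Bigr)\ \ge\ 3T_{---}(C)-T_{+--}(C),
\]
together with the same inequality with zero on the left for pivots outside $C$. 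To get this, I modify $\mathcal{C}^*$ only on the leftover part $R_p=C\setminus N(p)$ of the pivot's own cluster. There $R_p$ is either kept whole or split into singletons (or a suitable fractional mix of the two), whichever is better. All other clusters are left untouched, so the cut edges to them are unchanged.

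Third, and this is the step I expect to be the main obstacle, I would verify the displayed inequality by a triangle count. Splitting $R_p$ into singletons gains $|\{\text{non-edges in } R_p\}|-|\{\text{edges in } R_p\}|$. Summing over $p\in C$, a non-edge $\{v,w\}\subseteq C$ is counted once for each $p\in C$ that is non-adjacent to both $v$ and $w$. Those $p$ are exactly the $---$ triangles on $\{v,w\}$, which gives $3T_{---}$ in total. An edge $\{v,w\}$ is counted negatively for each $p$ non-adjacent to both, which is exactly the $+--$ triangles, giving $-T_{+--}$. The degenerate terms with $p\in\{v,w\}$ must be handled separately. Hence always splitting already yields the right-hand side, and taking the better of splitting and keeping can only help. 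The delicate parts are three: the exact normalisation (the factors of $6$ and $|V|$ from ordered triples); the degenerate triples where $p$ coincides with $v$ or $w$, since $p\in N(p)$ is removed; and making sure the improvement is not double-counted with the first step. I would resolve these by writing both sides as sums over ordered triples of distinct vertices and checking each row of the table.
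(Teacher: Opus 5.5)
Your proposal is correct and is essentially the paper's argument. The paper also bounds the extra decrease beyond the ACN08 charge by splitting the leftover part $C\setminus N(u)$ of the pivot's optimal cluster into singletons, which gains $c_u^{---}-c_u^{--+}$, and then sums over pivots $u$ (the paper gets there through a per-vertex decomposition $\Delta_u$ of $g-\optv$, while you count per pair). Two of your worries go away: the degenerate terms cannot occur, since $p\in N(p)$ gives $p\notin R_p$; and, as you note, always splitting already suffices, so the paper's case distinction between keeping and splitting is not needed.
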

\begin{proof}
    By subtracting opt on both sides we get
    \begin{equation}\label{eq:triple-minus-bound}
    \frac{1}{6|V|}\sum_{u,v,w\in V}g(u,v,w) -\optv(u,v,w)\le \E[\opt_t]-\frac{1}{6|V|}\sum_{u,v,w\in V}\optv(u,v,w)
    \end{equation}
    This is the statement we will prove. One quantity that is not taken into account when comparing $\opt_t$ to $\frac{1}{6|V|}\sum_{u,v,w\in V}\optv(u,v,w)$ is how changing the optimal clustering after doing a pivot step can lead to a decrease in the cost of the optimal solution. This is what we will bound the left side by. Letting $\mathcal{C}^*$ be an optimal clustering, from the definition of $g$ we get:

    \begin{equation}
    \sum_{u,v,w\in V}g(u,v,w) -\optv(u,v,w) = \sum_{C\in \mathcal{C}^*}\sum_{\substack{u,v,w\in C:\\--\pm\text{ triangle}}}g(u,v,w) -\optv(u,v,w)
    \end{equation}

    Consider some vertex $u\in C\in \mathcal{C}$. Let $c_u^{--+}$ be the number of pairs of vertices $v,w\in C$ such that $u,v,w$ is a $(--+)$ triangle where $u$ is incident to both $-$ edges. Let $c_u^{---}$ be defined similarly for $(---)$ triangles. In addition, we define a function $\Delta_u(v,w)$ such that
    $$\Delta_u(v,w) = \begin{cases}
        -1 & \text{if (u,v,w) is a $(--+)$ with $u$ incident to both $-$ edges} \\
        0 & \text{if (u,v,w) is a $(--+)$ with $u$ not incident to both $-$ edges}\\
        (g(u,v,w)-\optv(u,v,w))/3 & \text{otherwise}
    \end{cases}$$

    From this it can be seen that $g(u,v,w)-\optv(u,v,w) = \Delta_u(v,w)+\Delta_v(u,w)+\Delta_w(u,v)$. So we can write
    \begin{align}
        \sum_{C\in \mathcal{C}}\sum_{\substack{u,v,w\in C:\\--\pm\text{ triangle}}}g(u,v,w) -\optv(u,v,w)&=\sum_{C\in \mathcal{C}}\sum_{u\in C}\sum_{v,w\in C\setminus N(u)}\Delta_u(v,w)\nonumber\\
        &= \sum_{C\in \mathcal{C}}\sum_{u\in C}c_u^{---} - c_u^{--+}\label{eq:mmm-mmp-diff-count}
    \end{align}
    by expanding $\Delta_u(v,w)$ according to its definition, the definition of $g$ and using $\Delta_u$ is non-zero only when $u$ is incident to the two $-$ edges. We can now split into two cases. If $c_u^{--+} \ge c_u^{---}$ then $c_u^{---} - c_u^{--+} \le 0$ and so not changing the optimal cluster covers this since the right hand side of \cref{eq:triple-minus-bound} is non-negative. So we focus on the case $c_u^{--+} \ge c_u^{---}$. 

    If $c_u^{--+} \ge c_u^{---}$, then after selecting $u$ as a pivot, the part of $C$ that remains is exactly $C\setminus N(u)$. So we would get $c_u^{--+}$ internal $+$ edges and $c_u^{---}$ internal $-$ edges from the remaining cluster. This means that by splitting the remaining vertices into singletons, we can create a clustering that decreases by $c_u^{---}-c_u^{--+}$, which implies that the optimal clustering decreases in cost by at least this quantity. Combining this with \cref{eq:mmm-mmp-diff-count}, we get \cref{eq:triple-minus-bound}, which completes the proof.
\end{proof}

Next, we introduce a quantity to count the number of $(++-)$ triangles, which is where the main amount of cost in the pivot algorithm will be coming from. 
\begin{definition}\label{def:ppm-count-ri}
    Fix $\mathcal{C}^*$ as an optimal clustering. Let $r_i^{++-}$ be the number of $(++-)$ triangles such that $i\in\{1,2,3\}$ of the vertices belong to the same cluster in the optimal clustering and the triangle has $cost(u,v,w)=3=3g(u,v,w)$, such that in $r_3^{++-}$ each triangle is counted $6$ times, in $r_2^{++-}$ each triangle is counted $4$ times and in $r_1^{++-}$ each triangle is counted $2$ times for each cluster where it satisfies this condition.

    Finally, let $r^{++-}=r_3^{++-}+r_2^{++-}+r_1^{++-}$
\end{definition}
The way that this should be understood is that if a triangle has two vertices in the same cluster and one in a different one, it will be counted $4$ times in $r_2^{++-}$ and twice in $r_1^{++-}$. It might seem a bit arbitrary with the number of times that each triangle is counted in the different sets, but this is to make sure that when we consider $r^{++-}=r_1^{++-}+r_2^{++-}+r_3^{++-}$ each triangle is counted exactly $6$ times.

To make the proof work, we however have some triangles that we need to count separately, namely the $(++-)$ excluded from being counted by $r^{++-}$, since these have a different ratio compared to the other edges, but still need to be taken into account for the cost of the pivot algorithm.
\begin{definition}
    Fix $\mathcal{C}^*$ as an optimal clustering. Let $s_2^{-++}$ be the number of $(-++)$ triangles with exactly two vertices inside the same optimal cluster, counted with multiplicity $4$. Let $s_1^{+-+}$ be the number of $(+-+)$ with the vertex incident to both $+$ edges in an optimal cluster different from the other vertices.

    Finally, let $a^{++-} = r^{++-} + s_1^{+-+} + s_2^{-++}$
\end{definition}
\begin{lemma}\label{lem:r1r2r3-bound}
    \begin{equation}    
        a^{++-}\le \sum_{C\in C^*}\sum_{u\in C}3|N(u)\setminus C|^2 + 6|C||N(u)\setminus C| + 3|C||C\setminus N(u)|+3|C\setminus N(u)||N(u)\setminus C|
    \end{equation}
\end{lemma}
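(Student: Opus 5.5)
The plan is a direct charging argument. Every ordered triple counted in $a^{++-}$ gets charged to a vertex $u$ of some optimal cluster $C\in\mathcal{C}^*$, together with an ordered pair $(v,w)$ drawn from one of the four sets appearing in the bound: $(N(u)\setminus C)^2$, $C\times(N(u)\setminus C)$, $C\times(C\setminus N(u))$, or $(C\setminus N(u))\times(N(u)\setminus C)$. The sizes of these sets are exactly $|N(u)\setminus C|^2$, $|C||N(u)\setminus C|$, $|C||C\setminus N(u)|$ and $|C\setminus N(u)||N(u)\setminus C|$. So it suffices to show that each configuration is charged at most $3$, $6$, $3$ and $3$ times, respectively.

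First I read off from \cref{tab:costs-and-amortisation} which $(++-)$-type triangles satisfy $\cost=3$, with $3g$ or with the special values tracked by $s$. There are five kinds.
\begin{itemize}
\item All three vertices in one optimal cluster $C$, with one $-$ edge. These are counted in $r_3^{++-}$.
\item Two vertices $u,v\in C$ and $w\notin C$, with $uv$ positive and exactly one of $vw,wu$ negative. These are counted in $r_2^{++-}$, and in $r_1^{++-}$ from the side of $w$'s cluster.
\item $u,v\in C$, $w\notin C$, with $uv$ negative and $w$ adjacent to both. These are counted in $s_2^{-++}$.
\item A $(++-)$ triangle whose three vertices lie in three distinct clusters. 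These are counted in $r_1^{++-}$.
\item The $(+-+)$ triangles whose apex is in a cluster different from the other two vertices. These are counted in $s_1^{+-+}$.
\end{itemize}

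Next I assign charges type by type.
\begin{itemize}
\item \emph{All in $C$.} Charge to an endpoint $u$ of the internal $-$ edge. The pair is the other endpoint $w\in C\setminus N(u)$ and the apex $v\in C$, which lands in $C\times(C\setminus N(u))$. The triangle's $6$ counts are spread over its two such endpoints, giving multiplicity $3$.
\item \emph{Two in $C$, $uv$ positive.} Let $v$ be the vertex of $C$ adjacent to $w$. Charge to $v$ with pair $(u,w)\in C\times(N(v)\setminus C)$. This single configuration absorbs both the $4$ counts in $r_2^{++-}$ and the $2$ counts in $r_1^{++-}$, which accounts for the coefficient $6$.
\item \emph{Two in $C$, $uv$ negative ($s_2^{-++}$).} Charge to either endpoint $u$ of the $-$ edge, with pair $(v,w)\in(C\setminus N(u))\times(N(u)\setminus C)$. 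The $4$ counts split over the two endpoints, giving multiplicity at most $3$.
\item \emph{$s_1^{+-+}$.} Charge to the apex $u$ with pair $(v,w)\in(N(u)\setminus C)^2$.
\item \emph{Three distinct clusters.} Charge to the vertex $v$ incident to both $+$ edges, with pair $(u,w)\in(N(v)\setminus C_v)^2$.
\end{itemize}
Together the last two types stay within coefficient $3$ on $|N(u)\setminus C|^2$.

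The main obstacle is the multiplicity bookkeeping. The definition of $r_1^{++-}$ allows a triangle to be counted once \emph{for each cluster} satisfying the condition, so in the three-distinct-clusters case it could contribute $2$ counts from each of several clusters. I must check that all these counts route to the vertex incident to both $+$ edges, since only that vertex has two outside neighbours. I also must check that no single $(u,v,w)$ with $(v,w)\in(N(u)\setminus C)^2$ receives more than $3$ charges in total from $s_1^{+-+}$ and $r_1^{++-}$ combined. I would handle this by fixing, for each ordered configuration, the unique vertex whose sets contain it, and then verifying that the total counts landing there are at most the stated coefficient. If the first assignment overshoots, the fallback is to redistribute some charges to the symmetric ordering $(w,v)$, which the sum also counts.
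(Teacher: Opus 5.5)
Your charging is correct and is essentially the paper's argument. The paper's three claims charge the same triangle types to the same four configuration families, and its step of bounding the $r_1^{++-}$ counts of two-in-a-cluster triangles by $\frac12 r_2^{++-}$ and then scaling the second claim by $\frac32$ is exactly your coefficient $4+2=6$ on $|C||N(u)\setminus C|$. The bookkeeping you left open closes as you anticipated, but the split over both orderings is necessary, not a fallback: a three-cluster triangle puts $4+2=6$ counts on the apex's two ordered pairs $(u,w),(w,u)$, i.e.\ $3$ each, while an $s_1^{+-+}$ triangle whose other two vertices share a cluster puts only $2$ there.
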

\begin{proof}
    We are going to bound $r_1^{++-},r_2^{++-},r_3^{++-}, s_1^{+-+}, s_2^{-++}$ in groups, and then later their sum.

    \begin{claim}\label{clm:r1r2r3-bound-r1c}
        $$r_1^{++-} + s_1^{+-+} \le \frac{1}{2}r_2^{++-} + 3\sum_{C\in C^*}\sum_{u\in C}|N(u)\setminus C|^2$$
    \end{claim}
    \begin{proof}
        For each triangle $(u,v,w)$ counted by $r_1^{++-}$ or $s_1^{+-+}$, due to $u$ being in a different cluster compared to $v$ and $w$, then either $(u,v,w)$ is also counted $4$ times by $r_2^{++-}$ if $v$ and $w$ is the same cluster, or $u,v,w$ are all in different clusters, in which case one vertex is incident to two outgoing $+$ edges. If $v$ and $w$ is in the same cluster then $r_1^{++-}$ or $s_1^{+-+}$ only need to count this triangle twice, giving the coefficient of $\frac{1}{2}$. In the case where all $u,v,w$ are in different clusters, the triangle is counted $4$ times by $r_1^{++-}$ and $2$ times in $s_1^{+-+}$ and so this is bounded by $3\sum_{C\in C^*}\sum_{u\in C}|N(u)\setminus C|^2$. 
    \end{proof}
    \begin{claim}\label{clm:r1r2r3-bound-r2c}
        $$r_2^{++-} + s_2^{-++} \le \sum_{C\in C^*}\sum_{u\in C}4|C||N(u)\setminus C| + 2|C\setminus N(u)||N(u)\setminus C|$$
    \end{claim}
    \begin{proof}
        For each $(++-)$ triangle $(u,v,w)$ counted by $r_2^{++-}$ due $u,v$ being inside some optimal cluster $C$, exactly one of $u$ and $v$ is incident to both $+$ edges. The number of such pairs for is then counted by $\sum_{C\in C^*}\sum_{u\in C}|C||N(u)\setminus C|$. Each triangle is then counted at least once and so since $r_2^{++-}$ counts each triangle $4$ times, the term has the coefficient $4$. For each $(-++)$ triangle, either $u$ or $v$ is incident to an internal $-$ edge and an external $+$ edge. The number of such pairs are $\sum_{C\in C^*}\sum_{u\in C}|C\setminus N(u)||N(u)\setminus C|$, where each $(+-+)$ triangle is counted twice, giving the coefficient of $2$.
    \end{proof}
    \begin{claim}\label{clm:r1r2r3-bound-r3c}
        $$r_3^{++-} \le 3\sum_{C\in C^*}\sum_{u\in C}|C||C\setminus N(u)|$$
    \end{claim}
    \begin{proof}
        For each $(++-)$ triangle counted by $r_3^{++-}$, two vertices are incident to both a $-$ edge and a $+$ edge. This gives that for the expression $\sum_{C\in C^*}\sum_{u\in C}|C||C\setminus N(u)|$, each triangle $(++-)$ is twice. Since $r_3^{++-}$ counts each triangle $6$ times the claim follows.
    \end{proof}

    By now combining \cref{clm:r1r2r3-bound-r1c,clm:r1r2r3-bound-r2c,clm:r1r2r3-bound-r3c}, and using that $s_2^{-++}$ when applying \cref{clm:r1r2r3-bound-r2c} is non-negative, we complete the proof.
\end{proof}

\begin{lemma}\label{lem:g-ge-r3}
    Let $\mathcal{C}^*$ be an optimal clustering. Then
    \begin{equation}\label{eq:r3-extra-counting}
        \sum_{C\in\mathcal{C}^*}\sum_{u,v,w\in C}g(u,v,w)\ge r_3^{++-} + 6\sum_{C\in\mathcal{C}^*}\sum_{u\in C}|C\setminus N(u)|^2
    \end{equation}
\end{lemma}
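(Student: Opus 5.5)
The plan is to split the left-hand side of \cref{eq:r3-extra-counting} by the sign pattern of the internal triples, and to match each part against one term on the right. Fix a cluster $C\in\mathcal{C}^*$. By \cref{tab:costs-and-amortisation}, an internal triple contributes $g=0$ for $(+++)$, $g=1$ for $(++-)$ and for $(+--)$, and $g=3$ for $(---)$. The sum runs over ordered triples, so each unordered triangle occurs $6$ times.

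The first step handles the $(++-)$ triples. Every internal $(++-)$ triangle has $\cost=3=3g$, so it is counted by $r_3^{++-}$, and \cref{def:ppm-count-ri} counts it with multiplicity $6$. Its $6$ ordered copies contribute exactly $6\cdot 1$ to $\sum_{u,v,w\in C}g(u,v,w)$. Summing over $C$, the $(++-)$ part of the left-hand side equals $r_3^{++-}$ exactly, so this term on the right is absorbed with nothing left over.

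The second step handles the term $6\sum_{u\in C}|C\setminus N(u)|^2$, which must be paid for by the $(+--)$ and $(---)$ triples. I would expand $|C\setminus N(u)|^2$ as the number of ordered pairs $(v,w)$ with $v,w\in C\setminus N(u)$, and charge each such pair to the ordered triple $(u,v,w)$. In that triple $u$ is incident to both $-$ edges, so it is a $(+--)$ or $(---)$ triple in which $u$ is the vertex whose two edges are both negative. I would then count, for each triangle type, how many charged pairs land on it and compare this with its total $g$-mass over ordered copies: $6$ for $(+--)$ and $18$ for $(---)$. For a $(---)$ triangle each of the three vertices sees two ordered pairs, giving $6$ pairs against $g$-mass $18$, so a factor $3$ is available. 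For a $(+--)$ triangle only the apex $u$ sees pairs, giving $2$ pairs against $g$-mass $6$.

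The main obstacle is this multiplicity bookkeeping. I expect it to fail for the statement as written, so that the plan may only give the weaker bound described next. The diagonal pairs $v=w$ must be paid for by the degenerate triples $(u,v,v)$ included in $\sum_{u,v,w\in C}$, and I would rely on the paper's convention that ordered sums range over all of $C^3$. Even then, the per-triangle comparison above yields only $\sum_{u\in C}|C\setminus N(u)|^2$ times a factor of about $3$ per charged pair, not $6$. A direct check supports this: $|C|=3$ with a single internal $+$ edge gives left-hand side $6$ but right-hand side $36$.

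To reach the stated coefficient $6$ the charging would have to be redone, and I do not see how to do it within this plan. The only idea I would try first is to reuse the amortisation from the proof of \cref{lem:g-le-opt}, where the $(--\pm)$ triples were rebalanced via $\Delta_u$. It is not clear that $\Delta_u$ frees any mass for this inequality: by construction it only redistributes $g-\optv$ among the three vertices of a triangle and leaves the per-triangle $g$-mass unchanged. If that idea fails, the argument as planned establishes the inequality only with coefficient $3$ in place of $6$. In that case the constant $6$ in \cref{eq:r3-extra-counting} is not justified by this plan, and the counterexample suggests it does not hold.
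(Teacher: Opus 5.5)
Your multiplicity count is right, and so is your conclusion: the inequality cannot be proved as stated. The paper's own argument breaks exactly where you located the problem. The paper asserts that $\sum_{u\in C}|C\setminus N(u)|^2$ counts every internal $(+--)$ triangle once and every internal $(---)$ triangle three times. That is the number $\sum_{u\in C}\binom{|C\setminus N(u)|}{2}$ of unordered pairs of distinct non-neighbours of $u$ in $C$, whereas $|C\setminus N(u)|^2=2\binom{|C\setminus N(u)|}{2}+|C\setminus N(u)|$. Write $t_1,t_0$ for the numbers of internal $(+--)$ and $(---)$ triangles of $C$. The $(--\pm)$ part of the left-hand side is $6t_1+18t_0=6\sum_{u\in C}\binom{|C\setminus N(u)|}{2}$. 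So what your charging (and the paper's verbal count) actually proves is the identity
\[
\sum_{C\in\mathcal{C}^*}\sum_{u,v,w\in C}g(u,v,w)=r_3^{++-}+6\sum_{C\in\mathcal{C}^*}\sum_{u\in C}\binom{|C\setminus N(u)|}{2}.
\]
The stated inequality is equivalent to $0\ge\sum_{C\in\mathcal{C}^*}\bigl(6t_1+18t_0+6\sum_{u\in C}|C\setminus N(u)|\bigr)$. This fails as soon as a single optimal cluster misses an internal edge. You are also right that the $\Delta_u$ rebalancing from \cref{lem:g-le-opt} cannot help, since it leaves the $g$-mass of each triangle unchanged.

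Two things in your write-up need fixing, though.

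First, your counterexample does not satisfy the hypothesis. A three-vertex cluster with a single internal edge is never part of an optimal clustering: moving the vertex with no neighbour in $C$ to a singleton removes two violated pairs and creates none. A legitimate witness is the path $a-c-b$ (edges $ac,bc$), for which $\{\{a,b,c\}\}$ is optimal with cost $1$. The only triangle is an internal $(++-)$, so the left-hand side is $6=r_3^{++-}$. Meanwhile $\sum_{u}|C\setminus N(u)|^2=1+1+0=2$, which makes the right-hand side $18$. More generally, $K_n$ minus one edge gives $6(n-2)$ against $6(n-2)+12$.

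Second, your fallback with coefficient $3$ is not established either. By the paper's convention the sums run over the six orderings of each triangle, and $g$ is only defined on triangles (\cref{tab:costs-and-amortisation}). So there are no degenerate triples $(u,v,v)$ to pay for the diagonal pairs, and in the path example coefficient $3$ still gives $12>6$. The correct weakened statement is the identity above, equivalently coefficient $3$ on $|C\setminus N(u)|\,(|C\setminus N(u)|-1)$.

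This error also propagates. The proof of \cref{thm:pivot-without-good-clusters} uses the term $6|C\setminus N(u)|^2$ in its denominator, so that bound has to be redone with the smaller quantity.
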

\begin{proof}
    By counting, every $++-$ completely internal in an optimal cluster contributes $g(u,v,w)=1$ to the lhs of \cref{eq:r3-extra-counting}. This is the quantity $r_3^{++-}$. For the expression $\sum_{u\in C}|C\setminus N(u)|^2$, this counts the number of pairwise incident $-$ edges to vertices internal in $C$, and so every $(--+)$ triangle is counted once, each contributing $g(u,v,w)=1$ to the lhs, and every $(---)$ is counted 3 times, but it contributed $g(u,v,w)=3$ to the lhs. Finally, since the lhs counts each triangle $6$ times, we get that the contribution of $(---)$ and $(--+)$ triangles is $\sum_{C\in\mathcal{C}^*}\sum_{u\in C}6|C\setminus N(u)|^2$.
\end{proof}

\begin{lemma}\label{lem:g-ge-r2}
    Let $\mathcal{C}^*$ be an optimal clustering. Then
    \begin{equation}\label{eq:r2-extra-counting}
        2\sum_{C\in\mathcal{C}^*}\sum_{u,v\in C, w\in V\setminus C}g(u,v,w)\ge r_2^{++-} + s_2^{-++}+4\sum_{C\in\mathcal{C}^*}\sum_{u\in C}|C\setminus N(u)||N(u)\setminus C|
    \end{equation}
\end{lemma}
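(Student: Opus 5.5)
The plan is a direct per-triangle counting argument, in the same style as the proof of \cref{lem:g-ge-r3}. Only triangles with exactly two vertices $u,v$ in a common optimal cluster $C$ and the third vertex $w\notin C$ contribute to either side. First I will check multiplicities on the left-hand side. The sum $\sum_{u,v\in C,\,w\in V\setminus C}$ ranges over ordered pairs $(u,v)$, so each such unordered triangle appears twice, and the factor $2$ makes every triangle count exactly $4$ times, with weight $g$. This matches the multiplicity $4$ used for $r_2^{++-}$ and $s_2^{-++}$. It therefore suffices to show, triangle by triangle, that $4g(u,v,w)$ is at least the triangle's contribution to the right-hand side. All $g$ values are nonnegative, so triangles contributing nothing on the right can be ignored.

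I will split by the edge pattern, using the middle block of \cref{tab:costs-and-amortisation}, where edge $1$ is $uv$ (internal), edge $2$ is $vw$, and edge $3$ is $wu$.

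\emph{Case $uv$ is $+$.} The triangles with $\cost=3=3g$ are the $(++-)$ ones, i.e.\ $uv$ positive and exactly one of $uw,vw$ positive; these have $g=1$. They are counted $4$ times in $r_2^{++-}$ and receive $4\cdot 1=4$ on the left, so they balance. They contribute nothing to $s_2^{-++}$, and nothing to the term $\sum_{u\in C}|C\setminus N(u)||N(u)\setminus C|$, since that term only counts configurations where the internal edge $uv$ is $-$.

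\emph{Case $uv$ is $-$.} These triangles are never counted by $r_2^{++-}$ in this configuration: the pattern $-++$ has $g=3$, so $\cost=3\neq 3g$.
\begin{itemize}
\item If both $uw$ and $vw$ are $+$, the triangle is counted $4$ times by $s_2^{-++}$. In $\sum_{u\in C}|C\setminus N(u)||N(u)\setminus C|$ it is counted twice, once from $u$ and once from $v$, giving $4\cdot 2=8$. The right-hand side is thus $12$, while the left gives $4\cdot g=4\cdot 3=12$.
\item If exactly one of $uw,vw$ is $+$, the triangle is counted once in the sum, namely from the endpoint adjacent to $w$, giving $4$. The left gives $4\cdot g(-+-)=4\cdot 1=4$.
\item If both $uw$ and $vw$ are $-$, nothing is counted on the right and $g=0$.
\end{itemize}
Summing over all triangles then yields \eqref{eq:r2-extra-counting}.

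The only delicate point is the bookkeeping of multiplicities: ordered pairs on the left versus the conventions in the definitions of $r_2^{++-}$ and $s_2^{-++}$, and how many times a $(-++)$ triangle is seen by $|C\setminus N(u)||N(u)\setminus C|$. The $(-++)$ case is where the inequality is tight, so I would double-check that case carefully against the table.
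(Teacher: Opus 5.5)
Your proposal is correct and uses the same per-triangle counting as the paper. Each triangle with exactly two vertices in a common optimal cluster is counted $4$ times on the left; the $(++-)$ triangles with $g=1$ account for $r_2^{++-}$; the $(-+-)$ triangles are seen once by $\sum_{u\in C}|C\setminus N(u)||N(u)\setminus C|$ with $g=1$; and the $(-++)$ triangles are seen twice with $g=3$, the extra unit giving $s_2^{-++}$. Your explicit check that these patterns balance exactly, with slack only from $(+++)$ triangles, makes the same argument a little more carefully.
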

\begin{proof}
    By counting, each $(++-)$ triangle with two vertices in the same optimal cluster contributes $g(u,v,w)=1$ to the lhs of \cref{eq:r2-extra-counting}, which gives the term $r_2^{++-}$.
    For the expression $\sum_{C\in\mathcal{C}^*}\sum_{u\in C}|C\setminus N(u)||N(u)\setminus C|$, this counts the number of pairs of $+$ and $-$ edges such that the $+$ is outside $C$ and the $-$ is inside $C$. Each one of these is either a $(-+-)$ or a $(-++)$ triangle (recall that the ordering of edges matter, see \cref{tab:costs-and-amortisation}). Each $(-+-)$ triangle is counted once by this expression and contributes $g(u,v,w)=1$ to the lhs, while each $(-++)$ triangle is counted twice, and contributes $g(u,v,w)=3$ to the lhs, giving an additional $s_2^{-++}$. Finally, since each triangle on the left is counted $4$ times, we get that the contribution of $(-+-)$ and $(-++)$ triangles is $4\sum_{C\in\mathcal{C}^*}\sum_{u\in C}|C\setminus N(u)||N(u)\setminus C|$.
\end{proof}

\begin{lemma}\label{lem:g-ge-r1}
    Let $\mathcal{C}^*$ be an optimal clustering. Then
    \begin{equation}\label{eq:r1-extra-counting}
        \sum_{C\in\mathcal{C}^*}\sum_{u\in C, v, w\in V\setminus C}g(u,v,w)\ge r_1^{++-} + s_1^{+-+} + 2\sum_{C\in\mathcal{C}^*}\sum_{u\in C}|N(u)\setminus C|^2
    \end{equation}
\end{lemma}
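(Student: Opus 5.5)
This will be proved by the same local counting used for \cref{lem:g-ge-r3,lem:g-ge-r2}, now applied to the third block of \cref{tab:costs-and-amortisation}: triples with $u\in C$ and $v,w\in V\setminus C$. The sum on the left ranges over ordered triples whose first coordinate is the designated vertex $u$ in $C$. Swapping $v$ and $w$ gives another summand, so each such configuration is counted twice. When all three vertices lie in distinct optimal clusters, the triangle appears once for each choice of designated vertex, i.e.\ $3\cdot 2=6$ times in total. When exactly one vertex is separated from the other two, it appears with multiplicity $2$, with that vertex designated.

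I would first bound $\sum_{C\in\mathcal{C}^*}\sum_{u\in C}|N(u)\setminus C|^2$. It counts ordered pairs $(v,w)$ of vertices outside $C$ that are both $+$-adjacent to $u$. In the labelling of \cref{tab:costs-and-amortisation}, these are exactly the patterns $(+ \ast +)$: namely $(+++)$ and $(+-+)$, including the degenerate case $v=w$, which contributes nothing to either side. Both of these patterns have $g\ge 2$. Every other configuration contributing to the left side has $g\ge 0$, so it can be dropped.

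\textbf{Step 1.} The $(+++)$ triangles contribute $g=2$ for each of the $2$ orderings, matching $2\cdot$(count) exactly, and they are not counted in $r_1^{++-}$ or $s_1^{+-+}$. So for these configurations the left side equals exactly their share of $2\sum|N(u)\setminus C|^2$.

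\textbf{Step 2.} The $(+-+)$ configurations (vertex $u$ separated, incident to both $+$ edges) have $g=2$ but $\cost=3$. So $2g$ covers the $2\cdot|N(u)\setminus C|^2$ term, but nothing remains for $r_1^{++-}$ and $s_1^{+-+}$.

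\textbf{Step 3.} These are exactly the triangles counted by $s_1^{+-+}$. The $r_1^{++-}$ triangles are the $(++-)$ configurations with $u$ incident to one $+$ and the $-$ edge, which have $g=1$ and $\cost=3=3g$. Each such configuration is counted twice on the left, contributing $2$, and $r_1^{++-}$ counts it twice with weight $1$, so the $(++-)$ rows exactly give $r_1^{++-}$. These rows are disjoint from the $(+\ast+)$ rows, so the two contributions add.

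The main obstacle is the $s_1^{+-+}$ term, since on the $(+-+)$ rows we seem to need $g\ge 2+1$ in some amortised sense. I would first check the multiplicity convention in the definition of $s_1^{+-+}$: it is defined without an explicit multiplicity, and may effectively be counted once per triangle rather than per ordering. If so, I would try to absorb it using the slack in the Step~1 accounting. Failing that, I would use the $(++-)$ configurations counted from the other endpoints: the same $(+-+)$ triangle, viewed with another vertex in its own cluster, appears in another cluster's sum as a $(++-)$ pattern with $g=1$, and that contribution is not yet used. This is precisely the cross-cluster double counting that \cref{clm:r1r2r3-bound-r1c} exploits. Summing over all clusters $C\in\mathcal{C}^*$ then yields the claimed inequality.
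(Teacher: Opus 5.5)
\textbf{There is a genuine gap: the $s_1^{+-+}$ term is never paid for.} Your bookkeeping for the $(+++)$ and $(++-)$ rows is correct, and you rightly isolate $s_1^{+-+}$ as the crux. But none of your fallbacks can cover it.

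\begin{itemize}
\item Changing the multiplicity convention does not help, since any positive count still needs an extra source.
\item Step~1 has no slack: you showed yourself that $(+++)$ matches its share of $2|N(u)\setminus C|^2$ exactly.
\item The $(++-)$ views of the triangle from its other two vertices are not ``unused''. In Step~3 they are exactly what pays for $r_1^{++-}$. Moreover, when $v$ and $w$ lie in a common optimal cluster, the triangle has no view from $v$ or $w$ in the third block of \cref{tab:costs-and-amortisation}. It appears instead in the second block as a $(-++)$ pattern, whose $g=3$ is fully spent on $s_2^{-++}$ and the $4|C\setminus N(u)||N(u)\setminus C|$ term in \cref{lem:g-ge-r2}.
\item \cref{clm:r1r2r3-bound-r1c} is an \emph{upper} bound on $r_1^{++-}+s_1^{+-+}$, used on the cost side, so it cannot give a lower bound on $\sum g$.
\end{itemize}

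Concretely, take a $(+-+)$ triangle whose three vertices lie in three different clusters. The table gives left-hand contributions $4+2+2=8$. These are used up exactly by $4$ from $2|N(u)\setminus C|^2$ and $4$ from $r_1^{++-}$. If $s_1^{+-+}$ counted this triangle, nothing would cover it.

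\textbf{The missing idea} is the paper's case split on whether the endpoints $v,w$ of the $-$ edge share an optimal cluster.

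\emph{Same cluster.} The edge $vw$ is an internal $-$ edge, so the optimal clustering violates it. It is also removed whenever $u$ is the pivot, since $v,w\in N(u)$. Together with the two crossing $+$ edges, the decrease attributable to the triangle is $3$, not the $2$ listed in \cref{tab:costs-and-amortisation}. That table entry is correct only when $v,w$ are in different clusters, and using the true value $\optv=3$ is harmless for \cref{lem:g-le-opt}. Each of the two orderings then contributes $3$: $2$ pays for the pairs $(v,w),(w,v)$ in $2|N(u)\setminus C|^2$, and the remaining $1$ pays for $s_1^{+-+}$.

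\emph{Different clusters.} The triangle is treated as not counted by $s_1^{+-+}$, and the computation above shows this is forced.

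So your Step~3 identification of all $(+-+)$ rows with $s_1^{+-+}$ must be restricted to the same-cluster case. The extra unit comes from the triangle's own $g$, not from other configurations.

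\textbf{A smaller point:} the case $v=w$ is not harmless. $|N(u)\setminus C|^2$ counts the diagonal pairs, and nothing on the left pays for them. The squares must therefore be read, as the paper implicitly does, as counting ordered pairs of distinct vertices.
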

\begin{proof}
    By counting, each $(++-)$ triangle with one vertices in the same optimal cluster contributes $g(u,v,w)=1$ to the lhs of \cref{eq:r1-extra-counting}, which gives the term $r_1^{++-}$.
    For the expression $\sum_{C\in\mathcal{C}^*}\sum_{u\in C}|N(u)\setminus C|^2$, this counts the number of pairs of $+$ edges such that the $+$s is outside $C$. Each one of these is either a $(+++)$ or a $(+-+)$ triangle (Recall that the ordering matters, see \cref{tab:costs-and-amortisation}). Each $(+++)$ and $(+-+)$ triangle is then counted twice per orientation that has a vertex alone in an optimal cluster. Finally, to see the $s_1^{+-+}$ term, observe that for every $(+-+)$ triangle, one of two cases occurred. Either the other two vertices was in different clusters, in which case the triangle was never counted, or the two vertices was in the same cluster, in which case the contribution of the triangle was actually $3$ and not $2$. In both cases, we cover the number of such triangles.
\end{proof}

\begin{lemma}\label{lem:cost-to-ri}
    $$\sum_{u,v,w\in V}\cost(u,v,w)\le 3a^{++-}$$
\end{lemma}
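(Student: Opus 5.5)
The plan is to argue term by term over ordered triples, using that $\cost(u,v,w)$ is nonzero only on triangles with exactly two $+$ edges and one $-$ edge. By \cref{tab:costs-and-amortisation}, every such triangle has $\cost=3$ in every orientation and in every placement relative to the optimal clusters. Hence
\begin{equation*}
\sum_{u,v,w\in V}\cost(u,v,w) = 3\cdot \#\{\text{ordered } (u,v,w) \text{ forming a triangle with two $+$ and one $-$ edge}\},
\end{equation*}
so each unordered such triangle contributes exactly $18$. It then suffices to show that every unordered triangle with two $+$ edges and one $-$ edge is counted at least $6$ times in $a^{++-}=r^{++-}+s_1^{+-+}+s_2^{-++}$.

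I would split according to how the three vertices of such a triangle are distributed among the clusters of $\mathcal{C}^*$, and identify which row of the table it falls in.

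\textbf{All three in one cluster.} The triangle is the internal $(++-)$ row with $\cost=3=3g$. It is counted $6$ times in $r_3^{++-}$.

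\textbf{Exactly two in one cluster $C$, third vertex outside.} The labelling of the table is by the edge between the two inside vertices, then the other two edges.
\begin{itemize}
\item If the inside edge is $+$, this is the $(++-)$ row with $\cost=3=3g$. The triangle is counted $4$ times in $r_2^{++-}$ and $2$ times in $r_1^{++-}$ (for the singleton side), for $6$ in total.
\item If the inside edge is $-$, this is the $(-++)$ row with $g=3$. It is excluded from $r^{++-}$ but counted $4$ times in $s_2^{-++}$. Viewed from the cluster of the lone outside vertex, it is a $(+-+)$ triangle whose apex (the vertex incident to both $+$ edges) is alone. So it should also be counted by $s_1^{+-+}$; I intend $s_1^{+-+}$ to count each such triangle with multiplicity $2$, matching the way \cref{lem:g-ge-r1} and \cref{clm:r1r2r3-bound-r1c} use it.
\end{itemize}

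\textbf{All three in different clusters.} Each vertex is alone in its cluster, so the triangle appears in the one-vertex rows. From the two vertices whose opposite edge is $+$, it is read as $(++-)$, contributing to $r_1^{++-}$ twice each, i.e.\ $4$. From the apex it is read as $(+-+)$, contributing $2$ to $s_1^{+-+}$. The total is $6$.

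The main obstacle is the bookkeeping of multiplicities in the definitions, which are stated loosely, especially the multiplicity of $s_1^{+-+}$ and whether the $(-++)$ triangles in the mixed case reach $6$. I will fix the convention consistently with \cref{clm:r1r2r3-bound-r1c}: $s_1^{+-+}$ counts with multiplicity $2$, once per orientation. With that convention, $4+2=6$ holds in every case. Since only a lower bound of $6$ is needed, any overcounting is harmless. Summing over triangles then gives $\sum\cost\le 18\cdot\#\text{triangles}\le 3a^{++-}$.
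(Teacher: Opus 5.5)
Your proof is correct and follows essentially the paper's own argument. The positive-$\cost$ triangles are exactly those with two $+$ edges and one $-$ edge, each contributing $18$ over ordered triples, and each is counted $6$ times in $a^{++-}$: $6$ in $r_3^{++-}$, $4+2$ in $r_2^{++-}+r_1^{++-}$, $4+2$ in $s_2^{-++}+s_1^{+-+}$, or $4+2$ in $r_1^{++-}+s_1^{+-+}$. Your explicit treatment of the all-different-clusters case makes this more precise than the paper's one-line proof. Your choice of multiplicity $2$ for $s_1^{+-+}$ is also the intended one (it is how \cref{clm:r1r2r3-bound-r1c} counts it), and it is needed: with multiplicity $1$, such triangles would be counted only $5$ times and the inequality can fail.
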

\begin{proof}
    From \cref{tab:costs-and-amortisation}, the only triangles that are not counted by $r^{++-}=r_1^{++-}+r_2^{++-}+r_3^{++-}$ according to \cref{def:ppm-count-ri} are the $(-++)$ triangles with two vertices inside an optimal cluster and $(+-+)$ triangles with one vertex inside an optimal cluster. These are actually the same triangles, and so can be bounded by $3\sum_{C\in \mathcal{C}^*}\sum_{u\in C}|N(u)\setminus C|^2$ to count them a total of $6$ times.
\end{proof}

\begin{lemma}\label{lem:avg-sqr}
    Let $a_1,\ldots,a_n \in \mathbb{R}$. Then
    $$\sum_{i=1}^n a_i^2 \ge \frac{1}{n}\left(\sum_{i=1}^na_i\right)^2$$
\end{lemma}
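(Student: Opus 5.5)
This is the Cauchy--Schwarz inequality (equivalently, the quadratic mean dominates the arithmetic mean), and I would prove it directly by expanding a sum of squares of deviations from the mean. Let $\mu = \frac{1}{n}\sum_{i=1}^n a_i$. The key step is the observation
$$0 \le \sum_{i=1}^n (a_i-\mu)^2 = \sum_{i=1}^n a_i^2 - 2\mu\sum_{i=1}^n a_i + n\mu^2 .$$
Then I substitute $\sum_{i=1}^n a_i = n\mu$. The right-hand side becomes $\sum_i a_i^2 - 2n\mu^2 + n\mu^2 = \sum_i a_i^2 - n\mu^2$. Since $n\mu^2 = \frac{1}{n}\left(\sum_i a_i\right)^2$, rearranging gives exactly the claimed inequality.

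An alternative I could mention is to apply Cauchy--Schwarz to the vectors $(a_1,\ldots,a_n)$ and $(1,\ldots,1)$. This gives $\left(\sum_i a_i\right)^2 \le n\sum_i a_i^2$, and dividing by $n>0$ yields the statement.

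There is no real obstacle here. The only point to note is that $n\ge 1$, so dividing by $n$ is legitimate. Equality holds exactly when all $a_i$ are equal, which is presumably how the lemma will be used later: to lower bound quantities such as $\sum_{u\in C}|C\setminus N(u)|^2$ in terms of $\bigl(\sum_{u\in C}|C\setminus N(u)|\bigr)^2/|C|$, which in turn relates to the good-on-average condition.
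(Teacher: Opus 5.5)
Your proof is correct and is essentially the paper's argument. The paper writes the gap as $\frac{1}{n}\sum_{i<j}(a_i-a_j)^2$, which equals your $\sum_i (a_i-\mu)^2$, just expressed through pairwise differences rather than deviations from the mean, and your derivation also avoids a typo in the paper's intermediate step, where $(n+1)\sum_i a_i^2$ should read $(n-1)\sum_i a_i^2$.
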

\begin{proof}
    \begin{align*}
        \sum_{i=1}^n a_i^2 - \frac{1}{n}\left(\sum_{i=1}^na_i\right)^2 &= \frac{1}{n}\left((n+1)\sum_{i=1}^na_i^2 - 2\sum_{i<j}a_ia_j\right)\\
        &= \frac{1}{n}\sum_{i<j}(a_i-a_j)^2 \ge 0
    \end{align*}
\end{proof}

\pivotnogood
\begin{proof}
    First, the running time is trivial. To show the approximation ratio, we are going to compute an upper-bound on the ratio $\alg_t/\opt_t$. Using \cref{lem:cost-ge-alg,lem:g-le-opt}, we have
    $$\frac{\E[\alg_t]}{\E[\opt_t]}\le \frac{\sum_{u,v,w\in V}\cost(u,v,w)}{\sum_{u,v,w\in V}g(u,v,w)}$$
    Then using \cref{lem:cost-to-ri,lem:g-ge-r3,lem:g-ge-r2,lem:g-ge-r1} we have
    $$\frac{\sum_{u,v,w\in V}\cost(u,v,w)}{\sum_{u,v,w\in V}g(u,v,w)}\le \frac{3a^{++-} }{a^{++-} + \sum_{C\in\mathcal{C}^*}\sum_{u\in C}6|C\setminus N(u)|^2+4|C\setminus N(u)||N(u)\setminus C|+2|N(u)\setminus C|^2}$$

    Since the lhs is at most $3$, increasing $a^{++-}$ will only bring it closer to $3$. So we can apply \cref{lem:r1r2r3-bound} and  rewrite and collect into squares to get 
    \begin{align*}
        \frac{\sum_{u,v,w\in V}\cost(u,v,w)}{\sum_{u,v,w\in V}g(u,v,w)}\le 3-\frac{\sum_{C\in \mathcal{C}}\sum_{u\in C}6(|N(u)\setminus C| + |C\setminus N(u)|)^2}{\sum_{C\in \mathcal{C}}\sum_{u\in C}5(|N(u)\setminus C| + |C\setminus N(u)|)^2 + 6(|N(u)\setminus C| + |C\setminus N(u)|)|C|}.
    \end{align*}
    By applying \cref{lem:avg-sqr} and grouping based on the optimal clusters, where $|N(u)\setminus C| + |C\setminus N(u)| = |N(u)\triangle C|$ we finally have
    $$\frac{\sum_{u,v,w\in V}\cost(u,v,w)}{\sum_{u,v,w\in V}g(u,v,w)}\le 3-\frac{\sum_{C\in \mathcal{C}}\frac{6}{|C|}\left(\sum_{u\in C}|N(u)\triangle C|\right)^2}{\sum_{C\in \mathcal{C}}\frac{5}{|C|}\left(\sum_{u\in C}|N(u)\triangle C|\right)^2 + 6(\sum_{u\in C}|N(u)\triangle C|)|C|}.$$
    By now comparing per cluster, we have for every cluster $C\in \mathcal{C}^*$ that
    \begin{align*}
       \frac{\frac{6}{|C|}\left(\sum_{u\in C}|N(u)\triangle C|\right)^2}{\frac{5}{|C|}\left(\sum_{u\in C}|N(u)\triangle C|\right)^2 + 6(\sum_{u\in C}|N(u)\triangle C|)|C|} &= \frac{6\left(\sum_{u\in C}|N(u)\triangle C|\right)}{5\left(\sum_{u\in C}|N(u)\triangle C|\right) + 6|C|^2}\\
       &\ge \frac{6\eps^2}{5\eps^2+ 6}
    \end{align*}
    with the final inequality using that there are no $\eps^2$-good-on-average clusters and the definition \cref{def:avg-good}. From this, we can finally conclude that 
    \begin{align*}
        \frac{\sum_{u,v,w\in V}\cost(u,v,w)}{\sum_{u,v,w\in V}g(u,v,w)}
        &\le 3-\frac{\eps^2}{\frac{5}{6}\eps^2 + 1}
    \end{align*}
    $$$$
\end{proof}

\section{Locating very good clusters}
\label{sec:locate-good-cluster}

In this section, we are going to show how to locate all very good clusters as the vertices are deleted.
We model the task as follows. There is a graph $G$ and an adaptive adversary. In each round, our algorithm needs to decide whether there is a good cluster, then the adversary removes one or more vertices. In addition, if the answer is yes, the algorithm should return a good cluster $K$, and the adversary must remove all vertices from $K$.
The algorithm should satisfy the following properties in each round (with high probability):
\begin{itemize}
    \item If it says yes, the returned cluster must be ``good enough'';
    \item If it says no, there must not exist ``very good'' clusters.
\end{itemize}

First, let us deal with all initial good clusters by a simple reduction. We assume that every vertex $v$ has $\Theta(d(v))$ unique neighbors that were deleted in the 0-th round, i.e., before the algorithm starts.
With this assumption, intuitively speaking, good clusters can only appear during vertex deletions. It will be formally used when we prove \cref{lem:hit-good-cluster}.

No matter which goodness definition (\cref{def:good,def:avg-good}) we use, it is straightforward to see that any cluster will not get worse when other vertices are deleted. Furthermore, a cluster can only get better when some of its neighbors are deleted.
So, after deleting a vertex $u$, we only need to check at each neighbor $v\in N(u)$ whether there is a good cluster containing $v$ (and locate it if there is).
To do this, our basic idea is to run the following $\clean$ procedure for the neighborhood $N(v)$, which was introduced by \cite{cao2024understanding} as a preprocessing step. We note that $\alpha$ and $\beta$ are small constants that will be determined later.

\begin{algorithm}[H]
\caption{$\clean[\alpha,\beta](C)$}
    \begin{algorithmic}[1]
        \State $K\gets\{u\in C \mid |N(u)\triangle C| \le \alpha|C|\}$
        \If{$|K| \ge (1-\beta)|C|$}
            \State \Return $K$
        \Else
            \State \Return $\emptyset$
        \EndIf
    \end{algorithmic}
\end{algorithm}

We first show that the cluster returned by the $\clean$ procedure is always ``good enough''.

\begin{lemma}
\label{lem:clean}
    Let $K=\clean[\alpha,\beta](C)$. If $K\neq\emptyset$, then $K$ is $\frac{\alpha+\beta}{1-\beta}$-good.
\end{lemma}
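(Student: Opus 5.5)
We must show that for every $u\in K$, $|N(u)\triangle K|\le \frac{\alpha+\beta}{1-\beta}|K|$. The plan is to compare $N(u)\triangle K$ with $N(u)\triangle C$ using the triangle inequality for symmetric differences. That bound is controlled by the filtering condition defining $K$ and by the size condition $|K|\ge(1-\beta)|C|$, which holds because $K\neq\emptyset$.

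First, since $K\neq\emptyset$, the procedure returned at line 3. Hence $K\subseteq C$ and $|K|\ge(1-\beta)|C|$, so $|C\setminus K| = |C|-|K|\le \beta|C|$. Next, fix any $u\in K$. By definition of $K$ we have $|N(u)\triangle C|\le\alpha|C|$. Since symmetric difference satisfies $A\triangle B\subseteq (A\triangle D)\cup(D\triangle B)$, we get
\begin{equation*}
|N(u)\triangle K|\le |N(u)\triangle C| + |C\triangle K| = |N(u)\triangle C| + |C\setminus K| \le \alpha|C|+\beta|C|.
\end{equation*}
Finally, we convert to a bound relative to $|K|$ using $|C|\le |K|/(1-\beta)$, which gives $|N(u)\triangle K|\le\frac{\alpha+\beta}{1-\beta}|K|$, as required by \cref{def:good}.

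There is no real obstacle here. The only points needing care are that $K\subseteq C$, so that $C\triangle K=C\setminus K$, and that $\beta<1$, so the division by $1-\beta$ is valid. If one also wants to account for $u$ belonging to its own neighbourhood $N(u)$, note that $u\in K\subseteq C$, so this convention is consistent and changes nothing.
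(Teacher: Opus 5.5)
Your proposal is correct and follows essentially the same argument as the paper: bound $|C\triangle K|\le\beta|C|$ using $K\subseteq C$ and $|K|\ge(1-\beta)|C|$, apply the triangle inequality $|N(u)\triangle K|\le|N(u)\triangle C|+|C\triangle K|\le(\alpha+\beta)|C|$, then convert via $|C|\le|K|/(1-\beta)$. Your remarks about $C\triangle K=C\setminus K$ and needing $\beta<1$ are fine extra care that the paper leaves implicit.
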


\begin{proof}
    Since $|K|\ge(1-\beta)|C|$ and $K\subseteq C$, we have $|C\triangle K|\le \beta|C|$.
    Then, for any $v\in K$, we have $|N(v)\triangle K|\le|N(v)\triangle C|+|C\triangle K|\le (\alpha+\beta)|C| \le \frac{\alpha+\beta}{1-\beta}|K|$.
\end{proof}

However, it will take $O(d^2(v))$ time to run the $\clean$ procedure, so we don't want to do it every time. The next lemma shows that, to make sure that we can find all ``very good'' clusters, we actually have many chances. We note that $\eps$ and $\delta$ are small constants that will be determined later. The proof is deferred to \cref{app:hit-good-cluster}.

\begin{definition}
    A vertex $v$ is $(\alpha,\beta)$-good if $\clean[\alpha,\beta](N(v))\neq\emptyset$.
\end{definition}

\begin{definition}
    A vertex $v$ is $(\alpha,\beta)$-hit when we delete one of its neighbors if it is $(\alpha,\beta)$-good after the deletion.
\end{definition}

\begin{lemma}
\label{lem:hit-good-cluster}
    When there is an $\eps^2$-good-on-average cluster $C$, there exists a vertex $v\in C$ such that $v$ has been $\left(\frac{2\eps+2\delta}{1-\eps},\frac{\frac{2\eps}{1-\eps}+\delta}{1+\delta}\right)$-hit $\lceil\delta d(v)\rceil$ times when its degree\footnote{When we talk about degrees when a deletion occurs, we always refer to degrees right after the deletion.} was less than $(1+\delta)d(v)$.
    Furthermore, $v$ was always being $\left(\frac{2\eps+2\delta}{1-\eps},\frac{\frac{2\eps}{1-\eps}+\delta}{1+\delta}\right)$-good since the first of these hits.
\end{lemma}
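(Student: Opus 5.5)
The plan has three parts: use averaging to find a good core of $C$, pick a vertex in that core, and then run time backwards over the deletions of its neighbours. Write $C$ for the current $\eps^2$-good-on-average cluster and set $e(u)=|N(u)\triangle C|$, so that $\sum_{u\in C}e(u)\le\eps^2|C|^2$.

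\textbf{Finding the core.} By Markov, the set $C'=\{u\in C: e(u)\le\eps|C|\}$ satisfies $|C'|\ge(1-\eps)|C|$. Take any $v\in C'$. Its current neighbourhood then satisfies $|N(v)\triangle C|\le\eps|C|$, so $d(v)\ge(1-\eps)|C|$ and $d(v)\le(1+\eps)|C|$.

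\textbf{Checking that $v$ is good right now.} Run $\clean$ on $N(v)$. Any $u\in C'\cap N(v)$ has
\[
|N(u)\triangle N(v)|\le e(u)+e(v)\le 2\eps|C|\le\frac{2\eps}{1-\eps}d(v).
\]
Moreover $|N(v)\setminus C'|\le|N(v)\triangle C|+|C\setminus C'|\le 2\eps|C|\le\frac{2\eps}{1-\eps}d(v)$. So $v$ is $\left(\frac{2\eps}{1-\eps},\frac{2\eps}{1-\eps}\right)$-good at present, which is stronger than the parameters in the lemma once $\delta>0$ slack is allowed.

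\textbf{Going back in time.} Let the current degree of $v$ be $d=d(v)$ and trace back through the deletions of neighbours of $v$. By the reduction assumption, $v$ originally had $\Theta(d(v))$ extra neighbours deleted in round $0$, so its degree was once at least $(1+\delta)d$ (taking the hidden constant large enough relative to $\delta$). Hence there were at least $\lceil\delta d\rceil$ deletions of neighbours of $v$ at times when its degree right after the deletion lay in $[d,(1+\delta)d)$. Consider these last $\lceil\delta d\rceil$ deletions. At each such earlier time the neighbourhood $N'(v)$ equals the current $N(v)$ plus at most $\delta d$ extra vertices, and likewise for every $u$; a cluster only improves under deletion, so going back in time it only worsens by the added vertices.

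\textbf{Checking the parameters at those earlier times.} Let $u\in C'\cap N(v)$. Its $\triangle$ with $N'(v)$ grows by at most the extra vertices of $N'(v)$ (at most $\delta d$) plus the extra neighbours of $u$ outside $N(v)$. The latter are a problem, but every extra neighbour of $u$ that is not in $N'(v)$ was also deleted before now; I think we only need vertices deleted after that time, so the back-added ones lie within $N'(v)$. Using the bound $|N(u)\triangle N(v)|\le 2\eps|C|$ and $d'\ge d\ge(1-\eps)|C|$, I aim to show
\[
|N'(u)\triangle N'(v)|\le(2\eps+2\delta)|C|\le\frac{2\eps+2\delta}{1-\eps}d'.
\]
The non-core part of $N'(v)$ is at most $\frac{2\eps}{1-\eps}d+\delta d$, and dividing by $d'\ge d$ gives the $\beta$ parameter. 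The denominator $(1+\delta)$ is meant to arise from normalising by $d'\le(1+\delta)d$; I would carry it through by expressing things in terms of $|K|$ versus $d'$.

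\textbf{Main obstacle.} The crux is controlling $N'(u)$ for core vertices $u$ at earlier times. Their neighbours outside $N(v)$ that were deleted in the meantime can inflate $|N'(u)\triangle N'(v)|$. I would handle this by counting only symmetric differences against $N'(v)$ and bounding by $|N(u)\triangle N(v)|$ plus the number of deleted vertices, which must be at most $\delta d$ within the window. The definition of a hit, however, counts only deletions of neighbours of $v$, whereas $u$ can lose non-neighbours of $v$. If this fails, I would instead choose the window and $v$ via averaging over $C'$, picking the vertex whose core-neighbours lost the fewest outside vertices. The final "always good since the first hit" claim follows by applying the same argument at every intermediate time in the window.
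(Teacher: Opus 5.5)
\textbf{There is a genuine gap, and it is the one you flag yourself as the main obstacle.}

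You take an arbitrary core vertex $v$ and look at the last $\lceil\delta d(v)\rceil$ deletions of its neighbours. That window is defined only by deletions adjacent to $v$. So nothing bounds how many neighbours a \emph{different} core vertex $u$ has lost between the window and now. Suppose $v$'s window lies far in the past, and afterwards the core vertices lose many neighbours outside $N(v)$. Then at those earlier times $|N'(u)\triangle N'(v)|$ can greatly exceed $\alpha d'(v)$, and $v$ need not have been good at all.

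The argument breaks at two specific points:
\begin{itemize}
\item Your phrase ``and likewise for every $u$'' is unjustified.
\item Your claim that the back-added neighbours of $u$ lie in $N'(v)$ is false: a vertex deleted in the meantime may be adjacent to $u$ but not to $v$.
\end{itemize}
Your fallback of choosing $v$ by averaging over $C'$ is not developed into an actual argument.

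\textbf{The missing idea is an extremal choice of $v$.} For each core vertex $u$, let $t(u)$ be the moment its degree first dropped below $(1+\delta)d(u)$, where $d(u)$ is its current degree; this exists by the round-$0$ assumption. Let $v$ be the core vertex with the \emph{latest} $t(v)$.

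From $t(v)$ on, $v$ suffers at least $\lceil\delta d(v)\rceil$ neighbour deletions. Throughout this window every core $u$ satisfies $N(u)\subseteq N'(u)$ and
\[
|N'(u)\setminus N(u)|<\delta d(u)\le\delta\tfrac{1+\eps}{1-\eps}d(v).
\]
This gives
\[
|N'(u)\triangle N'(v)|\le 2\eps|C|+\delta d(u)+\delta d(v)\le\frac{2\eps+2\delta}{1-\eps}\,d(v)\le\frac{2\eps+2\delta}{1-\eps}\,d'(v).
\]
It also gives
\[
|N'(v)\cap C'|\ge|N(v)\cap C'|\ge\bigl(1-\tfrac{2\eps}{1-\eps}\bigr)d(v)\ge\frac{1-\frac{2\eps}{1-\eps}}{1+\delta}\,d'(v).
\]
This is precisely the stated $\beta$-parameter, and it holds at every moment since the first of these hits.

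\textbf{Two smaller points.}
\begin{itemize}
\item Your target $(2\eps+2\delta)|C|$ is not what these estimates yield. The constants come out right only after normalising by $d(v)$ and using $d(u)\le\frac{1+\eps}{1-\eps}d(v)$.
\item Bounding the non-core part of $N'(v)$ by $\frac{2\eps}{1-\eps}d+\delta d$ and dividing by $d'\ge d$ only gives the weaker parameter $\frac{2\eps}{1-\eps}+\delta$. You need to lower-bound the core part against $d'<(1+\delta)d$ instead.
\end{itemize}
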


As a vertex needs to be hit $\Omega(d(v))$ times for a ``very good'' cluster to appear, we don't need to guarantee that we can deterministically find a ``good enough'' cluster when hitting it. Instead, we can afford to miss it with probability $e^{-\Omega(\log n/d(v))}$ for each hit.
So, before running the $\clean$ procedure, we first estimate whether it will succeed by sampling vertices from $N(v)$ and estimating $|N(u)\triangle N(v)|$ for each sample $u$.
Only when the following $\check$ procedure returns True, we actually run the $\clean$ procedure.

\begin{algorithm}[H]
\caption{$\check[\alpha,\beta,\gamma](v)$}
    \begin{algorithmic}[1]
        \State $k \gets 0$
        \State $\eta \gets \frac{16}{\beta\gamma^2}\ln d(v)$
        \For{$i=1,2,\dots,\eta$}
            \State $u_i \gets$ a uniformly random sample from $N(v)$
            \State $k' \gets 0$
            \State $\eta' \gets \frac{8}{\alpha\gamma^2}\ln\frac{2}{\beta\gamma}$
            \For{$j=1,2,\dots,\eta'$}
                \State $w_j \gets$ a uniformly random sample from $N(v)$
                \If{$w_j\notin N(u_i)$}
                    \State $k' \gets k'+1$
                \EndIf
            \EndFor
            \If{$k'>\frac{1}{2}\left(1+\alpha(1-\gamma)-\frac{d(u_i)}{d(v)}\right)\eta'$}
                \State $k \gets k+1$
            \EndIf
        \EndFor
        \If{$k>\beta(1-\gamma)\eta$}
            \State \Return False
        \Else
            \State \Return True
        \EndIf
    \end{algorithmic}
\end{algorithm}

We note that $\gamma$ is a small constant that will be determined later.
The next two lemmas bound the error probability of the $\check$ procedure. The proofs are included in \cref{app:check-false,app:check-true}.

\begin{lemma}
\label{lem:check-false}
    If $\clean[\alpha,\beta](N(v))=\emptyset$, then $\check[\alpha,\beta,\gamma](v)=\text{False}$ with probability at least $1-1/d^2(v)$.
\end{lemma}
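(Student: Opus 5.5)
The plan is to read $\clean[\alpha,\beta](N(v))=\emptyset$ as a statement about the fraction of "bad" neighbours of $v$. I will then show that the outer loop of $\check$ sees more than a $\beta(1-\gamma)$ fraction of accepted samples with probability at least $1-1/d^2(v)$. The proof has two nested concentration arguments: one for the inner loop, for a fixed sample $u_i$, and one for the outer loop.

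\textbf{Step 1 (reformulation).} Call $u\in N(v)$ \emph{bad} if $|N(u)\triangle N(v)|>\alpha d(v)$. Since $\clean$ returned $\emptyset$, fewer than $(1-\beta)d(v)$ vertices of $N(v)$ are good, so more than a $\beta$ fraction of $N(v)$ is bad. Next I rewrite the symmetric difference in terms of what the inner loop estimates. Using $|N(u)\setminus N(v)|=d(u)-|N(u)\cap N(v)|$, we get
\[
|N(u)\triangle N(v)| = 2|N(v)\setminus N(u)| + d(u)-d(v).
\]
Write $x_u=|N(v)\setminus N(u)|/d(v)$ and $b_u=\frac12\bigl(1+\alpha-\frac{d(u)}{d(v)}\bigr)$. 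Then $u$ is bad if and only if $x_u>b_u$. The inner loop's threshold is $\frac12\bigl(1+\alpha(1-\gamma)-\frac{d(u)}{d(v)}\bigr)\eta'=(b_u-\alpha\gamma/2)\eta'$. Since $k'$ is a sum of $\eta'$ independent Bernoulli$(x_u)$ variables, for bad $u$ we have $\E[k']=x_u\eta'\ge(b_u+0)\eta'$, which exceeds the threshold by at least $\alpha\gamma\eta'/2$.

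\textbf{Step 2 (inner loop).} For a bad $u_i$, I bound the probability that $k'$ does not exceed the threshold by $\beta\gamma/2$. I will use the Chernoff lower tail $\Pr[X\le \mu-t]\le\exp(-t^2/(2\mu))$ with $t=\alpha\gamma\eta'/2$. This gives $\exp\bigl(-\alpha^2\gamma^2\eta'/(8x_u)\bigr)$. With $\eta'=\frac{8}{\alpha\gamma^2}\ln\frac{2}{\beta\gamma}$ this equals $(\beta\gamma/2)^{\alpha/x_u}$, which suffices whenever $x_u\le\alpha$, so the regime of small $x_u$ is covered. The case $x_u>\alpha$, where the additive gap $\alpha\gamma/2$ is small relative to the mean, is where I expect the main obstacle. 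Here I would use that the threshold is at most $x_u-\alpha\gamma/2$. I would then use the monotonicity of the tail in $\mu$ for a fixed threshold: shifting the mean up only helps, so the worst case is $x_u$ just above $b_u$. I would try to argue that $b_u$ itself is of order $\alpha$ whenever it is the binding case, or else slightly retune the constant in $\eta'$. In either case the target is a per-sample miss probability of at most $\beta\gamma/2$. Consequently each outer sample $u_i$, being uniform on $N(v)$, increments $k$ with probability at least $\beta(1-\beta\gamma/2)\ge\beta(1-\gamma/4)$, using $\beta\le 1/2$.

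\textbf{Step 3 (outer loop).} The increments to $k$ are independent across $i$, so $\E[k]\ge\mu:=\beta(1-\gamma/4)\eta$. Returning True requires $k\le\beta(1-\gamma)\eta=(1-\delta')\mu$, where $\delta'=\frac{3\gamma/4}{1-\gamma/4}\ge\frac{3\gamma}{4}$. By the multiplicative Chernoff bound this has probability at most
\[
\exp(-\mu\delta'^2/2)\le\exp\Bigl(-\tfrac{9}{32}\beta\gamma^2\eta\Bigr).
\]
With $\eta=\frac{16}{\beta\gamma^2}\ln d(v)$ the exponent is $-\frac{9}{2}\ln d(v)$. Hence the failure probability is at most $d(v)^{-9/2}\le 1/d^2(v)$, and $\check$ returns False with the claimed probability.
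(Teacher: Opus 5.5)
Your Step 1 matches the paper's reformulation and your Step 3 is essentially the paper's outer Chernoff bound. You do not need $\beta\le\frac12$ there: $\beta\le 1$ gives per-sample probability $\beta(1-\gamma/2)$ and exponent $\beta\gamma^2\eta/(8-4\gamma)\ge 2\ln d(v)$.

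The genuine gap is in Step 2, in exactly the case you flag, $x_u>\alpha$. You lower-bound the distance between $\E[k']$ and the threshold by the fixed amount $\alpha\gamma\eta'/2$. That throws away the extra slack $(x_u-b_u)\eta'$, and then nothing controls the mean. Your monotonicity reduction to $x_u=b_u$ gives $\exp\bigl(-\alpha^2\gamma^2\eta'/(8b_u)\bigr)$. This is at most $\beta\gamma/2$ only when $b_u\le\alpha$, but $b_u=\frac12\bigl(1+\alpha-\frac{d(u)}{d(v)}\bigr)$ can be close to $\frac12$ when $d(u)\ll d(v)$. Retuning $\eta'$ is not available, since the lemma is about the procedure with its fixed $\eta'$. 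Moreover, without further structure you would need $\eta'$ of order $\frac{1}{\alpha^2\gamma^2}$ rather than $\frac{1}{\alpha\gamma^2}$, which is not a constant-factor change.

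The missing ingredient is the trivial constraint $|N(v)\setminus N(u)|\ge d(v)-d(u)$, i.e.\ $x_u\ge 1-\frac{d(u)}{d(v)}=2b_u-\alpha$. It rules out the bad configuration ``$b_u$ large and $x_u$ close to $b_u$'' that your worst case implicitly uses. It gives $x_u-b_u\ge\frac12(x_u-\alpha)$, so the true gap is at least $\frac12\bigl(x_u-\alpha(1-\gamma)\bigr)\eta'$ and grows with the mean. The Chernoff exponent becomes $\frac{(x_u-\alpha(1-\gamma))^2}{8x_u}\eta'$. This is increasing in $x_u$ for $x_u\ge\alpha(1-\gamma)$ and equals $\frac{\alpha\gamma^2\eta'}{8}=\ln\frac{2}{\beta\gamma}$ at $x_u=\alpha$.

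The paper does the same thing in the variable $\alpha'=|N(u)\triangle N(v)|/d(v)\ge\alpha$. There the gap is $\frac12\bigl(\alpha'-\alpha(1-\gamma)\bigr)\eta'$, and the mean $\frac12\bigl(1-\frac{d(u)}{d(v)}+\alpha'\bigr)\eta'$ is at most $\alpha'\eta'$ by that same constraint. One Chernoff bound plus monotonicity in $\alpha'$ then handles all bad $u$ at once, with no case split.
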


\begin{lemma}
\label{lem:check-true}
    If $\clean[\alpha(1-2\gamma),\beta(1-2\gamma)](N(v))\neq\emptyset$, then $\check[\alpha,\beta,\gamma](v)=\text{True}$ with probability at least $1-1/d^2(v)$.
\end{lemma}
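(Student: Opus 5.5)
The plan is to run the argument of \cref{lem:check-false} in the other direction. Let $C=N(v)$, so $|C|=d(v)$. Since $\clean[\alpha(1-2\gamma),\beta(1-2\gamma)](C)\neq\emptyset$, the set $K'=\{u\in C \mid |N(u)\triangle C|\le\alpha(1-2\gamma)|C|\}$ satisfies $|K'|\ge(1-\beta(1-2\gamma))|C|$. Hence a uniformly random $u_i\in C$ lies outside $K'$ with probability at most $\beta(1-2\gamma)$.

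The first step is to translate the inner test into a statement about $|N(u)\triangle C|$. For $u\in C$ let $p_u=|C\setminus N(u)|/|C|$, which is the success probability of each inner trial. Because $|N(u)\setminus C|=d(u)-|C|+|C\setminus N(u)|$, we get $|N(u)\triangle C|=d(u)-d(v)+2|C\setminus N(u)|$, and therefore
$$p_u=\tfrac12\Bigl(\tfrac{|N(u)\triangle C|}{|C|}+1-\tfrac{d(u)}{d(v)}\Bigr).$$
So the threshold $\frac12(1+\alpha(1-\gamma)-d(u_i)/d(v))\eta'$ is exactly what $k'$ would be if $|N(u_i)\triangle C|$ equalled $\alpha(1-\gamma)|C|$. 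For $u_i\in K'$ this gives two facts:
\begin{itemize}
\item $\E[k']=p_{u_i}\eta'$ is below the threshold by at least $\frac{\alpha\gamma}{2}\eta'$;
\item $p_{u_i}\le |N(u_i)\triangle C|/|C|\le\alpha$.
\end{itemize}

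Next I bound the probability that a good $u_i$ is wrongly counted in $k$. Hoeffding is too weak here, because the deviation $\alpha\gamma/2$ is tiny relative to $1$. Instead I use a multiplicative (Bernstein-type) Chernoff bound, dominating $k'$ by a binomial with mean $\mu=\alpha\eta'$ and additive deviation $\lambda=\frac{\alpha\gamma}{2}\eta'$. This gives
$$\Pr[k'>\text{threshold}]\le\exp\Bigl(-\frac{\lambda^2}{2(\mu+\lambda/3)}\Bigr)\approx\exp\Bigl(-\frac{\alpha\gamma^2\eta'}{8}\Bigr)=\frac{\beta\gamma}{2},$$
using $\eta'=\frac{8}{\alpha\gamma^2}\ln\frac{2}{\beta\gamma}$ (up to a $(1+\gamma/6)$ factor that I will absorb into the slack). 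Consequently each outer iteration increments $k$ independently with probability at most $\beta(1-2\gamma)+\frac{\beta\gamma}{2}\le\beta(1-\tfrac32\gamma)$.

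The last step is the outer concentration. Now $k$ is dominated by a binomial with mean at most $\beta(1-\frac32\gamma)\eta$, and the procedure returns False only if $k>\beta(1-\gamma)\eta$, which requires a deviation of at least $\frac{\beta\gamma}{2}\eta$. A Chernoff bound of the form $\exp(-\lambda^2/(2\mu+\lambda))$ with $\eta=\frac{16}{\beta\gamma^2}\ln d(v)$ gives an exponent of roughly $\frac{\beta\gamma^2\eta}{8}=2\ln d(v)$. That yields failure probability at most $1/d^2(v)$.

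I expect the main obstacle to be the constant bookkeeping in the two Chernoff applications. The crude forms ($\exp(-\delta^2\mu/3)$) lose a constant factor and only give $d(v)^{-4/3}$ in the outer step. So I will use the sharper $\lambda^2/(2\mu+\lambda)$ form, or slightly adjust how the slack $2\gamma$ versus $\gamma$ is split between the inner and outer steps, to make the constants $16$ and $8$ suffice.
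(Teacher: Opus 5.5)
Your proposal follows the paper's proof essentially step for step. It uses the same identity $\E[k']=\frac12\bigl(1-\frac{d(u_i)}{d(v)}+\alpha'\bigr)\eta'$ with $\alpha'=|N(u_i)\triangle N(v)|/d(v)$, an inner multiplicative Chernoff bound showing a sample from $K'$ is miscounted with probability at most $\frac{\beta\gamma}{2}$, and an outer Chernoff bound with per-round probability at most $\beta(1-\frac32\gamma)$. In the outer step the form $\exp(-\lambda^2/(2\mu+\lambda))$ gives exactly the exponent $\beta\gamma^2\eta/(8-10\gamma)\ge 2\ln d(v)$, so no ``roughly'' is needed there.

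The one loose end is the inner constant. With $\mu=\alpha\eta'$ your Bernstein bound is $\exp\bigl(-\alpha\gamma^2\eta'/(8+\frac43\gamma)\bigr)=\bigl(\frac{\beta\gamma}{2}\bigr)^{6/(6+\gamma)}$, which is strictly larger than $\frac{\beta\gamma}{2}$. Neither remedy you mention fixes this. The $\lambda^2/(2\mu+\lambda)$ form is weaker than Bernstein and makes the bound worse. Re-splitting $\gamma$ between the two levels is not a free choice, since the thresholds $\alpha(1-\gamma)$, $\beta(1-\gamma)$ and the sample sizes $\eta,\eta'$ are fixed by the procedure.

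The slack you need is already in your own derivation. For $u_i\in K'$ you have $\E[k']=p_{u_i}\eta'\le\alpha'\eta'\le\alpha(1-2\gamma)\eta'$, not merely $\le\alpha\eta'$. With this mean bound and $\lambda=\frac{\alpha\gamma}{2}\eta'$, either Chernoff form gives an exponent of at least $\alpha\gamma^2\eta'/(8-14\gamma)\ge\ln\frac{2}{\beta\gamma}$. This is exactly the paper's computation, which reaches the same bound via $1-d(u_i)/d(v)\le\alpha'$.

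Finally, phrase the inner step as a Chernoff bound around the true mean $\E[k']$, with the mean then upper-bounded inside the exponent. Do not phrase it as stochastic domination by a binomial of mean $\alpha\eta'$. The threshold can lie below $\alpha\eta'$ (it equals $\frac{\alpha(1-\gamma)}{2}\eta'$ when $d(u_i)=d(v)$), so literal domination would give nothing.
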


Our full algorithm for locating good clusters is as follows.

\begin{algorithm}[H]
\caption{$\atomfinding[\eps,\delta,\gamma](G=(V,E))$}
\label{alg:atom-finding}
    \begin{algorithmic}[1]
        \Let{$n$}{$|V|$}
        \Let{$\alpha$}{$\frac{2\eps+2\delta}{(1-\eps)(1-2\gamma)}$}
        \Let{$\beta$}{$\frac{\frac{2\eps}{1-\eps}+\delta}{(1+\delta)(1-2\gamma)}$}
        \Let{$Q$}{$\emptyset$}\Comment{$Q$ is a multiset}
        \For{$v \in V$}
            \State Add $\left\lceil\frac{2(1+\delta)\ln n}{\ln\frac{d(v)}{1+\delta}}\right\rceil$ copies of $v$ to $Q$.
        \EndFor
        \While{$G$ is not empty}
            \If{$\exists v \in Q$}
                \State Remove $v$ from $Q$.
                \If{$v\in V$ and $\check[\alpha,\beta,\gamma](v)=\text{True}$}
                    \State $K\gets\clean[\alpha,\beta](N(v))$
                    \If{$K\neq\emptyset$}
                        \State Report $K$.
                    \EndIf
                \EndIf
            \Else
                \State Report that there are no good clusters.
            \EndIf
            \If{the adversary decides to remove a cluster $C$}
                \For{$u\in C$}
                    \For{$v\in N(u)$}
                        \State Remove $(u,v)$ from $E$.
                        \State Add $\left\lceil\frac{2(1+\delta)\ln n}{\delta d(v)\ln\frac{d(v)}{1+\delta}}\right\rceil$ copies of $v$ to $Q$.
                    \EndFor
                    \State Remove $u$ from $V$.
                \EndFor
            \EndIf
        \EndWhile
    \end{algorithmic}
\end{algorithm}

The next lemma shows that \cref{alg:atom-finding} will only report ``good enough'' clusters.

\begin{lemma}
\label{lem:good-enough}
    The clusters reported by \cref{alg:atom-finding} (Line 13) are always $\frac{4\eps+3\delta+2\delta^2+\eps\delta}{1-3\eps-2\gamma(1-\eps)(1+\delta)}$-good.
\end{lemma}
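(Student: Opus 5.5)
The plan is to read the statement off \cref{lem:clean}, using the particular values of $\alpha$ and $\beta$ fixed in \cref{alg:atom-finding}. A cluster is reported on Line 13 only when $K=\clean[\alpha,\beta](N(v))\neq\emptyset$, where $N(v)$ is computed in the current graph at the moment of reporting. The $\check$ procedure only decides whether $\clean$ is run at all, so it has no effect on the quality of $K$. \cref{lem:clean} therefore tells us directly that $K$ is $\frac{\alpha+\beta}{1-\beta}$-good in the current graph, which is exactly the graph in which the cluster is handed to the adversary and then removed. What remains is to check that $\frac{\alpha+\beta}{1-\beta}$ equals the claimed expression.

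For that computation, put $D=(1-\eps)(1+\delta)(1-2\gamma)$ and rewrite both parameters over this common denominator:
\[
\alpha=\frac{(2\eps+2\delta)(1+\delta)}{D},
\qquad
\beta=\frac{2\eps+\delta(1-\eps)}{D}.
\]
Adding the numerators gives
\[
2\eps+2\eps\delta+2\delta+2\delta^2+2\eps+\delta-\eps\delta=4\eps+3\delta+2\delta^2+\eps\delta .
\]
For the other term we have $1-\beta=(D-2\eps-\delta+\eps\delta)/D$. Expanding $D=1+\delta-\eps-\eps\delta-2\gamma(1-\eps)(1+\delta)$, the numerator of $1-\beta$ becomes $1-3\eps-2\gamma(1-\eps)(1+\delta)$. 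The factors of $D$ cancel in the ratio, which yields exactly the stated goodness parameter.

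The only point that needs care, and it is minor, is that \cref{lem:clean} implicitly needs $\beta<1$, so that $1-\beta>0$ and the inequality $(\alpha+\beta)|C|\le\frac{\alpha+\beta}{1-\beta}|K|$ is valid. Equivalently, we need $1-3\eps-2\gamma(1-\eps)(1+\delta)>0$. This holds for the small constants $\eps,\delta,\gamma$ considered; for example, the choice $\eps=0.0287$ and $\delta=\gamma=\eps/100$ gives a value close to $0.91$. I would state this as the standing assumption on the parameters, under which the bound holds deterministically for every reported cluster.
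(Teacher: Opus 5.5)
Your proposal is correct and follows the paper's proof exactly: the paper simply invokes \cref{lem:clean} with the values of $\alpha$ and $\beta$ fixed in \cref{alg:atom-finding}. Your common-denominator computation, giving $4\eps+3\delta+2\delta^2+\eps\delta$ for the numerator of $\alpha+\beta$ and $1-3\eps-2\gamma(1-\eps)(1+\delta)$ for the numerator of $1-\beta$, correctly fills in the algebra the paper leaves implicit, and your side condition $\beta<1$ (a positive denominator) is a sensible explicit assumption that the paper leaves tacit.
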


\begin{proof}
    It simply follows from \cref{lem:clean} and the fact that $\alpha=\frac{2\eps+2\delta}{(1-\eps)(1-2\gamma)}$ and $\beta=\frac{\frac{2\eps}{1-\eps}+\delta}{(1+\delta)(1-2\gamma)}$.
\end{proof}

Recall that $n$ is the number of vertices in the beginning. The next lemma shows that, with high probability, \cref{alg:atom-finding} will never miss ``very good'' clusters. We defer the proof to \cref{app:very-good}.

\begin{lemma}
\label{lem:very-good}
    With probability at least $1-1/n^3$, whenever \cref{alg:atom-finding} reports that there are no good clusters (Line 15), there are no $\eps^2$-good-on-average clusters.
\end{lemma}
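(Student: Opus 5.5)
The plan is to combine \cref{lem:hit-good-cluster} with \cref{lem:check-true} and then take a union bound. Fix a moment at which \cref{alg:atom-finding} reaches Line 15, so $Q$ is empty. Suppose an $\eps^2$-good-on-average cluster $C$ exists at that moment. By \cref{lem:hit-good-cluster} there is a vertex $v\in C$ with current degree $d(v)$ such that $v$ has been $(\alpha_0,\beta_0)$-hit at least $\lceil\delta d(v)\rceil$ times. Here
\[
\alpha_0=\tfrac{2\eps+2\delta}{1-\eps},\qquad \beta_0=\tfrac{\frac{2\eps}{1-\eps}+\delta}{1+\delta},
\]
and each hit happened while $v$'s degree $d'$ satisfied $d'<(1+\delta)d(v)$. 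Moreover, $v$ has stayed $(\alpha_0,\beta_0)$-good since the first of these hits.

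Since $\alpha=\alpha_0/(1-2\gamma)$ and $\beta=\beta_0/(1-2\gamma)$, the statement that $v$ is $(\alpha_0,\beta_0)$-good is exactly the hypothesis of \cref{lem:check-true}. So whenever a copy of $v$ is popped from $Q$ at a time after the first hit, $\check[\alpha,\beta,\gamma](v)$ returns True with probability at least $1-1/d'^2$. In that case $\clean[\alpha,\beta](N(v))$ is nonempty, because $\clean$ with the larger thresholds only keeps more vertices, so $K\supseteq$ the set kept at $(\alpha_0,\beta_0)$. A cluster is therefore reported, and $C$ gets removed.

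Each of the hits is a deletion of a neighbour of $v$, and each such deletion adds $\lceil 2(1+\delta)\ln n/(\delta d'\ln\frac{d'}{1+\delta})\rceil$ copies of $v$ to $Q$. Because $Q$ is empty at Line 15, all these copies have already been popped. Each pop after the first hit is an independent trial, and it fails with probability at most $1/d'^2$, where $d'\ge d(v)$.

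There are at least $\delta d(v)$ hits. At each hit $d'<(1+\delta)d(v)$, so each hit contributes at least $2(1+\delta)\ln n/(\delta(1+\delta)d(v)\ln\frac{(1+\delta)d(v)}{1+\delta})$ copies. The total number of trials is therefore at least $2\ln n/\ln d(v)$, up to the $(1+\delta)$ bookkeeping, and this is exactly why that factor appears in the number of copies. The failure probability is then at most
\[
d(v)^{-2\cdot 2\ln n/\ln d(v)}\approx n^{-4}.
\]
I need to be careful with the exponent. Every failure probability is at most $1/d(v)^2$, and the count, with the ceiling and the $(1+\delta)$ factors, should give at least $2(1+\delta)\ln n/\ln\frac{d'}{1+\delta}$ trials in total. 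This gives failure at most $n^{-4}$ or better, possibly after adjusting with $d'\ge d(v)\ge d'/(1+\delta)$.

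The initial copies, $\lceil 2(1+\delta)\ln n/\ln\frac{d(v)}{1+\delta}\rceil$ per vertex, handle clusters present before any deletion. This is the ``0-th round'' assumption that \cref{lem:hit-good-cluster} relies on, treating the fictitious deleted neighbours as hits. The step I expect to be delicate is this interplay: a trial counts only if it comes after the first hit, while the copies are added at the moment of the hit, so I must check that copies added at a hit are popped no earlier than that hit. They are, since they are inserted then.

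Finally, I take a union bound. The failure event is indexed by a pair consisting of a vertex $v$ and the time it becomes a witness, and the total number of relevant (vertex, time) witness events is at most $n$ vertices times at most $n$ removal rounds, which is $n^2$ or fewer. This gives total failure probability at most $n^2\cdot n^{-5}$ if I tune the exponent to $n^{-5}$, and at most $1/n^3$ as claimed. Otherwise I tighten by noting that the witness and its batch of $\lceil\delta d(v)\rceil$ hits can be charged to a fixed vertex at most $n$ times.
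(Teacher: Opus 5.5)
Your skeleton matches the paper's. \cref{lem:hit-good-cluster} gives a witness $v$ that has stayed good since the first of $\lceil\delta d(v)\rceil$ hits. These hits put at least $2\ln n/\ln d(v)$ copies of $v$ into $Q$, and each resulting check fails with probability at most $1/d^2(v)$ by \cref{lem:check-true}, giving $n^{-4}$.

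The gap is in the last step. You union bound over up to $n^2$ (vertex, time) pairs, and with a per-event bound of $n^{-4}$ this only yields $1-1/n^2$. You cannot tune the exponent to $5$: the number of copies is fixed by \cref{alg:atom-finding}, and in the worst case it guarantees exactly $(1/d^2(v))^{2\ln n/\ln d(v)}=n^{-4}$. Your fallback of charging witnesses to a fixed vertex at most $n$ times still leaves $n$ events per vertex, so it does not help either.

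The missing idea is to cover all times with a single event per vertex. If $\check[\alpha,\beta,\gamma](v)$ returns True while $v$ is $(\alpha(1-2\gamma),\beta(1-2\gamma))$-good, then $\clean[\alpha,\beta](N(v))$ is nonempty and contains $v$ itself, so $v$ disappears. (It is this set, not $C$, that gets removed.) Hence all checks of $v$ made while it is good form one run of failures. Whichever time $v$ serves as a failing witness, the product of the bounds $1/d^2$ over its good-time checks has dropped to at most $n^{-4}$.

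The paper tracks this product as $f(v)$ and uses $\Pr[f(v)\le x]\le x$. This is a supermartingale-type bound obtained by conditioning check by check on the history. It also repairs your ``independent trials'' step. With an adaptive adversary, the degrees at which checks happen are random, and so is which checks count as coming after the first hit (that is defined retrospectively from the final $d(v)$). So you cannot simply multiply over a predetermined set of trials. With one event of probability at most $n^{-4}$ per vertex, the union bound over $n$ vertices gives the claimed $1/n^3$.
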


Finally, the next lemma shows that \cref{alg:atom-finding} can be implemented in near-linear time. We defer the proof to \cref{app:time-atom-finding}.

\begin{lemma}
\label{lem:time-atom-finding}
    The expected running time of \cref{alg:atom-finding} is $O(m\log n)$.
\end{lemma}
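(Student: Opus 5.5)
We need to prove \cref{lem:time-atom-finding}: the expected running time of \cref{alg:atom-finding} is $O(m\log n)$. The plan is to split the cost into three parts, bound each separately, and sum.

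\textbf{Part 1: the deletion loops.} Each edge $(u,v)$ is removed exactly once. That removal costs $O(1)$ plus the work of inserting the copies of $v$ into $Q$. So apart from the total size of $Q$, this part is $O(m)$.

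\textbf{Part 2: the total number of insertions into $Q$.} Initially vertex $v$ gets $O\!\left(\frac{\ln n}{\ln(d(v)/(1+\delta))}+1\right)=O(\log n)$ copies, which sums to $O(n\log n)\subseteq O(m\log n)$. (Vertices of constant degree can be handled by a constant-time check.) Each later edge deletion adds $\lceil\cdot\rceil \le 1+O\!\left(\frac{\log n}{\delta d(v)}\right)$ copies of $v$, so summed over all edges this is $O(m\log n)$.

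\textbf{Part 3: the cost of each pop of $v$ from $Q$.} This is where most of the work lies.

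\emph{Cost of $\check$.} One call takes $\eta\eta'=O\!\left(\frac{1}{\beta\gamma^2}\ln d(v)\cdot\frac{1}{\alpha\gamma^2}\ln\frac{2}{\beta\gamma}\right)=O(\log d(v))$ time, assuming $O(1)$ adjacency queries via hashing and $O(1)$ uniform sampling from $N(v)$ via a dynamic array with swap-deletion. The delicate point is the per-edge insertions. Each contributes $O\!\left(1+\frac{\log n}{\delta d(v)\ln(d(v)/(1+\delta))}\right)$ pops, and each pop costs $O(\log d(v))$. The $\ln d(v)$ from $\eta$ cancels against the $\ln\frac{d(v)}{1+\delta}$ in the denominator, giving $O(\log d(v)+\log n/d(v))$ per deletion. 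Since $\log d(v)\le\log n$, the total over all edges is $O(m\log n)$. The initial copies contribute $O(\log n)$ per vertex in the same way.

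\emph{Cost of $\clean$.} $\clean$ costs $O(d(v)^2)$, or $O(\sum_{u\in N(v)}d(u))$ with hashing, and it is run only when $\check$ returns True. I would split this into two cases.

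In the first case, $\clean[\alpha,\beta](N(v))=\emptyset$. By \cref{lem:check-false}, $\check$ returns True with probability at most $1/d^2(v)$, so the expected cost is $O(1)$ per pop, which is absorbed into the bound above.

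In the second case, $\clean$ returns $K\neq\emptyset$. Then $K$ is reported and the adversary must delete all of $K$. The $O(d(v)^2)$ cost is charged to the edges incident to $K$, which are deleted right away. Since $K$ is good and $|K|\ge(1-\beta)d(v)$, every $u\in K$ has degree $\Theta(d(v))$. Hence $\Omega(d(v)^2)$ edges are removed and each is charged $O(1)$. Every edge is deleted once, so the total charge is $O(m)$.

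The main obstacle I expect is making this charging airtight. Two things need care: $\clean$ runs on $v$'s current neighborhood, and the sizes of $K$ and the vertex degrees must be related carefully using \cref{lem:clean}. The other delicate spot is the logarithm cancellation in the $\check$ cost, which only works because the number of inserted copies was chosen with the $\ln\frac{d(v)}{1+\delta}$ denominator.
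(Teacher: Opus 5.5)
Your proposal is correct and follows essentially the same argument as the paper: $\check$ costs $O(\log d(v))$ per copy popped from $Q$, and this is summed against the number of copies inserted (the paper sums per vertex over degree levels $i\le(1+\delta)d_0(v)$, you sum per edge deletion). As in the paper, empty $\clean$ calls are paid for in expectation via \cref{lem:check-false}, and non-empty ones are charged to the $\Omega(d(v)^2)$ edges incident to the removed $K$. You leave implicit one point the paper states, namely that a copy inserted at degree $i$ costs only $O(\log i)$ when popped because $d(v)$ never increases; also, contrary to your closing remark, the $\ln\frac{d(v)}{1+\delta}$ denominator is not needed for the $O(m\log n)$ bound, since $\log d(v)\cdot\bigl(1+\frac{\log n}{d(v)}\bigr)=O(\log n)$ per deletion anyway.
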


Combining \cref{lem:good-enough,lem:very-good,lem:time-atom-finding}, we conclude this section with the following theorem.

\locatinggoodclusters

\section{Pivots with good clusters}
\label{sec:pivot-with-good-cluster}

Suppose that we know an $\eps'$-good cluster $K$ for some $\eps'<1/6$. In this section, we will show that we are able to locally construct a cluster $C\supseteq K$, such that the expected cost we pay in this step is at most $2+O(\eps')$ times the expected decrease of the optimal cost.

Previous work has shown that such a cluster $K$ is completely inside a cluster in the optimal clustering.

\begin{lemma}[\cite{cao2025static} Lemma 26]
\label{lem:K-in-opt-cluster}
    For $\eps'<1/6$, any $\eps'$-good cluster is completely inside a cluster in the optimal clustering.
\end{lemma}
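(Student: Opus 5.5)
The plan is to argue by contradiction using local optimality of $\mathcal{C}^*$. Suppose the $\eps'$-good cluster $K$ meets several optimal clusters $C_1,\dots,C_k$ with $k\ge 2$. Write $A_i=K\cap C_i$ and order them so that $|A_1|\ge|A_2|\ge\dots$. Two consequences of \cref{def:good} will be used throughout. Every $u\in K$ has at least $(1-\eps')|K|$ neighbours inside $K$, and at most $\eps'|K|$ neighbours outside $K$. I will exhibit a modification of $\mathcal{C}^*$ that strictly lowers the cost whenever $k\ge2$ and $\eps'<1/6$. (If only a non-increase comes out, I would add a standard tie-breaking among optimal clusterings, e.g.\ take the one maximising $\sum_C|C|^2$.)

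\textbf{Step 1: constrain the non-$K$ part of each $C_i$.} By optimality, splitting $A_i$ off from $C_i$ as its own cluster cannot help. Hence the number of non-edges between $A_i$ and $C_i\setminus A_i$ is at most the number of edges between them. Every such edge goes either to $C_i\setminus K$, giving at most $\eps'|K||A_i|$ edges, or to nothing else, since $C_i\cap K=A_i$. So the non-edges from $A_i$ into $C_i\setminus K$ number at most $\eps'|K||A_i|$. The number of edges from $A_i$ into $C_i\setminus K$ is bounded the same way. Consequently $|C_i\setminus K|\,|A_i|\le 2\eps'|K||A_i|$, that is, $|C_i\setminus K|\le 2\eps'|K|$ for every $i$. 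In words, each optimal cluster touching $K$ carries only $O(\eps')|K|$ foreign vertices.

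\textbf{Step 2: merge $K$ into one cluster.} Consider the modified clustering in which every $A_i$ with $i\ge2$ is moved into $C_1$. I will compare costs pair by pair.

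\emph{Gain.} Each pair $(u,w)$ with $u\in A_i$, $w\in A_j$, $i\ne j$, that is an edge stops being cut. Each $u\in A_i$ ($i\ge 2$) has at least $(1-\eps')|K|-|A_i|$ neighbours in $K\setminus A_i$. If the pieces are balanced this alone gives a gain of order $|K\setminus A_1|\cdot|K|$.

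\emph{Loss.} There are three sources: non-edges inside $K$ that now become internal; pairs between moved vertices and $C_1\setminus K$; and edges from moved vertices to $C_i\setminus K$ that become cut. The first is at most $\eps'|K|$ per moved vertex. The second and third are at most $|C_1\setminus K|+|C_i\setminus K|\le 4\eps'|K|$ per moved vertex, by Step 1.

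\emph{Balance.} Summing over moved vertices $u\in K\setminus A_1$, the net change per vertex is at least $(1-\eps')|K|-|A_i|-5\eps'|K|$. This is positive if $|A_i|<(1-6\eps')|K|$, which holds for $i\ge2$ whenever $|A_i|\le|K|/2$ and $\eps'<1/6$ with some room; this is where the threshold $1/6$ should come from. If $|A_1|$ is large the gain per vertex is still about $|A_1|$, since $u$ has almost all of $A_1$ as neighbours, so the same inequality holds. The case $|A_i|>|K|/2$ for $i\ge 2$ cannot occur by the ordering.

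\textbf{Main obstacle.} I expect the main difficulty to be the constant bookkeeping in Step 2. Pairs among moved vertices from different $A_i,A_j$ ($i,j\ge2$) must not be double counted. The Step 1 bound must be applied to $C_1$ and $C_i$ simultaneously. The result has to come out exactly with strict improvement for $\eps'<1/6$. If the crude per-vertex count is too lossy, I would instead compare against the better of two moves: merging into $C_1$, or pulling all of $K$ out as a fresh cluster. Averaging these two inequalities should absorb the $|C_1\setminus K|$ term.
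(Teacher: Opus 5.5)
The paper does not prove this lemma; it imports it from \cite{cao2025static}. Your Step 1 is correct (it is the same argument as \cref{lem:C'-minus-K}). The balance in Step 2, however, does not go through, for two reasons.

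First, summing ``neighbours of $u$ in $K\setminus A_i$'' over all moved $u$ counts every edge between two moved pieces $A_i,A_j$ ($i,j\ge 2$) twice, although putting that pair together gains only $1$. When $K$ is split into many small pieces, $|A_1|$ is small, and the true gain per moved vertex is about $|A_1|+\frac12(|K|-|A_1|-|A_i|)\approx |K|/2$, not $|K|-|A_i|$.

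Second, even granting your per-vertex bound, positivity needs $|A_i|<(1-6\eps')|K|$, and $|A_i|\le |K|/2$ delivers this only for $\eps'<1/12$, not $1/6$. For two equal halves and $\eps'=1/10$, your bound reads $|K|/2-6\eps'|K|<0$.

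The underlying culprit is that you charge the full $|C_1\setminus K|\le 2\eps'|K|$ to every moved vertex. You do this independently of the budgets of the vertices in $A_1$, even though their outside edges are exactly what allow $C_1\setminus K$ to be that large.

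Your fallback is the right repair, but it needs the specific weights $a=|A_1|$ and $m=|M|$, where $M=K\setminus A_1$. Use the following notation:
\begin{itemize}
\item $B_i=C_i\setminus K$;
\item $P$ is the number of pairs of $K$ separated by $\mathcal{C}^*$, and $\bar X$ is the number of non-edges among them;
\item $e(\cdot,\cdot)$ and $\bar e(\cdot,\cdot)$ count edges and non-edges between two sets;
\item $I_1$ is the improvement from moving $M$ into $C_1$, and $I_2$ is the improvement from making $K$ its own cluster.
\end{itemize}
Then
\[ I_1=P-2\bar X+\bigl(e(M,B_1)-\bar e(M,B_1)\bigr)+\sum_{i\ge 2}\bigl(\bar e(A_i,B_i)-e(A_i,B_i)\bigr),\qquad I_2=P-2\bar X+\sum_{i\ge 1}\bigl(\bar e(A_i,B_i)-e(A_i,B_i)\bigr). \]
In $aI_1+mI_2$ the terms $-am|B_1|$ and $+am|B_1|$ cancel exactly. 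To bound what remains, use three estimates: $\bar X\le\sum_{u\in M}|K\setminus N(u)|$, $\sum_{i\ge2}e(A_i,B_i)\le\sum_{u\in M}|N(u)\setminus K|$, and $e(A_1,B_1)\le a\eps'|K|$. This gives
\[ aI_1+mI_2\;\ge\;|K|\Bigl(P-2\bar X-\sum_{i\ge 2}e(A_i,B_i)\Bigr)-2m\,e(A_1,B_1)\;\ge\;|K|\bigl(P-2\eps' m(|K|+a)\bigr). \]
Since $|A_i|\le a$ implies $\sum_i|A_i|^2\le a|K|$, you have $P\ge m\max(a,|K|/2)$. The two cases $a\ge |K|/2$ and $a<|K|/2$ each yield $P>2\eps' m(|K|+a)$ whenever $\eps'<1/6$.

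Hence $I_1>0$ or $I_2>0$, which contradicts the optimality of \emph{every} optimal clustering. So neither your tie-breaking rule nor Step 1 is needed, and the threshold $1/6$ comes out exactly.
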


Furthermore, the cluster can only be slightly larger than $K$. Let $C'$ denote the cluster in the optimal clustering that contains $K$. We have the following lemma.

\begin{lemma}
\label{lem:C'-minus-K}
    $|C'\setminus K| \le 2\eps'|K|$.
\end{lemma}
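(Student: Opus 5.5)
We want $|C'\setminus K|\le 2\eps'|K|$, where $C'\supseteq K$ is the optimal cluster given by \cref{lem:K-in-opt-cluster}. The plan is a local-optimality argument: if $C'\setminus K$ were too large, then splitting it off from $K$ (or moving single vertices out) would not increase the cost. We then turn that into a counting inequality between edges from $K$ to $C'\setminus K$ and non-edges.

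Let $D=C'\setminus K$. The first step is to compare $\mathcal{C}^*$ with the clustering obtained by splitting $C'$ into $K$ and $D$. Only the pairs between $K$ and $D$ change status. Before the split, each non-edge between $K$ and $D$ costs $1$; after it, each edge between $K$ and $D$ costs $1$. Optimality therefore gives
$$|E(K,D)| \ge |K||D| - |E(K,D)|,\qquad\text{that is,}\qquad |E(K,D)|\ge \tfrac12|K||D|.$$

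The second step bounds $|E(K,D)|$ from above using the goodness of $K$. Every $u\in K$ satisfies $|N(u)\triangle K|\le \eps'|K|$. In particular $|N(u)\setminus K|\le \eps'|K|$, so $u$ has at most $\eps'|K|$ neighbours in $D$. Summing over $u\in K$ gives $|E(K,D)|\le \eps'|K|^2$.

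Combining the two bounds yields $\tfrac12|K||D|\le \eps'|K|^2$, hence $|D|\le 2\eps'|K|$, which is exactly the claim. The only delicate point is the correct accounting in the split comparison: pairs inside $D$ and inside $K$ keep their status, and pairs from $C'$ to other clusters are unaffected. Once that is checked, the argument is immediate and needs only $K\subseteq C'$ from \cref{lem:K-in-opt-cluster}.
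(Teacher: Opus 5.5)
Your proof is correct and is essentially the paper's argument. The paper also lower-bounds the number of edges between $K$ and $C'\setminus K$ by $\frac12|K|\cdot|C'\setminus K|$, using optimality against splitting $C'$ into $K$ and $C'\setminus K$. It then upper-bounds that same count by $\eps'|K|^2$ via the $\eps'$-goodness of $K$, and combines the two bounds as you do.
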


\begin{proof}
    Since $K$ is $\eps'$-good, the number of edges between $K$ and $|C'\setminus K|$ is at most $\eps'|K|^2$. On the other hand, this number is at least $\frac{1}{2}|K|\cdot|C'\setminus K|$, since otherwise breaking $C'$ into $K$ and $C'\setminus K$ will improve the optimal clustering. So $|C'\setminus K| \le 2\eps'|K|$.
\end{proof}

We first expand $K$ a little to include vertices that are sure to be together with $K$.
For any vertex $v\notin K$, let $\alpha_v=|N(v)\setminus K|/|K|$ and $\beta_v=|N(v)\cap K|/|K|$, i.e., $v$ has $\alpha_v|K|$ neighbors outside $K$ and $\beta_v|K|$ neighbors in $K$.
Whenever there exists some vertex $v$ such that $\beta_v > 1-\beta_v+\alpha_v+2\eps'$, we insert $v$ into $K$. In the end, we will have $2\beta_v \le 1+\alpha_v + 2\eps'$ for all $v\notin K$, and $K$ still has good properties as shown in the next lemma. We defer its proof to \cref{app:atom-expansion}.
We remark that, since we only expand $K$, \cref{lem:C'-minus-K} also remains true after expansion.

\begin{lemma}
\label{lem:atom-expansion}
    After expansion, $K$ is still completely inside $C'$, and the number of edges between $K$ and $V\setminus K$ is still at most $\eps'|K|^2$.
\end{lemma}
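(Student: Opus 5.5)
We argue by induction over the sequence of insertions made by the expansion procedure. The invariant after each insertion is: (i) $K\subseteq C'$, and (ii) the number of edges between $K$ and $V\setminus K$ is at most $\eps'|K|^2$. Here $K$ denotes the current set and $C'$ the optimal cluster containing the original atom, which is well defined by \cref{lem:K-in-opt-cluster}. The base case holds because the original $K$ is $\eps'$-good. Every vertex of $K$ then has at most $\eps'|K|$ neighbours outside $K$, so the cut has at most $\eps'|K|^2$ edges.

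\textbf{Invariant (ii).} Suppose we insert $v$, which satisfies $\beta_v>1-\beta_v+\alpha_v+2\eps'$. Let $K^+=K\cup\{v\}$. The cut changes as follows: it loses the $\beta_v|K|$ edges from $v$ into $K$ and gains the $\alpha_v|K|$ edges from $v$ to $V\setminus K$. Since $\beta_v>\alpha_v$ (indeed $2\beta_v>1+\alpha_v$ and $\beta_v\le1$), the cut strictly decreases. Also $\eps'|K^+|^2\ge\eps'|K|^2$, so (ii) is preserved. This step is purely arithmetic.

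\textbf{Invariant (i).} This is the main step. Suppose for contradiction that $v\notin C'$, say $v\in D\neq C'$ in the optimal clustering $\mathcal{C}^*$. We compare $\mathcal{C}^*$ with the clustering obtained by moving $v$ from $D$ to $C'$, and show the move strictly reduces cost, contradicting optimality.

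\emph{Gain from joining $C'$.} Inside $C'$, the vertex $v$ has at least $\beta_v|K|$ neighbours (those in $K\subseteq C'$) and at most $|C'\setminus K|$ other vertices. Those other vertices contribute at worst $|C'\setminus K|$ non-edges. The $K$-part contributes $(1-\beta_v)|K|$ non-edges. By \cref{lem:C'-minus-K}, which still holds after expansion by the remark in the text, $|C'\setminus K|\le 2\eps'|K|$. Hence the cost of $v$ joining $C'$ changes by at most
\[
(1-\beta_v)|K|+2\eps'|K|-\beta_v|K| .
\]

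\emph{Gain from leaving $D$.} Removing $v$ from $D$ removes the non-edges from $v$ inside $D$ and adds the edges from $v$ into $D$ as cut edges. The added amount is at most $|N(v)\cap D|\le|N(v)\setminus K|=\alpha_v|K|$, using $D\cap K=\emptyset$.

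\emph{Total.} The net change in cost is at most
\[
\bigl(1-2\beta_v+\alpha_v+2\eps'\bigr)|K| .
\]
This is negative exactly under the insertion condition $\beta_v>1-\beta_v+\alpha_v+2\eps'$, which gives the desired contradiction. So $v\in C'$, and (i) is preserved.

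\textbf{Expected obstacle.} The delicate point is to use \cref{lem:C'-minus-K} with the current (expanded) $K$ and not the original atom. Its proof needs only that $K\subseteq C'$ and that the cut from $K$ has at most $\eps'|K|^2$ edges. Both hold by the induction hypothesis before inserting $v$, so the induction goes through.
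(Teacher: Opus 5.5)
Your proof is correct and matches the paper's argument. The cut can only shrink because $\beta_v>\alpha_v$. For the containment, a first inserted vertex $v\notin C'$ could be moved into $C'$ at net cost at most $(1-2\beta_v+\alpha_v+2\eps')|K|<0$, contradicting optimality.

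The one minor difference is how you get $|C'\setminus K|\le 2\eps'|K|$ for the current $K$: you re-derive it from invariant (ii), while the paper uses monotonicity, $|C'\setminus K|\le |C'\setminus K_0|\le 2\eps'|K_0|\le 2\eps'|K|$ for the original atom $K_0$. Your lower bound $|N(v)\cap C'|\ge\beta_v|K|$ has the correct direction, where the paper's written inequality is a typo.
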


Then, we include each vertex $v\in(V\setminus K)$ in $C$ with probability $p_v=\frac{\beta_v}{1+\alpha_v}$ independently. Note that, similar to $C'$, $C$ is also only slightly larger than $K$.

\begin{lemma}
\label{lem:C-minus-K}
    $\E[|C\setminus K|] \le \eps'|K|$.
\end{lemma}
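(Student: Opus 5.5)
By linearity of expectation, $\E[|C\setminus K|]=\sum_{v\notin K}p_v=\sum_{v\notin K}\frac{\beta_v}{1+\alpha_v}$. The plan is to bound this sum by comparing it with the number of edges leaving $K$, which \cref{lem:atom-expansion} controls.

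First, we drop the denominator. Since $\alpha_v\ge 0$, we have $p_v\le \beta_v$, and therefore
\[
\sum_{v\notin K}p_v\le \sum_{v\notin K}\beta_v=\frac{1}{|K|}\sum_{v\notin K}|N(v)\cap K|.
\]
The sum on the right counts exactly the edges between $K$ and $V\setminus K$, each once from its endpoint outside $K$.

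Next, we invoke \cref{lem:atom-expansion}. After expansion, the number of edges between $K$ and $V\setminus K$ is still at most $\eps'|K|^2$. Dividing by $|K|$ gives $\E[|C\setminus K|]\le \eps'|K|$, as claimed.

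The only point that needs care is that $K$, and hence $\alpha_v$ and $\beta_v$, refer to the expanded cluster. This matches the setting of \cref{lem:atom-expansion}, so the bound applies directly. The inequality $p_v\le\beta_v$ is crude, but it already suffices, so there is no real obstacle beyond that lemma.
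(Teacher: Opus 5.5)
Your proposal is correct and follows essentially the same argument as the paper: linearity of expectation, the bound $p_v\le\beta_v$, the identity $\sum_{v\notin K}\beta_v|K| = \sum_{v\notin K}|N(v)\cap K|$ counting edges leaving $K$, and \cref{lem:atom-expansion} to bound that count by $\eps'|K|^2$. You are also right that $K$, $\alpha_v$ and $\beta_v$ must be taken for the expanded cluster, which is exactly the setting in which that lemma is stated.
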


\begin{proof}
    We have
    \begin{equation*}
        \sum_{v\in(V\setminus K)} \beta_v|K| = \sum_{v\in(V\setminus K)}|N(v)\cap K|
        = \sum_{u\in K}|N(u)\setminus K|
        \le \eps'|K|^2,
    \end{equation*}
    where the last inequality comes from \cref{lem:atom-expansion}.
    Then
    \begin{equation*}
        \E[|C\setminus K|] = \sum_{v\in(V\setminus K)} p_v
        \le \sum_{v\in(V\setminus K)} \beta_v
        \le \eps'|K|.
    \end{equation*}
\end{proof}

Let $K^-$ denote the set of pairs of vertices in $K$ that don't share an edge.
The next lemma bounds our expected cost in this step. The proof is basically a lot of computations, which we defer to \cref{app:atom-our-cost}.

\begin{lemma}
\label{lem:atom-our-cost}
    In this step, the expected cost we pay is at most
    $$|K^-| + \sum_{v\in(V\setminus K)} \left( p_v\left(1-\beta_v+\alpha_v+\frac{\eps'}{2}\right)+(1-p_v)\beta_v \right)|K|.$$
\end{lemma}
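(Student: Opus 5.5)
We split the cost of removing $C$ into three parts and bound each separately: pairs inside $K$, pairs with exactly one endpoint in $K$, and pairs inside $C\setminus K$. Here $K$ is the expanded atom, which is deterministic, and each $v\in V\setminus K$ joins $C$ independently with probability $p_v=\beta_v/(1+\alpha_v)$.

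\textbf{Pairs inside $K$.} These are always placed together, so they cost exactly $|K^-|$.

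\textbf{Pairs with one endpoint in $K$.} Fix $v\notin K$.
\begin{itemize}
\item If $v\notin C$, which happens with probability $1-p_v$, we cut its $\beta_v|K|$ edges into $K$. This gives the term $(1-p_v)\beta_v|K|$.
\item If $v\in C$, which happens with probability $p_v$, we pay $(1-\beta_v)|K|$ for the non-edges between $v$ and $K$. We also pay for the edges from $v$ to vertices outside $C$. This number is at most $|N(v)\setminus K|=\alpha_v|K|$, whichever other vertices end up in $C$.
\end{itemize}
Together these account for $p_v(1-\beta_v+\alpha_v)|K|$, leaving only the $p_v\frac{\eps'}{2}|K|$ slack in the statement to pay for the remaining term.

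\textbf{Pairs inside $C\setminus K$.} The extra cost is the non-edges among $C\setminus K$. Attribute each pair $\{v,w\}\subseteq C\setminus K$ to one endpoint, say half to each. By independence, the expected number of such pairs incident to $v$, conditioned on $v\in C$, is at most $\E[|C\setminus K|]$. By \cref{lem:C-minus-K} this is at most $\eps'|K|$. Halving to avoid double counting gives at most $p_v\cdot\frac{\eps'}{2}|K|$ per vertex $v$. Summing over $v$ yields exactly the $\frac{\eps'}{2}$ term.

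We must avoid double-charging edges from $v$ to other vertices $w\in C\setminus K$. Such edges are not errors, and we already overcounted them in the $\alpha_v$ term, so bounding by $\alpha_v|K|$ is safe.

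\textbf{Main obstacle: independence.} \cref{lem:C-minus-K} holds unconditionally. Conditioning on $v\in C$ does not change the distribution of the other vertices, since they are chosen independently of $v$. The expected count of other vertices $w\neq v$ in $C\setminus K$ is therefore still at most $\eps'|K|$. Thus $\E\bigl[\,|\{\text{pairs in } C\setminus K\}|\,\bigr]\le \tfrac12\sum_v p_v\,\eps'|K|$. Adding the three contributions and using linearity of expectation gives the claimed bound.
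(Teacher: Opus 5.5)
Your proof is correct and follows the paper's argument essentially step for step. It uses the same split into pairs inside $K$, pairs between $K$ and $V\setminus K$ (cost $(1-p_v)\beta_v|K|$ or $p_v(1-\beta_v)|K|$), and edges from $C\setminus K$ to $V\setminus C$ bounded via $N(v)\setminus C\subseteq N(v)\setminus K$. It also uses the same bound $\E\bigl[\binom{|C\setminus K|}{2}\bigr]\le\frac12\sum_v p_v\,\E[|C\setminus K|]$ from independence and \cref{lem:C-minus-K}. One labelling slip: the edges from $v\in C\setminus K$ to $V\setminus C$ are not pairs with an endpoint in $K$, and you never say that pairs inside $V\setminus C$ cost nothing in this step; your accounting nonetheless handles both correctly.
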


The next lemma bounds the expected decrease of the optimal cost, whose proof is in \cref{app:atom-opt-cost}.

\begin{lemma}
\label{lem:atom-opt-cost}
    In this step, the cost of the optimal clustering decreases by at least
    $$|K^-| + \sum_{v\in(V\setminus C')}\beta_v|K| + \sum_{v\in(C'\setminus K)}(1-\beta_v+p_v(\alpha_v-2\eps'))|K|$$
    in expectation.
\end{lemma}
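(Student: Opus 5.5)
We compare the optimal clustering $\mathcal{C}^*$ before the step with one specific clustering of the graph $G'$ that remains after $C$ is removed. The comparison clustering is $\mathcal{C}^*$ restricted to $V\setminus C$: replace $C'$ by $C'\setminus C$ and replace every other optimal cluster $D$ by $D\setminus C$. This is a valid clustering of $G'$, so its cost upper-bounds $\opt(G')$. Hence the decrease of the optimal cost is at least the cost of $\mathcal{C}^*$ minus the cost of this restriction, which is the number of pairs with at least one endpoint in $C$ that $\mathcal{C}^*$ pays for. By \cref{lem:atom-expansion}, $K\subseteq C'$ after expansion. We then lower-bound this count in expectation, split by pair type.

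\textbf{Pairs inside $K$.} Since $K\subseteq C'$, every non-edge inside $K$ is an error of $\mathcal{C}^*$. This gives the $|K^-|$ term deterministically.

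\textbf{Pairs $(u,v)$ with $u\in K$ and $v\notin C'$.} Every edge between $K$ and a vertex outside $C'$ is cut by $\mathcal{C}^*$, and all of them touch $C\supseteq K$. Their number is $\sum_{v\in V\setminus C'}|N(v)\cap K|=\sum_{v\in V\setminus C'}\beta_v|K|$, again deterministically. Whether $v$ itself lands in $C$ does not matter; we simply do not count any further pairs involving such $v$.

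\textbf{Pairs involving $v\in C'\setminus K$.} Each such $v$ has $(1-\beta_v)|K|$ non-neighbours in $K\subseteq C'$. These are errors of $\mathcal{C}^*$ with an endpoint in $K\subseteq C$, giving $(1-\beta_v)|K|$ regardless of the coin. If $v$ is included in $C$, which happens with probability $p_v$, we also gain the edges from $v$ to vertices outside $C'$; these are errors of $\mathcal{C}^*$ now incident to $C$. There are $|N(v)\setminus C'|\ge |N(v)\setminus K|-|C'\setminus K|\ge \alpha_v|K|-2\eps'|K|$ of them, by \cref{lem:C'-minus-K} (which survives expansion, as remarked). This gives the term $p_v(\alpha_v-2\eps')|K|$.

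\textbf{Double counting.} To avoid double counting, we must make sure each pair is assigned to only one $v$. Pairs of the second type are indexed by $v\notin C'$. Pairs of the third type are either $K$-$v$ pairs or pairs from $v$ to $V\setminus C'$. A pair $(v,w)$ with $v\in C'\setminus K$ and $w\notin C'$ could in principle also be counted when we sum over $w\notin C'$, but there we only counted pairs between $w$ and $K$, so no pair is counted twice. Summing over all $v$ and taking expectations then yields the claimed bound.

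\textbf{Main obstacle.} The delicate step is the last one: the edges from an included $v\in C'\setminus K$ to other vertices of $C'\setminus K$ are not errors of $\mathcal{C}^*$, and they must not be credited. The bound $|C'\setminus K|\le 2\eps'|K|$ is exactly what absorbs them, and that is where the $-2\eps'$ correction comes from. Independence of the coins is not needed, since everything we count is linear in the indicators of the individual coins.
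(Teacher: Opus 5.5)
Your proposal is correct and follows essentially the same route as the paper. Like the paper, you compare $\opt$ with its restriction to $V\setminus C$ and lower-bound the vanishing errors pair type by pair type: $|K^-|$ inside $K$, the cut edges between $K$ and $V\setminus C'$, the non-edges between $K$ and $C'\setminus K$, and, with probability $p_v$, the cut edges from $v\in C'\setminus K$ to $V\setminus C'$, which you bound by $(\alpha_v-2\eps')|K|$ using $|C'\setminus K|\le 2\eps'|K|$.
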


Comparing \cref{lem:atom-our-cost,lem:atom-opt-cost}, it only remains to show that, for any vertex $v$,
\begin{equation}
\label{eqn:atom-ratio-1a}
    p_v\left(1-\beta_v+\alpha_v+\frac{\eps'}{2}\right)+(1-p_v)\beta_v \le (2+O(\eps'))\beta_v
\end{equation}
and
\begin{equation}
\label{eqn:atom-ratio-2a}
    p_v\left(1-\beta_v+\alpha_v+\frac{\eps'}{2}\right)+(1-p_v)\beta_v \le (2+O(\eps'))(1-\beta_v+p_v(\alpha_v-2\eps')).
\end{equation}
The above inequalities are satisfied with the ratio $2+\frac{7\eps'(1+2\eps')}{2(1-2\eps')^2}$. We defer the proof to \cref{app:pivot-with-good-cluster}. We also note that the time for constructing $C$ is proportional to the number of edges incident to vertices in $K$. We conclude this section with the following theorem.

\atomrounding

\section{Experiments}\label{sec:experiments}
In this section, we will experiment on graphs synthetic graphs. To generate these graphs we will start with $n=10^3$ vertices. We will then randomly construct a partition of $k$ clusters by uniformly at random assigning each vertex to a cluster, and then adding edges between pairs of vertices in the same cluster. Then for each pair of vertices, we will flip the edge with some small probability $\eps$. We will compare the performance of the traditional pivot algorithm of \cite{ACN08} (\texttt{pivot}), the traditional atom-finding procedure \cite{CLMNP21,DBLP:conf/innovations/Assadi022} (\texttt{atom}) and our newly proposed algorithm (\texttt{atom-pivot}). In addition, we will also compare these to the original clustering induced by the cliques before the added noise as a comparison (\texttt{planted-clique}). 
\begin{figure}[hbt]
    \centering
    \includegraphics[width=0.9\linewidth]{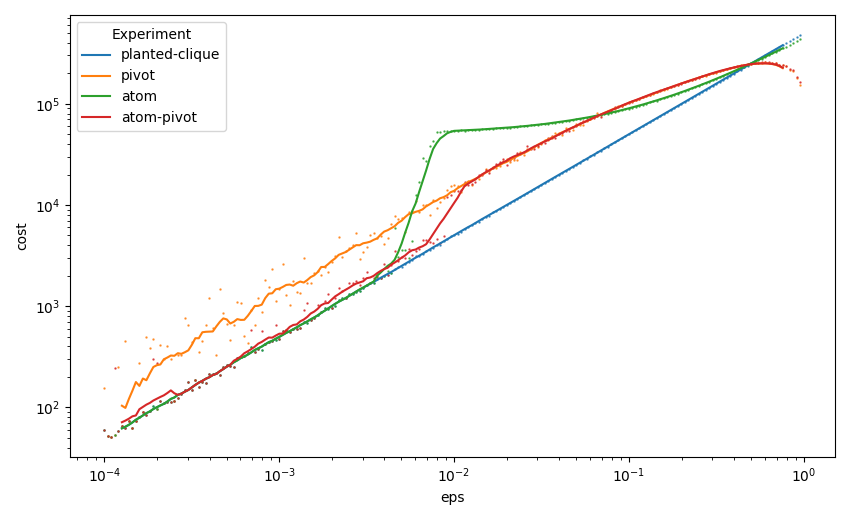}
    \caption{Performance of the algorithms as a function of the noise parameter $\eps$. We generate graphs with $n=10^3$ vertices and a planted partition into $k=10$ clusters. Edges are added between vertices in the same cluster and then independently flipped with probability $\eps$. The y-axis (cost) shows the total number of disagreements in the resulting clustering. For visualization, the plotted curves are smoothed by averaging in log-space over a sliding window of $11$ points: $l_i = \exp\!\left(\frac{1}{11}\sum_{r=i-5}^{i+5} \ln c_r\right)$, where $c_r$ is the observed cost at experiment $r$. Each point corresponds to one of $200$ different values of $\eps$.}
    \label{fig:vary_eps}
\end{figure}

One thing to note is that both the atom-finding procedure and our proposed algorithm are parameterised by the required goodness of atoms. In our experiments, these will be parameterised by the same value with respect to the located atoms.

From these experiments, we see that the proposed \texttt{atom-pivot} algorithm performs roughly equivalently to the atom-finding algorithm for small values of $\eps$, while for large values of $\eps$ it performs as good as the pivot algorithm. This behaviour is expected from the design of the algorithm. When the noise level $\eps$ is small, the planted clusters remain highly structured and large atoms are typically present in the graph. In this regime, the algorithm is able to identify such atoms and effectively behaves like the atom-finding procedure, producing clusterings with very low disagreement cost.

As $\eps$ increases, however, the structure of the planted clustering gradually deteriorates. In particular, the probability that large atoms exist decreases, and the atom-finding procedure begins to fail to locate suitable atoms. This phenomenon is visible in Figure~\ref{fig:vary_eps}, where the cost of the \texttt{atom} algorithm increases sharply around $\eps \approx 10^{-2}$. In contrast, the \texttt{atom-pivot} algorithm naturally transitions to performing pivot steps when no suitable atoms can be identified. Consequently, its performance smoothly approaches that of the classical \texttt{pivot} algorithm as the noise increases.

Another notable feature of the plot is the regime in which the \texttt{atom} algorithm exhibits a sharp degradation in performance. This behaviour can be explained by the fact that the algorithm relies on the presence of sufficiently good atoms; once the noise level surpasses a certain threshold, such atoms no longer exist with high probability, causing the algorithm to return clusterings with significantly larger cost. The \texttt{atom-pivot} algorithm mitigates this issue by falling back to pivoting, thereby avoiding the catastrophic increase in cost observed for the pure atom-based method.

Finally, we observe that the \texttt{planted-clique} baseline provides a natural lower bound on the achievable cost in this experiment. As expected, its cost grows approximately linearly with $\eps$, reflecting the number of corrupted edges introduced by the noise process. For small $\eps$, both the \texttt{atom} and \texttt{atom-pivot} algorithms closely track this lower bound, indicating that they are able to almost perfectly recover the planted clustering. As the noise increases, the gap between the planted solution and the algorithmic outputs grows, reflecting the increasing difficulty of the instance.

Overall, these experiments demonstrate that the proposed \texttt{atom-pivot} algorithm combines the advantages of atom-based methods in low-noise settings with the robustness of pivoting methods in high-noise regimes, achieving consistently good performance across the entire range of $\eps$ values tested.

In addition, we observe a transition in the behaviour of the algorithms around $\eps \approx 10^{-2}$. This transition corresponds to the point where the planted cluster structure becomes sufficiently corrupted that large atoms cease to exist with high probability. When $\eps$ is very small, most edges inside clusters remain positive while most edges between clusters remain negative, and therefore vertices within the same planted cluster share highly consistent neighbourhoods. This structure gives rise to large atoms that can be reliably identified by atom-finding procedures. As $\eps$ increases, however, the expected number of incorrectly labelled edges grows, and the neighbourhoods of vertices within the same cluster become less consistent. Beyond a certain noise level, the structural conditions required for large atoms are unlikely to hold, and atom-based methods lose their advantage. The \texttt{atom-pivot} algorithm is designed precisely for this regime: once suitable atoms are no longer present, the algorithm reverts to pivoting steps, thereby maintaining stable performance even when the planted structure becomes heavily perturbed.

During this transition, it is somewhat expected that \texttt{atom-pivot} performs slightly better than \texttt{atom}. This is likely due to the fact that around this error, it is random whether the planted clusters have degraded sufficiently to not be located by the atom procedures. However, in the case of the \texttt{atom-pivot} the fall-back of using the pivot algorithm is superior to the choice of clustering everything as singletons. This suggests that when there are a variety of both very good and not so good clusters, the \texttt{atom-pivot} performs a sensible clustering for both types of clusters.

\bibliographystyle{alpha}
\bibliography{references}
\appendix
\section{Missing Proofs in \cref{sec:locate-good-cluster}}

\subsection{Proof of \cref{lem:hit-good-cluster}}
\label{app:hit-good-cluster}

Let $K=\{v\in C \mid |N(v)\triangle C| \le \eps|C|\}$.
By \cref{def:avg-good}, $|K|\ge(1-\eps)|C|$.
For any $u,v\in K$, we have
\begin{equation}
\label{eqn:Nv-K}
    |N(v)\triangle K| \le |N(v)\triangle C|+|C\triangle K| \le 2\eps|C|,
\end{equation}
\begin{equation}
\label{eqn:Nu-Nv}
    |N(u)\triangle N(v)| \le |N(u)\triangle C|+|C\triangle N(v)| \le 2\eps|C|,
\end{equation}
\begin{equation}
\label{eqn:C-dv}
    |C| \le \frac{1}{1-\eps}d(v),
\end{equation}
\begin{equation}
\label{eqn:du-dv}
    d(u) \le (1+\eps)|C| \le \frac{1+\eps}{1-\eps}d(v).
\end{equation}

For any $u\in K$, let $t(u)$ denote the time when the degree of $u$ became less than $(1+\delta)d(u)$. (By our assumption that each vertex has lost a constant fraction of neighbors in round $0$, we can guarantee that the degree of $u$ was greater than $(1+\delta)d(u)$ in the beginning.) Let $v\in K$ be the vertex with latest $t(v)$. Recall that the $d(v)$ is the current degree of $v$. So $v$ must experience $\lceil\delta d(v)\rceil$ neighbor deletions when the degree of $u$ was less than $(1+\delta)d(u)$ for all $u\in K$.

It suffices to show that $v$ is $\left(\frac{2\eps+2\delta}{1-\eps},\frac{\frac{2\eps}{1-\eps}+\delta}{1+\delta}\right)$-good since $t(v)$. Consider any moment not earlier than $t(v)$ (and not later than now). Let $d'(u)$ and $N'(u)$ denote the degree and neighborhood of $u$ at that moment, respectively. Note that for each $u\in K$, we have
\begin{equation}
\label{eqn:N'u}
    N(u)\subseteq N'(u),
\end{equation}
\begin{equation}
\label{eqn:d'u}
    d'(u)<(1+\delta) d(u),
\end{equation}
\begin{equation}
\label{eqn:Nu-N'u}
    |N'(u)\setminus N(u)| = d(u)-d'(u) < \delta d(u).
\end{equation}
So, for each $u\in K$, we have
\begin{align*}
    |N'(u)\triangle N'(v)| & \le |N(u)\triangle N(v)| + |N'(u)\setminus N(u)| + |N'(v)\setminus N(v)| \\
    & \le 2\eps|C| + \delta d(u) + \delta d(v)  \tag{\cref{eqn:Nu-Nv,eqn:Nu-N'u}} \\
    & \le \frac{2\eps+2\delta}{1-\eps} d(v)  \tag{\cref{eqn:C-dv,eqn:du-dv}} \\
    & \le \frac{2\eps+2\delta}{1-\eps} d'(v).
\end{align*}
On the other hand,
\begin{align*}
    |N'(v) \cap K| & \ge |N(v) \cap K|  \tag{\cref{eqn:N'u}} \\
    & \ge |N(v)|-|N(v)\triangle K| \\
    & \ge d(v)-2\eps|C|  \tag{\cref{eqn:Nv-K}} \\
    & \ge \left(1-\frac{2\eps}{1-\eps}\right)d(v)  \tag{\cref{eqn:C-dv}} \\
    & \ge \frac{1-\frac{2\eps}{1-\eps}}{1+\delta}d'(v)  \tag{\cref{eqn:d'u}} \\
    & = \left(1-\frac{\frac{2\eps}{1-\eps}+\delta}{1+\delta}\right)d'(v).
\end{align*}
So $v$ is $\left(\frac{2\eps+2\delta}{1-\eps},\frac{\frac{2\eps}{1-\eps}+\delta}{1+\delta}\right)$-good.

\subsection{Proof of \cref{lem:check-false}}
\label{app:check-false}

For a sample $u_i\in N(v)$, $k'$ is the sum of $\eta'$ independent Bernoulli random variables, each with probability $\frac{|N(v)\setminus N(u_i)|}{d(v)}$ being $1$. So,
\begin{align*}
    \E[k'] & = \frac{|N(v)\setminus N(u_i)|}{d(v)}\eta' \\
    & = \frac12\frac{d(v)-d(u_i)+|N(v)\triangle N(u_i)|}{d(v)}\eta'.
\end{align*}
Assume that $|N(v)\triangle N(u_i)|=\alpha'd(v)$ for some $\alpha'\ge\alpha$, then
\begin{equation*}
    \E[k'] = \frac12\left(1-\frac{d(u_i)}{d(v)}+\alpha'\right)\eta'.
\end{equation*}
Let $$t=\frac12\left(1-\frac{d(u_i)}{d(v)}+\alpha(1-\gamma)\right)\eta' < \E[k'].$$
We are going to bound $\Pr[k'\le t]$.
Without loss of generality, assume $t\ge 0$.
Then, by the Chernoff bound,
\begin{align*}
    \Pr[k' \le t] & \le \exp\left(-\frac{(\E[k']-t)^2}{2\E[k']}\right) \\
    & = \exp\left( -\frac{(\alpha'-\alpha(1-\gamma))^2\eta'}{4\left(1-\frac{d(u_i)}{d(v)}+\alpha'\right)} \right).
\end{align*}
Note that
\begin{equation*}
    \alpha' = \frac{|N(v)\triangle N(u_i)|}{d(v)} \ge \frac{|N(v)\setminus N(u_i)|}{d(v)} \ge 1-\frac{d(u_i)}{d(v)}.
\end{equation*}
So
\begin{align*}
    \Pr[k' \le t] & \le \exp\left( -\frac{(\alpha'-\alpha(1-\gamma))^2\eta'}{8\alpha'} \right) \\
    & \le \exp\left( -\frac{(\alpha-\alpha(1-\gamma))^2\eta'}{8\alpha} \right)  \tag{$\alpha'\ge\alpha$} \\
    & = \exp\left( -\frac{\alpha\gamma^2\eta'}{8} \right) \\
    & = \frac{\beta\gamma}{2}.  \tag{$\eta'=\frac{8}{\alpha\gamma^2}\ln\frac{2}{\beta\gamma}$}
\end{align*}
In summary, when $|N(v)\triangle N(u_i)| \ge \alpha d(v)$, we have
\begin{equation}
\label{eqn:bad-ui}
    \Pr[k' \le t] \le \frac{\beta\gamma}{2}.
\end{equation}

Assume that $\clean[\alpha,\beta](N(v))=\emptyset$.
Then for each sample $u_i$, with probability at least $\beta$, we have $|N(v)\triangle N(u_i)|\ge\alpha d(v)$. Then, by \cref{eqn:bad-ui}, $k$ is the sum of $\eta$ independent Bernoulli random variables, each being $1$ with probability at least
\begin{align*}
    \beta\left(1-\frac{\beta\gamma}{2}\right) \ge \beta\left(1-\frac{\gamma}{2}\right).
\end{align*}
So, by the Chernoff bound,
\begin{align*}
    \Pr[k\le\beta(1-\gamma)\eta] & \le \exp\left( -\frac{(\beta\gamma\eta/2)^2}{2\beta(1-\gamma/2)\eta} \right) \\
    & = \exp\left( -\frac{\beta\gamma^2\eta}{8-4\gamma} \right) \\
    & \le \frac{1}{d^2(v)},  \tag{$\eta=\frac{16}{\beta\gamma^2}\ln d(v)$}
\end{align*}
which means $\check[\alpha,\beta,\gamma](v)=\text{False}$ with probability at least $1-1/d^2(v)$.

\subsection{Proof of \cref{lem:check-true}}
\label{app:check-true}

For a sample $u_i\in N(v)$, $k'$ is the sum of $\eta'$ independent Bernoulli random variables, each with probability $\frac{|N(v)\setminus N(u_i)|}{d(v)}$ being $1$. So,
\begin{align*}
    \E[k'] & = \frac{|N(v)\setminus N(u_i)|}{d(v)}\eta' \\
    & = \frac12\frac{d(v)-d(u_i)+|N(v)\triangle N(u_i)|}{d(v)}\eta'.
\end{align*}
Assume that $|N(v)\triangle N(u_i)|=\alpha'd(v)$ for some $\alpha'\le\alpha(1-2\gamma)$, then
\begin{equation*}
    \E[k'] = \frac12\left(1-\frac{d(u_i)}{d(v)}+\alpha'\right)\eta'.
\end{equation*}
Let $$t=\frac12\left(1-\frac{d(u_i)}{d(v)}+\alpha(1-\gamma)\right)\eta' > \E[k'].$$
By the Chernoff bound,
\begin{align*}
    \Pr[k' \ge t] & \le \exp\left(-\frac{(t-\E[k'])^2}{t+\E[k']}\right) \\
    & = \exp\left( -\frac{(\alpha(1-\gamma)-\alpha')^2\eta'}{2\left(2\left(1-\frac{d(u_i)}{d(v)}\right)+\alpha(1-\gamma)+\alpha'\right)} \right).
\end{align*}
Note that
\begin{equation*}
    \alpha' = \frac{|N(v)\triangle N(u_i)|}{d(v)} \ge \frac{|N(v)\setminus N(u_i)|}{d(v)} \ge 1-\frac{d(u_i)}{d(v)}.
\end{equation*}
So
\begin{align*}
    \Pr[k' \ge t] & \le \exp\left( -\frac{(\alpha(1-\gamma)-\alpha')^2\eta'}{2(\alpha(1-\gamma)+3\alpha')} \right) \\
    & \le \exp\left( -\frac{(\alpha(1-\gamma)-\alpha(1-2\gamma))^2\eta'}{2(\alpha(1-\gamma)+3\alpha(1-2\gamma))} \right)  \tag{$\alpha'\le\alpha$} \\
    & = \exp\left( -\frac{\alpha\gamma^2\eta'}{8-14\gamma} \right) \\
    & \le \frac{\beta\gamma}{2}.  \tag{$\eta'=\frac{8}{\alpha\gamma^2}\ln\frac{2}{\beta\gamma}$}
\end{align*}
In summary, when $|N(v)\triangle N(u_i)| \le \alpha(1-2\gamma) d(v)$, we have
\begin{equation}
\label{eqn:good-ui}
    \Pr[k' \ge t] \le \frac{\beta\gamma}{2}.
\end{equation}

Assume that $\clean[\alpha(1-2\gamma),\beta(1-2\gamma)](N(v))\neq\emptyset$.
Then for each sample $u_i$, with probability at least $1-\beta(1-2\gamma)$, we have $|N(v)\triangle N(u_i)|\le\alpha(1-2\gamma) d(v)$. Then, by \cref{eqn:good-ui}, $k$ is the sum of $\eta$ independent Bernoulli random variables, each being $1$ with probability at most
\begin{align*}
    (1-\beta(1-2\gamma))\frac{\beta\gamma}{2}+\beta(1-2\gamma) \le \beta\left(1-\frac{3\gamma}{2}\right).
\end{align*}
So, by the Chernoff bound,
\begin{align*}
    \Pr[k\ge\beta(1-\gamma)\eta] & \le \exp\left( -\frac{(\beta\gamma\eta/2)^2}{\beta(2-5\gamma/2)\eta} \right) \\
    & = \exp\left( -\frac{\beta\gamma^2\eta}{8-10\gamma} \right) \\
    & \le \frac{1}{d_v^2},  \tag{$\eta=\frac{16}{\beta\gamma^2}\ln d_v$}
\end{align*}
which means $\check[\alpha,\beta,\gamma](v)=\text{True}$ with probability at least $1-1/d^2(v)$.

\subsection{Proof of \cref{lem:very-good}}
\label{app:very-good}

Imagine that for each vertex $v$, we maintain a value $f(v)$ that is initially $1$.
Whenever $\check[\alpha,\beta,\gamma](v)$ returns False (Line 10) while $v$ is $(\alpha(1-2\gamma),\beta(1-2\gamma))$-good, we multiply $f(v)$ by $\frac{1}{d^2(v)}$, which is (the upper bound of) the error probability in \cref{lem:check-false}.
Note that, as long as $\check[\alpha,\beta,\gamma](v)$ returns True while $v$ is $(\alpha(1-2\gamma),\beta(1-2\gamma))$-good, $v$ will be removed before the next $\check$. So, for any $x\in(0,1)$, the probability that $f(v)$ reaches at most $x$ is at most $x$.

Consider the case that \cref{alg:atom-finding} reports that there are no good clusters (Line 15) but there is an $\eps^2$-good-on-average cluster $C$.
By \cref{lem:hit-good-cluster}, some vertex $v\in C$ has been hit $\delta d(v)$ times when its degree was less than $(1+\delta)d(v)$. These hits will add at least $\frac{2\ln n}{\ln d(v)}$ copies of $v$ to $Q$ in total. Since we reach Line 15, $Q$ is empty, which means all these copies have been removed from $Q$. By \cref{lem:hit-good-cluster} again, when each copy was removed, $v$ is $(\alpha(1-2\gamma),\beta(1-2\gamma))$-good, which means $\check[\alpha,\beta,\gamma](v)$ returned False (otherwise $v$ has been removed). So
$$f(v) \le \left(\frac{1}{d^2(v)}\right)^{\frac{2\ln n}{\ln d(v)}} = \frac{1}{n^4}.$$

In other words, if no $f(v)$ reaches at most $\frac{1}{n^4}$, then \cref{alg:atom-finding} never misses $\eps^2$-good-on-average clusters. By a union bound over all vertices, this happens with probability at least $1-\frac{1}{n^3}$.

\subsection{Proof of \cref{lem:time-atom-finding}}
\label{app:time-atom-finding}

The total running time is dominated by the calls of $\check$ and $\clean$.

\paragraph{Running time of $\check$.} For each vertex $v$, let $d_0(v)$ denote its initial degree. During the whole process, for each $i\in[1,(1+\delta)d_0(v)]$, we add at most $$\left\lceil\frac{2(1+\delta)\ln n}{\delta i \ln\frac{i}{1+\delta}}\right\rceil = O\left(\frac{\log n}{i\log i}+1\right)$$ copies of $v$ to $Q$ when $d(v)=i$.\footnote{Strictly speaking, initial copies are all added when $d(v)=d_0(v)$, but we can imagine that they are added when $d(v)$ decreases from $(1+\delta)d_0(v)$ to $d_0(v)$.}
Since $d(v)$ cannot increase, these copies can only be removed from $Q$ when $d(v) \le i$. So, calling $\check$ for each of them takes $O(\log i)$ time. Thus, the total running time for checking $v$ is
\begin{equation*}
    \sum_{i=1}^{(1+\delta)d_0(v)} O\left(\frac{\log n}{i}+\log i\right) = O(\log n\log d_0(v) + d_0(v)\log d_0(v)) = O(d_0(v)\log n).
\end{equation*}
Therefore, the total running time for all $\check$ is $O(m\log n)$.

\paragraph{Running time of $\clean$.} Each $\clean$ takes $O(d^2(v))$ time. If it returns a non-empty cluster, then this cluster contains $\Omega(d(v))$ vertices, each of degree $\Omega(d(v))$. These vertices and their incident edges will then be deleted, so we can charge the running time to these edges. Thus, all non-empty $\clean$ take $O(m)$ time in total.

If $\clean$ returns $\emptyset$, then the prior probability that $v$ passes $\check$ is at most $1/d^2(v)$ by \cref{lem:check-false}. So, the expected running time for empty $\clean$ is $O(1)$ per copy in $Q$, which is then dominated by the running time for $\check$.

\section{Missing Proofs in \cref{sec:pivot-with-good-cluster}}

\subsection{Proof of \cref{lem:atom-expansion}}
\label{app:atom-expansion}

In the beginning, the number of edges between $K$ and $V\setminus K$ is at most $\eps'|K|^2$ since $K$ is $\eps'$-good.
As long as $\beta_v > \alpha_v$, inserting $v$ into $K$ can only reduce this number and increase the size of $K$. So it remains true after expansion.

Now we prove that every vertex inserted into $K$ is in $C'$. Suppose for contradiction that this is not true. Let $v$ be the first vertex inserted into $K$ that is not in $C'$. Let $K_v$ denote the set $K$ at the moment just before inserting $v$. Then $K_v \subseteq C'$ and $|C'\setminus K_v| \le 2\eps'|K_v|$ by \cref{lem:C'-minus-K} (and the fact that we only expand $K$). If we move $v$ into $C'$ in the optimal clustering, there are
$$|N(v)\cap C'| \le |N(v)\cap K_v| = \beta_v|K_v|$$
violated pairs becoming non-violated. Meanwhile, the number of new violated pairs is at most
\begin{align*}
    |N(v) \triangle C'| & \le |N(v)\triangle K_v| + |K_v\triangle C'| \\
    & \le |K_v\setminus N(v)| + |N(v)\setminus K_v| + 2\eps'|K_v| \\
    & = (1-\beta_v+\alpha_v+2\eps')|K_v|.
\end{align*}
Since $\beta_v>1-\beta_v+\alpha_v+2\eps'$, we will improve the optimal clustering, which is a contradiction.

\subsection{Proof of \cref{lem:atom-our-cost}}
\label{app:atom-our-cost}

For a set of pairs of vertices $S$, let $\cost(S)$ denote the cost we pay on these pairs in this step. For a set of vertices $U$, let $\cost(U)$ denote $\cost\left(\binom{U}{2}\right)$.
Our total cost in this step can be written as:
$$\cost(V) = \cost(K) + \cost(C\setminus K) + \cost(V\setminus C) + \cost(K\times(C\setminus K)) + \cost(K\times(V\setminus C)) + \cost((C\setminus K)\times(V\setminus C)),$$
where
$$\cost(K) = |K^-|,$$
$$\cost(V\setminus C)=0,$$
\begin{align*}
    \E[\cost(K\times(V\setminus C))] & = \sum_{v\in(V\setminus K)} \Pr[v\in (V\setminus C)] \cdot \E[\cost(K\times\{v\}) \mid v\in(V\setminus C)] \\
    & = \sum_{v\in(V\setminus K)} (1-p_v)|N(v)\cap K| \\
    & = \sum_{v\in(V\setminus K)} (1-p_v)\beta_v|K|,
\end{align*}
\begin{align*}
    \E[\cost(K\times(C\setminus K))] & = \sum_{v\in(V\setminus K)} \Pr[v\in (C\setminus K)] \cdot \E[\cost(K\times\{v\}) \mid v\in(C\setminus K)] \\
    & = \sum_{v\in(V\setminus K)} p_v|K\setminus N(v)| \\
    & = \sum_{v\in(V\setminus K)} p_v(1-\beta_v)|K|,
\end{align*}
\begin{align*}
    \E[\cost((C\setminus K)\times (V\setminus C))] & = \sum_{v\in(V\setminus K)} \Pr[v\in (C\setminus K)] \cdot \E[\cost(\{v\}\times (V\setminus C)) \mid v\in(C\setminus K)] \\
    & = \sum_{v\in(V\setminus K)} p_v \cdot \E[|N(v)\setminus C| \mid v\in(C\setminus K)] \\
    & \le \sum_{v\in(V\setminus K)} p_v|N(v)\setminus K|  \tag{$K\subseteq C$} \\
    & = \sum_{v\in(V\setminus K)} p_v\alpha_v|K|,
\end{align*}
\begin{align*}
    \E[\cost(C\setminus K)] & \le \E\left[\binom{|C\setminus K|}{2}\right] \\
    & = \frac12\sum_{v\in(V\setminus K)}\Pr[v\in(C\setminus K)] \cdot \E[|C\setminus(K\cup\{v\})|] \\
    & \le \frac12\sum_{v\in(V\setminus K)}p_v \cdot \E[|C\setminus K|] \\
    & \le \frac12\sum_{v\in(V\setminus K)} p_v\eps'|K|.  \tag{\cref{lem:C-minus-K}}
\end{align*}
So
$$\E[\cost(V)] \le |K^-| + \sum_{v\in(V\setminus K)} \left( p_v\left(1-\beta_v+\alpha_v+\frac{\eps'}{2}\right)+(1-p_v)\beta_v \right)|K|.$$

\subsection{Proof of \cref{lem:atom-opt-cost}}
\label{app:atom-opt-cost}

Let $\opt$ be the optimal clustering before we take $C$. Let $\opt-C$ denote the clustering obtained by removing the vertices in $C$ from the corresponding clusters in $\opt$.
After taking $C$, $\opt-C$ is a valid clustering in the remaining graph, whose cost is at least the cost of the new optimal clustering. So, it suffices to compare our cost with the difference between the costs of $\opt$ and $\opt-C$.

For a set of pairs of vertices $S$, let $\loss(S)$ denote the cost of $\opt$ on the pairs in $S$ that vanish in this step. For a set of vertices $U$, let $\loss(U)$ denote $\loss\left(\binom{U}{2}\right)$.
Recall that $C'$ is the cluster in $\opt$ that contains $K$.
The decrease of the cost of the optimal clustering is at least:
$$\loss(V) = \loss(K) +\loss(C'\setminus K) + \loss(V\setminus C') + \loss(K\times(C'\setminus K)) + \loss(K\times(V\setminus C')) + \loss((C'\setminus K)\times(V\setminus C')),$$
where
$$\loss(K) = |K^-|,$$
$$\loss(C'\setminus K) \ge 0,$$
$$\loss(V\setminus C') \ge 0,$$
\begin{align*}
    \loss(K\times(V\setminus C')) & = \sum_{v\in(V\setminus C')} \loss(K\times\{v\}) \\
    & = \sum_{v\in(V\setminus C')} |N(v)\cap K| \\
    & = \sum_{v\in(V\setminus C')} \beta_v|K|,
\end{align*}
\begin{align*}
    \loss(K\times(C'\setminus K)) & = \sum_{v\in(C'\setminus K)} \loss(K\times\{v\}) \\
    & = \sum_{v\in(C'\setminus K)} |K\setminus N(v)| \\
    & = \sum_{v\in(C'\setminus K)} (1-\beta_v)|K|,
\end{align*}
\begin{align*}
    \E[\loss((C'\setminus K)\times (V\setminus C'))] & = \sum_{v\in(C'\setminus K)} \Pr[v\in C]\cdot\E[\loss(\{v\}\times(V\setminus C')) \mid v\in C] \\
    & = \sum_{v\in(C'\setminus K)} p_v |N(v)\setminus C'| \\
    & \ge \sum_{v\in(C'\setminus K)} p_v (|N(v)\setminus K|-|C'\setminus K|) \\
    & \ge \sum_{v\in(C'\setminus K)} p_v (\alpha_v-2\eps')|K|.  \tag{\cref{lem:C'-minus-K}}
\end{align*}
So
$$\E[\loss(V)] \ge |K^-| + \sum_{v\in(V\setminus C')}\beta_v|K| + \sum_{v\in(C'\setminus K)}(1-\beta_v+p_v(\alpha_v-2\eps'))|K|.$$

\subsection{Proof of \cref{eqn:atom-ratio-1a,eqn:atom-ratio-2a}}
\label{app:pivot-with-good-cluster}

Recall that we want to show
\begin{equation}
\label{eqn:atom-ratio-1}
    p_v\left(1-\beta_v+\alpha_v+\frac{\eps'}{2}\right)+(1-p_v)\beta_v \le (2+O(\eps'))\beta_v
\end{equation}
and
\begin{equation}
\label{eqn:atom-ratio-2}
    p_v\left(1-\beta_v+\alpha_v+\frac{\eps'}{2}\right)+(1-p_v)\beta_v \le (2+O(\eps'))(1-\beta_v+p_v(\alpha_v-2\eps'))
\end{equation}
for
\begin{equation}
\label{eqn:pv}
    p_v=\frac{\beta_v}{1+\alpha_v}.
\end{equation}
Also, recall that $2\beta_v \le 1+\alpha_v+2\eps'$, which means
\begin{equation}
\label{eqn:pv-1}
    p_v\le\frac12+\eps'.
\end{equation}

By \cref{eqn:pv}, we first have
\begin{align*}
    p_v\left(1-\beta_v+\alpha_v+\frac{\eps'}{2}\right)+(1-p_v)\beta_v = 2(1-p_v)\beta_v+\frac12p_v\eps'.
\end{align*}
Then, to prove \cref{eqn:atom-ratio-1}, we have:
\begin{align*}
    \left(2+\frac{\eps'}{2}\right)\beta_v - 2(1-p_v)\beta_v - \frac12p_v\eps'
    & = 2p_v\beta_v + \frac12\eps'(\beta_v-p_v) \\
    & \ge 0.  \tag{\cref{eqn:pv}}
\end{align*}
To prove \cref{eqn:atom-ratio-2}, we have:
\begin{align*}
    & \left(2+\frac{7\eps'(1+2\eps')}{2(1-2\eps')^2}\right)(1-\beta_v+p_v(\alpha_v-2\eps')) - 2(1-p_v)\beta_v - \frac12p_v\eps' \\
    = \ & p_v\left(2\alpha_v+2\beta_v-\frac72\eps'\right) + (2-4\beta_v) + \frac{7\eps'(1+2\eps')}{2(1-2\eps')^2}(1-\beta_v+p_v(\alpha_v-2\eps')) \\
    = \ & p_v\left(-2+2\beta_v-\frac72\eps'\right) + (2-2\beta_v) + \frac{7\eps'(1+2\eps')}{2(1-2\eps')^2}(1-p_v(1+2\eps'))  \tag{\cref{eqn:pv}} \\
    = \ & \frac72\eps'\left(\frac{1+2\eps'}{(1-2\eps')^2}(1-p_v(1+2\eps'))-p_v\right) + (2-2\beta_v)(1-p_v) \\
    \ge \ & \frac72\eps'\left(\frac{1+2\eps'}{(1-2\eps')^2}(1-p_v(1+2\eps'))-p_v\right)  \tag{$\beta_v,p_v \le 1$} \\
    \ge \ & \frac72\eps'\left(\frac{1+2\eps'}{(1-2\eps')^2}\left(1-\frac12(1+2\eps')^2\right)-\frac12(1+2\eps')\right)  \tag{\cref{eqn:pv-1}} \\
    = \ & 0.
\end{align*}

\end{document}